\def\dOi{9(3:12)2013}
\newcommand{\mylambda}{\boldsymbol{\lambda}}
\newcommand{\pref}{\mathpunct{\downharpoonleft}}
\newcommand{\fdiff}{\mathop{{\backslash}\!\!{\backslash}}}
\newcommand{\id}{{\sf id}}
\newcommand{\Obj}{{\sf Obj}}
\newcommand{\thread}{{\sf tid}}
\newcommand{\ThreadID}{{\sf ThreadID}}
\newcommand{\mypar}[1]{\vspace{4pt}\noindent\textbf{#1}}
\newcommand{\fillin}[2]{#2({#1})}
\newcommand{\baselstset}{\lstset{
    columns=fullflexible,
    basicstyle = \ttfamily,
    mathescape=true,
    escapechar=\#,
    literate={@@}{\#\color{black}\#}0
             {@}{\#\color{green}\#}0
             {\\<nil>}{nil}3
             {/\\}{$\land$}1
             {\\/}{$\lor$}1
             {<=}{$\leq\ $}1
             {<<}{$\ll$}1
             {>>}{$\gg$}1
             {=>}{$\Rightarrow$}1
             {>=}{$\geq\ $}1
             {->^z}{$\stackrel{z}{\to}$}1
             {|->}{$\mapsto$}1
             {_R}{$_R$}1
             {_2}{$_2$}1
             {_1}{$_1$}1
             {\\iter\{0\}\{i\}\{n-1\}}{$\iter{0}{i}{n-1}$}2
             {\\iter\{1\}\{i\}\{n\}}{$\iter{1}{i}{n}$}2
             {\\iter\{0\}\{i\}\{n\}}{$\iter{0}{i}{n}$}2
             {\\iter\{0\}\{i\}\{29\}}{$\iter{0}{i}{29}$}2,
    commentstyle=\itshape,
}%
}
\newcommand{\AuxStyle}{\color{blue}}
\newcommand{\mylstset}{%
\baselstset%
\lstset{
    language=C,
    morekeywords={cons,null,allocate,dispose,then,do,else,await,atomic},
    moredelim=[is][\AuxStyle]{/a }{ a/}
}}
\newcommand{\cover}{{\sf cover}}
\newcommand{\fdef}{\mathpunct{\downarrow}}
\newcommand{\fundef}{\mathpunct{\uparrow}}
\newcommand{\interf}{{\sf interf}}
\newcommand{\dom}{{\sf dom}}
\newcommand{\emp}{{\sf emp}}
\newcommand{\history}{{\sf history}}
\newcommand{\erase}{{\sf ground}}
\newcommand{\cH}{\mathcal{H}}
\newcommand{\power}[1]{2^{#1}}
\newcommand{\client}{{\sf client}}
\newcommand{\lib}{{\sf lib}}
\newcommand{\db}[1]{\llbracket #1 \rrbracket}
\newcommand{\dba}[1]{\langle #1 \rangle}
\newcommand{\dbf}[1]{\llbracket #1 \rrbracket^\sharp}
\newcommand{\dbp}[1]{\llparenthesis\hspace{1pt} #1 \hspace{1pt}\rrparenthesis}
\newcommand{\Trace}{{\sf Trace}}
\newcommand{\WTrace}{{\sf Trace}}
\newcommand{\BHistory}{{\sf BHistory}}
\newcommand{\Cc}{\mathcal{C}}
\newcommand{\cL}{\mathcal{L}}
\newcommand{\cI}{I}
\newcommand{\cO}{\mathcal{P}}
\newcommand{\cS}{\mathcal{S}}
\newcommand{\diff}{\mathop{\backslash}}
\newcommand{\RetAct}{{\sf RetAct}}
\newcommand{\CallAct}{{\sf CallAct}}
\newcommand{\CallRetAct}{{\sf CallRetAct}}
\newcommand{\ECallRetAct}{{\sf CallRetAct}}
\newcommand{\ERetAct}{{\sf RetAct}}
\newcommand{\ECallAct}{{\sf CallAct}}
\newcommand{\Act}{{\sf Act}}
\newcommand{\State}{\Sigma}
\newcommand{\Perm}{{\sf Perm}}
\newcommand{\Foot}{\mathcal{F}}
\newcommand{\RAM}{{\sf RAM}}
\newcommand{\PComm}{{\sf PComm}}
\newcommand{\prefix}{{\sf prefix}}
\newcommand{\cmgc}{{\sf mgc}}
\newcommand{\assume}{{\sf assume}}
\newcommand{\call}{{\sf call}}
\newcommand{\ret}{{\sf ret}}
\newcommand{\myarg}{{\sf arg}}
\newcommand{\MethodName}{{\sf Method}}
\newcommand{\Val}{{\sf Val}}
\newcommand{\Loc}{{\sf Loc}}
\def\be{\begin{equation}}
\def\ee{\end{equation}}
\def\ba{\begin{equation*}\begin{array}{@{}c@{}}}
\def\ea{\end{array}\end{equation*}}
\newcommand{\commentout}[1]{}
\title[Linearizability with Ownership Transfer]{Linearizability with Ownership Transfer}
\author[A.~Gotsman]{Alexey Gotsman\rsuper a}
\address{{\lsuper a}IMDEA Software Institute, Madrid, Spain}
\email{Alexey.Gotsman@imdea.org}
\author[H.~Yang]{Hongseok Yang\rsuper b}
\address{{\lsuper b}University of Oxford, Oxford, UK}
\email{Hongseok.Yang@cs.ox.ac.uk}
\keywords{Concurrency, Linearizability, Ownership Transfer, Semantics, Data Abstraction}
\begin{document}

\begin{abstract}
  Linearizability is a commonly accepted notion of correctness for libraries of
  concurrent algorithms. Unfortunately, it assumes a complete isolation between
  a library and its client, with interactions limited to passing values of a
  given data type. This is inappropriate for common programming languages, where
  libraries and their clients can communicate via the heap, transferring the
  ownership of data structures, and can even run in a shared address space
  without any memory protection.

  In this paper, we present the first definition of linearizability that lifts
  this limitation and establish an Abstraction Theorem: while proving a property
  of a client of a concurrent library, we can soundly replace the library by its
  abstract implementation related to the original one by our generalisation of
  linearizability. This allows abstracting from the details of the library
  implementation while reasoning about the client. We also prove that
  linearizability with ownership transfer can be derived from the classical one
  if the library does not access some of data structures transferred to it by
  the client.
\end{abstract}

\maketitle

\section{Introduction\label{sec:intro}}

The architecture of concurrent software usually exhibits some forms of
modularity.  For example, concurrent algorithms are encapsulated in libraries
and complex algorithms are often constructed using libraries of simpler
ones. This lets developers benefit from ready-made libraries of concurrency
patterns and high-performance concurrent data structures, such as
$\mathsf{java.util.concurrent}$ for Java and Threading Building Blocks for C++.
To simplify reasoning about concurrent software, we need to exploit the
available modularity. In particular, in reasoning about a client of a concurrent
library, we would like to abstract from the details of a particular library
implementation. This requires an appropriate notion of library correctness.

Correctness of concurrent libraries is commonly formalised by {\em
  linearizability}~\cite{linearizability}, which fixes a certain correspondence
between the library and its specification. The latter is usually just another
library, but implemented atomically using an abstract data type; the two
libraries are called {\em concrete} and {\em abstract}, respectively. A good
notion of linearizability should validate an {\em Abstraction
  Theorem}~\cite{tcs10,icalp}: the behaviours of any client using the concrete
library are contained in the behaviours of the client using the abstract
one. This makes it sound to replace a library by its specification in
reasoning about its client.

Classical linearizability assumes a complete isolation between a library and
its client, with interactions limited to passing values of a given data type as
parameters or return values of library methods. This notion is not appropriate
for low-level heap-manipulating languages, such as C/C++. There the library and
the client run in a shared address space; thus, to prove the whole program
correct, we need to verify that one of them does not corrupt the data structures
used by the other. Type systems~\cite{clarke01} and
program logics~\cite{seplogic-concurrent} usually establish this using the
concept of {\em ownership} of data structures by a program component:
the right to access a data structure is given only to 
a particular component or a set of them. 
When verifying realistic programs, this ownership of data structures cannot be assigned
statically; rather, it should be {\em transferred}
between the client and the library at calls to and returns from the latter.
The times when ownership is transferred are not determined
operationally, but set by the proof method: as O'Hearn famously put it,
``ownership is in the eye of the asserter''~\cite{seplogic-concurrent}. However,
ownership transfer reflects actual interactions between program components via
the heap, e.g., alternating accesses to a shared area of memory.  Such
interactions also exist in high-level languages providing basic memory
protection, such as Java. 
In this case, we need to ensure that a client does not
subvert a library by accessing a memory object after its ownership was
transferred to the latter.

For an example of ownership transfer between concurrent libraries and their
clients consider a memory allocator accessible concurrently to multiple
threads. We can think of the allocator as owning the blocks of memory on its
free-list; in particular, it can store free-list pointers in them. Having
allocated a block, a thread gets its exclusive ownership, which allows accessing
it without interference from the other threads.  When the thread frees the
block, its ownership is returned to the allocator.  Trying to write to a memory
cell after it was freed has dire consequences.

As another example, consider any container with concurrent access, such as a
concurrent set from $\mathsf{java.util.concurrent}$ or Threading Building
Blocks.  A typical use of such a container is to store pointers to a certain
type of data structures. However, when verifying a client of the container, we
usually think of the latter as holding the ownership of the data structures
whose addresses it stores~\cite{seplogic-concurrent}. Thus, when a thread
inserts a pointer to a data structure into a container, its ownership is
transferred from the thread to the container. When another thread removes a
pointer from the container, it acquires the ownership of the data structure the
pointer identifies. If the first thread tries to access a data structure after a
pointer to it has been inserted into the container, this may result in a race
condition. Unlike a memory allocator, the container code usually does not access
the contents of the data structures its elements identify, but merely ferries
their ownership between different threads. For this reason, correctness proofs
for such
containers~\cite{daphna-linearizability,turkish10,viktor-linearizability} have
so far established their classical linearizability, without taking ownership
transfer into account.

We would like to use the notion of linearizability and, in particular, an
Abstraction Theorem to reason about the above libraries and their clients in
isolation, taking into account only the memory that they own. To this end, we
would like the correctness of a library to constrain not only pointers that are
passed between it and the client, but also the contents of the data structures
whose ownership is transferred. So far, there has been no notion of
linearizability that would allow this. In the case of concurrent containers, we
have no way of using classical linearizability established for them to validate
an Abstraction Theorem that would be applicable to clients performing ownership
transfer. This paper fills in these gaps.

\mypar{Contributions.} In this paper, we generalise linearizability to a setting
where a library and its client execute in a shared address space, and boundaries
between their data structures can change via ownership
transfers. Linearizability is usually defined in terms of {\em histories}, which
are sequences of calls to and returns from a library in a given program
execution, recording parameters and return values passed. To handle ownership
transfer, histories also have to include descriptions of memory areas
transferred. However, in this case, some histories cannot be generated by any
pair of a client and a library: while generating histories of a library we
should only consider its executions in an environment that respects ownership.
For example, a client that transfers an area of memory upon a call to a library
not communicating with anyone else cannot then transfer the same area again
before getting it back from the library upon a method return. We propose a
notion of {\em balancedness} that characterises those histories that treat
ownership transfer correctly. We then define a {\em linearizability relation}
between balanced histories, matching histories of a concrete and an abstract
library (Section~\ref{sec:observ}).

This definition does not rely on a particular model of program states for
describing memory areas transferred in histories, but assumes an arbitrary model
defined by a {\em separation algebra}~\cite{asl}.  By picking a model with
so-called {\em permissions}~\cite{permissions,dg}, we can allow clients and
libraries to transfer non-exclusive rights to access certain memory areas in
particular ways, instead of transferring their full ownership. This makes our
results applicable even when libraries and their clients share access to some
areas of memory. Our definition of balancedness for arbitrary separation
algebras relies on a new formalisation of the notion of a {\em footprint} of a
state, describing the amount of permissions the state includes
(Section~\ref{sec:prelim}).

The rest of our technical development relies on a novel compositional semantics
for a language with libraries that defines the denotation of a library or a
client considered separately in an environment that communicates with the
component correctly via ownership transfers (Sections~\ref{sec:prog}
and~\ref{sec:semantics}). In particular, the semantics allows us to generate the
set of all histories that can be produced by a library solely from its code,
without considering all its possible clients. This, in its turn, allows us to
lift the linearizability on histories to libraries and establish the Abstraction
Theorem (Section~\ref{sec:refinement}). On the way, we also obtain an insight
into the original definition of linearizability without ownership transfer,
showing a surprising relationship between one of the ways of its formulation and
the plain subset inclusion on the sets of histories produced by concrete and
abstract libraries.

We note that the need to consider ownership transfer makes the proof of the
Abstraction Theorem highly non-trivial. This is because proving the theorem
requires us to convert a computation with a history produced by the concrete
library into a computation with a history produced by the abstract one, which
requires moving calls and returns to different points in the computation. In the
setting without ownership transfer, these actions are thread-local and can be
moved easily; however, once they involve ownership transfer, they become global
and the justification of their moves becomes subtle, in particular, relying on
the fact that the histories involved are balanced
(Section~\ref{sec:rearr-proof}).

To avoid having to prove the new notion of linearizability from scratch for
libraries that do not access some of the data structures transferred to them,
such as concurrent containers, we propose a {\em frame rule for linearizability}
(Section~\ref{sec:frame}). It ensures the linearizability of such libraries with
respect to a specification with ownership transfer given their linearizability
with respect to a specification without one.

%
%

We provide a glossary of notation at the end of the paper.

\section{Footprints of States\label{sec:prelim}}

\subsection{Separation Algebras\label{sec:sa}}

Our results hold for a class of models of program states called {\em separation
  algebras}~\cite{asl}, which allow expressing the dynamic memory partitioning
between libraries and clients.
\begin{defi}
  A \textbf{\em separation algebra} is a set $\Sigma$, together with a partial
  commutative, associative and cancellative operation $*$ on $\Sigma$
  and a unit element $e \in \Sigma$.
  Here commutativity and associativity hold for the equality that means both
  sides are defined and equal, or both are undefined.  The property of
  cancellativity says that for each $\sigma \in \Sigma$, the function $\sigma *
  \cdot : \Sigma \rightharpoonup \Sigma$ is injective.
\end{defi}
We think of elements of a separation algebra $\Sigma$ as {\em portions} of
program states and the $*$ operation as combining such portions. The partial
states allow us to describe parts of the program state belonging to a library or
the client. When the $*$-combination of two states is defined, we call them {\bf
  \em compatible}.  Incompatible states usually make contradictory claims about
the ownership of memory.  We sometimes use a pointwise lifting $*:
\power{\Sigma} \times \power{\Sigma} \rightarrow \power{\Sigma}$ of $*$ to sets
of states: for $p, q \in \power{\Sigma}$ we let
$
p * q = \{\sigma_1 * \sigma_2 \mid \sigma_1 \in p \wedge \sigma_2 \in q\}
$.

Elements of separation algebras are often defined using partial
functions. We use the following notation:
$g(x)\fdef$ means that the function $g$ is defined on $x$,
$g(x)\fundef$ means that it is undefined on $x$,
$\dom(g)$ denotes the set of arguments on which $g$ is defined,
$[\,]$ denotes a nowhere-defined function,
and $g[x:y]$ denotes the function that has the same value as $g$ everywhere,
except for $x$, where it has the value $y$.  We also write $\_$
for an expression whose value is irrelevant and implicitly existentially
quantified.

Below is an example separation algebra $\RAM$:
$$
\Loc  = \{1, 2, \ldots\};
\qquad \qquad
\Val = \mathbb{Z};
\qquad \qquad
\RAM 
= \Loc \rightharpoonup_{\it fin} \Val.
$$
A (partial) state in this model consists of a finite partial function from
allocated memory locations to the values they store. The $*$ operation on $\RAM$
is defined as the disjoint function union $\uplus$, with the
nowhere-defined function $[\,]$ as its unit. Thus, $*$ combines disjoint
pieces of memory.

More complicated separation algebras do not split memory completely, instead
allowing heap parts combined by $*$ to overlap. This is done by associating
so-called {\em permissions}~\cite{permissions} with memory cells in the model,
which do not give their exclusive ownership, but allow accessing them in a
certain way. Types of permissions range from read sharing~\cite{permissions} to
accessing memory in an arbitrary way consistent with a given
specification~\cite{dg}.  We now give an example of a separation algebra with
permissions of the former kind. 

We define the algebra $\RAM_\pi$ as follows:
$$
\Loc = \{1, 2, \ldots\};
\qquad
\Val = \mathbb{Z};
\qquad
\Perm = (0,1];
\qquad
\RAM_\pi
= \Loc \rightharpoonup_{\it fin} (\Val \times \Perm).
$$
A state in this model consists of a finite partial function from allocated
memory locations to values they store and so-called {\em permissions}---numbers
from $(0,1]$ that show ``how much'' of the memory cell belongs to the partial
state~\cite{permissions}. The latter allow a library and its client to share
access to some of memory cells.  Permissions in $\RAM_\pi$ allow only read
sharing: when defining the semantics of commands over states in $\RAM_\pi$, the
permissions strictly less than $1$ are interpreted as permissions to read; the
full permission $1$ additionally allows writing.
The $*$ operation on $\RAM_\pi$ adds up permissions for memory cells. Formally,
for $\sigma_1, \sigma_2 \in \RAM_\pi$, we write
$\sigma_1 \mathop{\sharp} \sigma_2$ if:
$$
 \forall
x\in\Loc.\, 
{\sigma_1(x)\fdef} \wedge {\sigma_2(x)\fdef} \implies 
(\exists u, \pi_1, \pi_2.\,\sigma_1(x) = (u,\pi_1) \wedge \sigma_2(x) =
(u,\pi_2) \wedge \pi_1 + \pi_2 \leq 1). 
$$
If $\sigma_1 \mathop{\sharp} \sigma_2$, then we define
\begin{multline*}
\sigma_1 * \sigma_2 = \{(x, (u, \pi)) \mid 
(\sigma_1(x) = (u, \pi) \wedge {\sigma_2(x)\fundef})
\vee {}
(\sigma_2(x) = (u, \pi) \wedge {\sigma_1(x)\fundef})
\vee {} \\
(\sigma_1(x) = (u, \pi_1) \wedge \sigma_2(x) = (u, \pi_2) \wedge \pi = \pi_1+\pi_2)\};
\end{multline*}
otherwise, $\sigma_1 * \sigma_2$ is undefined. 
The unit for $*$ is  the empty heap~$[\,]$. 
This definition of $*$ allows us, e.g., to split a memory area into two
disjoint parts. It also allows splitting a cell with a full permission $1$ into
two parts, carrying read-only permissions $1/2$ and agreeing on the value stored
in the cell. These permissions can later be recombined to obtain the full
permission, which allows both reading from and writing to the cell.

Since we develop all our results for an arbitrary separation algebra, by
instantiating it with algebras similar to $\RAM_\pi$, we can handle cases when a
library and its client share access to some memory areas.

Consider an arbitrary separation algebra $\Sigma$ with an operation $*$.
We define a partial operation $\diff : \Sigma \times \Sigma \rightharpoonup
\Sigma$, called {\bf \em state subtraction}, as follows: $\sigma_2 \diff
\sigma_1$ is a state in $\Sigma$ such that $\sigma_2 = (\sigma_2 \diff \sigma_1)
* \sigma_1$; if such a state does not exist, $\sigma_2 \diff \sigma_1$ is
undefined. The cancellativity of $*$ implies that $\sigma_2 \diff \sigma_1$ is
determined uniquely, and hence, the $\diff$ operation is well-defined.  When
reasoning about ownership transfer between a library and a client, we use the
$*$ operation to express a state change for the component that is receiving the
ownership of memory, and the $\diff$ operation for the one that is giving it
up.



\begin{prop}\label{prop-diff}
  For all $\sigma_1,\sigma_2,\sigma_3\in\Sigma$, if $\sigma_1 * \sigma_2$ and
  $\sigma_1\diff \sigma_3$ are defined, then
$$
(\sigma_1 * \sigma_2)\diff \sigma_3 = (\sigma_1 \diff \sigma_3) * \sigma_2.
$$
\end{prop}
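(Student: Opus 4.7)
The plan is to unfold the definition of $\diff$ and then rearrange the $*$-product using commutativity and associativity, using cancellativity only at the end to conclude that the resulting expression is indeed the value of $(\sigma_1 * \sigma_2) \diff \sigma_3$.

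Concretely, I would begin by letting $\tau = \sigma_1 \diff \sigma_3$, which by the definition of $\diff$ means $\tau$ is the unique element of $\Sigma$ satisfying $\sigma_1 = \tau * \sigma_3$ (unique by cancellativity of $*$). My goal then reduces to showing two things: (i) $\tau * \sigma_2$ is defined, and (ii) $(\tau * \sigma_2) * \sigma_3 = \sigma_1 * \sigma_2$. Once these are established, $\tau * \sigma_2$ witnesses the existence of $(\sigma_1 * \sigma_2) \diff \sigma_3$, and since such a witness is uniquely determined, the desired equation follows.

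For (ii), substituting $\sigma_1 = \tau * \sigma_3$ and then applying commutativity and associativity of $*$ gives
$$
\sigma_1 * \sigma_2 \;=\; (\tau * \sigma_3) * \sigma_2 \;=\; (\tau * \sigma_2) * \sigma_3,
$$
provided all intermediate expressions are defined. This is where (i) and the handling of partiality become the main subtlety of the argument: the equalities in the separation algebra are equalities of partial elements, so I must verify that each rearrangement preserves definedness. The hypothesis that $\sigma_1 * \sigma_2$ is defined tells me $(\tau * \sigma_3) * \sigma_2$ is defined, and from the partial-equality convention for associativity/commutativity I can deduce that every sub-product here, in particular $\tau * \sigma_2$, must also be defined; this is precisely claim (i).

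The only real obstacle is this bookkeeping of definedness under partial operations, and it is mild: one uses that if $(a*b)*c$ is defined then so are $a*b$, $b*c$ and $a*c$, which follows from the partial-equality reading of the associativity and commutativity axioms. With definedness in hand, the rearrangement in (ii) is routine and the appeal to cancellativity to identify $\tau * \sigma_2$ with $(\sigma_1 * \sigma_2) \diff \sigma_3$ completes the proof.
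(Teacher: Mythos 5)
Your argument is correct: unfolding $\diff$, rearranging with the partial-equality reading of associativity and commutativity (which indeed yields definedness of $\tau*\sigma_2$ from definedness of $(\tau*\sigma_3)*\sigma_2$), and invoking cancellativity for uniqueness is exactly the routine verification this statement calls for. The paper states Proposition~\ref{prop-diff} without proof, evidently regarding it as immediate, and your write-up supplies precisely the intended argument, including the one genuinely delicate point --- the definedness bookkeeping.
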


\subsection{Footprints\label{sec:foot}}

Our definition of linearizability uses a novel formalisation of a
{\em footprint} of a state, which, informally, describes the amount of memory or
permissions the state includes.
\begin{defi}\label{defn-foot}
A {\bf \em footprint} of a state $\sigma$ in a separation algebra $\Sigma$ is
the set of states
$$\delta(\sigma)= \{\sigma' \mid \forall \sigma''.\, {(\sigma' * \sigma'')\fdef}
\iff {(\sigma * \sigma'')\fdef} \}.$$
\end{defi}
In the following, $l$ ranges over footprints. The function $\delta$ computes the
equivalence class of states with the same footprint as $\sigma$. In the case of
$\RAM$, we have $\delta(\sigma) = \{\sigma' \mid \dom(\sigma) = \dom(\sigma')\}$
for every $\sigma\in\RAM$. Thus, states with the same footprint contain the same
memory cells. Definitions of $\delta$ for separation algebras with permissions
are more complicated, taking into account not only memory cells present in the
state, but also permissions for them. In the case of the algebra $\RAM_\pi$, for
$\sigma\in\RAM_\pi$ we have
\begin{multline*}
\delta(\sigma) = 
\{\sigma' \mid \forall x.\, 
({\sigma(x)\fdef} \iff {\sigma'(x)\fdef}) \wedge {}
\\
\qquad
\forall u, \pi.\,
(\sigma(x)=(u,\pi) \wedge \pi < 1 {\implies} \sigma(x)=\sigma'(x)) \wedge
(\sigma(x)=(u,1) {\implies} \sigma'(x) = (\_, 1))
\}.
\end{multline*}
In other words, states with the same footprint contain the same memory cells
with the identical permissions; in the case of memory cells on read permissions,
the states also have to agree on their values.

Let $\Foot(\Sigma)=\{\delta(\sigma) \mid \sigma\in\Sigma\}$ be the set of
footprints in a separation algebra $\Sigma$. We now lift the $*$ and $\diff$
operations on $\Sigma$ to $\Foot(\Sigma)$. First, we define the operation $\circ
: \Foot(\Sigma) \times \Foot(\Sigma) \rightharpoonup \Foot(\Sigma)$ for adding
footprints.  Consider $l_1,l_2\in \Foot(\Sigma)$ and
$\sigma_1,\sigma_2\in\Sigma$ such that $l_1 = \delta(\sigma_1)$ and $l_2 =
\delta(\sigma_2)$. If $\sigma_1*\sigma_2$ is defined, we let $l_1 \circ l_2 =
\delta(\sigma_1*\sigma_2)$; otherwise $l_1\circ l_2$ is undefined. 
\begin{prop}
The $\circ$ operation is well-defined.
\end{prop}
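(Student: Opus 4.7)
The plan is to show that the definition $l_1 \circ l_2 = \delta(\sigma_1 * \sigma_2)$ does not depend on the choice of representatives $\sigma_1 \in l_1$ and $\sigma_2 \in l_2$. So I would take arbitrary $\sigma_1,\sigma_1',\sigma_2,\sigma_2' \in \Sigma$ with $\delta(\sigma_1) = \delta(\sigma_1')$ and $\delta(\sigma_2) = \delta(\sigma_2')$, and establish two things: (a) $\sigma_1 * \sigma_2$ is defined iff $\sigma_1' * \sigma_2'$ is; and (b) when both are defined, $\delta(\sigma_1 * \sigma_2) = \delta(\sigma_1' * \sigma_2')$. The only tool available is the defining property of footprints, namely that $\delta(\sigma) = \delta(\sigma')$ iff $(\sigma * \tau)\fdef \iff (\sigma' * \tau)\fdef$ for every $\tau \in \Sigma$, together with commutativity and associativity of $*$ read ``both sides defined and equal, or both undefined''.

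For (a), I would instantiate the $\delta$-equivalence of $\sigma_1$ with test state $\sigma_2$, obtaining $(\sigma_1' * \sigma_2)\fdef$; then instantiate the $\delta$-equivalence of $\sigma_2$ with test state $\sigma_1'$ (using commutativity), obtaining $(\sigma_1' * \sigma_2')\fdef$. The converse is symmetric. For (b), I would pick an arbitrary test state $\tau$ and chain the following equivalences:
\[
((\sigma_1 * \sigma_2) * \tau)\fdef \iff (\sigma_1 * (\sigma_2 * \tau))\fdef \iff (\sigma_1' * (\sigma_2 * \tau))\fdef,
\]
where the first step uses associativity and the second uses $\delta(\sigma_1) = \delta(\sigma_1')$ with the test state $\sigma_2 * \tau$. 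Reassociating and commuting, this is equivalent to $((\sigma_1' * \tau) * \sigma_2)\fdef$, to which I apply $\delta(\sigma_2) = \delta(\sigma_2')$ with test state $\sigma_1' * \tau$, and then reassociate to reach $((\sigma_1' * \sigma_2') * \tau)\fdef$. Since $\tau$ was arbitrary, $\delta(\sigma_1 * \sigma_2) = \delta(\sigma_1' * \sigma_2')$.

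The subtle point, and the one I expect to require the most care, is the handling of undefined intermediate expressions. The defining property of $\delta$ quantifies over test states $\tau \in \Sigma$, so to instantiate it with $\sigma_2 * \tau$ I must know this product is itself an element of $\Sigma$, i.e., defined. This is where the ``both defined or both undefined'' reading of associativity is essential: if $\sigma_2 * \tau$ is undefined, then $\sigma_1 * (\sigma_2 * \tau)$ and $(\sigma_1 * \sigma_2) * \tau$ are both undefined, and symmetrically on the primed side one uses $\delta(\sigma_2) = \delta(\sigma_2')$ (with test state $\tau$) to conclude $\sigma_2' * \tau$ is also undefined, so both $((\sigma_1 * \sigma_2) * \tau)\fundef$ and $((\sigma_1' * \sigma_2') * \tau)\fundef$. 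Splitting the argument on whether the intermediate products are defined avoids any illicit use of the $\delta$-equivalence on non-elements, and the whole argument then closes cleanly.
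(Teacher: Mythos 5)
Your proof is correct and follows essentially the same route as the paper's: establish that definedness of the product is independent of the choice of representatives, then show the footprints of the two products agree by testing against an arbitrary state and chaining the $\delta$-equivalences through intermediate products via associativity and commutativity. Your extra care about undefined intermediate expressions is a point the paper's proof passes over silently, but it is handled the same way in substance by the ``both defined or both undefined'' reading of the algebra's axioms.
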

\proof Consider $l_1,l_2\in \Foot(\Sigma)$ and
$\sigma_1,\sigma_2\in\State$ such that $l_1 = \delta(\sigma_1)$ and $l_2 =
\delta(\sigma_2)$. Take another pair of states $\sigma'_1,\sigma'_2\in\State$
such that $l_1 = \delta(\sigma'_1)$ and $l_2 = \delta(\sigma'_2)$.
Thus, $\sigma'_1 \in \delta(\sigma_1)$ and $\sigma'_2\in\delta(\sigma_2)$, which
implies:
$$
{{(\sigma'_1*\sigma'_2)}\fdef} \iff 
{{(\sigma_1*\sigma'_2)}\fdef} \iff
{{(\sigma_1*\sigma_2)}\fdef}.
$$
Furthermore, if $\sigma'_1*\sigma'_2$ and $\sigma_1*\sigma_2$ are defined, then
for all $\sigma'\in\State$, 
$$
{{(\sigma'_1*\sigma'_2*\sigma')}\fdef} \iff 
{{(\sigma_1*\sigma'_2*\sigma')}\fdef} \iff 
{{(\sigma_1*\sigma_2*\sigma')}\fdef}.
$$
Hence, $\delta(\sigma_1*\sigma_2) = \delta(\sigma'_1*\sigma'_2)$, so that 
$\circ$ is well-defined.\qed

\smallskip

For $\RAM$, $\circ$ is just the pointwise lifting of the disjoint function union $
\uplus$.

To define a subtraction operation on footprints, we use the following condition.
\begin{defi}\label{foot-canc}
The $*$ operation of a separation algebra $\Sigma$ is {\bf \em cancellative on
  footprints} when for all $\sigma_1,\sigma_2,\sigma'_1,\sigma_2' \in \Sigma$,
if $\sigma_1*\sigma_2$ and $\sigma'_1*\sigma'_2$ are defined, then
$$
{(\delta(\sigma_1 * \sigma_2) = \delta(\sigma'_1 * \sigma'_2) 
\wedge
\delta(\sigma_1) = \delta(\sigma'_1)) }
\implies
{\delta(\sigma_2) = \delta(\sigma'_2)}. 
$$
\end{defi}
For example, the $*$ operations on $\RAM$ and $\RAM_\pi$ satisfy this condition.

When the $*$ operation of an algebra $\Sigma$ is cancellative on footprints, we
can define an operation $\fdiff: \Foot(\Sigma) \times \Foot(\Sigma)
\rightharpoonup \Foot(\Sigma)$ of {\bf\em footprint subtraction} as follows.
Consider $l_1,l_2\in\Foot(\Sigma)$. If for some
$\sigma_1,\sigma_2,\sigma\in\Sigma$, we have $l_1 = \delta(\sigma_1)$, $l_2 =
\delta(\sigma_2)$ and $\sigma_2 = \sigma_1 * \sigma$, then we let $l_2\fdiff
l_1=\delta(\sigma)$.  When such $\sigma_1,\sigma_2,\sigma$ do not exist,
$l_2\fdiff l_1$ is undefined.
\begin{prop}
The $\fdiff$ operation is well-defined.
\end{prop}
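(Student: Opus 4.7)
The plan is to show that the value $\delta(\sigma)$ in the definition of $l_2 \fdiff l_1$ does not depend on the choice of witnesses $\sigma_1, \sigma_2, \sigma$. Concretely, I would take two triples $(\sigma_1, \sigma_2, \sigma)$ and $(\sigma'_1, \sigma'_2, \sigma')$ in $\Sigma$ satisfying
$$l_1 = \delta(\sigma_1) = \delta(\sigma'_1), \quad l_2 = \delta(\sigma_2) = \delta(\sigma'_2), \quad \sigma_2 = \sigma_1 * \sigma, \quad \sigma'_2 = \sigma'_1 * \sigma',$$
and aim to derive $\delta(\sigma) = \delta(\sigma')$, which is exactly what is required for $\fdiff$ to be well-defined.

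The argument is then a direct appeal to Definition~\ref{foot-canc}: from $\sigma_2 = \sigma_1 * \sigma$ and $\sigma'_2 = \sigma'_1 * \sigma'$ we know that both $\sigma_1 * \sigma$ and $\sigma'_1 * \sigma'$ are defined, and we have
$$\delta(\sigma_1 * \sigma) = \delta(\sigma_2) = l_2 = \delta(\sigma'_2) = \delta(\sigma'_1 * \sigma'), \qquad \delta(\sigma_1) = \delta(\sigma'_1).$$
Cancellativity of $*$ on footprints then immediately yields $\delta(\sigma) = \delta(\sigma')$, so $l_2 \fdiff l_1$ is independent of the choice of representatives.

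There is essentially no hidden obstacle here: the whole point of Definition~\ref{foot-canc} is to isolate exactly the property one needs to make this subtraction operation on equivalence classes meaningful, so the proof reduces to a one-line invocation of that hypothesis. The only thing worth being explicit about is that the case where no suitable $\sigma_1, \sigma_2, \sigma$ exist is handled by the stipulation that $l_2 \fdiff l_1$ is undefined, so there is no ambiguity in that case either.
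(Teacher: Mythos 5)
Your proof is correct and is essentially identical to the paper's: both take two representative decompositions, observe that the hypotheses of Definition~\ref{foot-canc} are met, and conclude $\delta(\sigma)=\delta(\sigma')$ by cancellativity on footprints. (If anything, your index naming matches the definition of $\fdiff$ more faithfully than the paper's own proof, which silently swaps the roles of $\sigma_1$ and $\sigma_2$.)
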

\proof
Consider $l_1,l_2 \in \Foot(\Sigma)$ and
$\sigma_1,\sigma_2,\sigma'_1,\sigma'_2,\sigma,\sigma'\in\State$ such that  
$\sigma_1,\sigma'_1 \in l_1$, $\sigma_2,\sigma'_2 \in l_2$,
$\sigma_1 = \sigma_2 * \sigma$ and
$\sigma'_1 = \sigma'_2 * \sigma'$. 
We have:
$$
\delta(\sigma_2*\sigma)=\delta(\sigma_1)=l_1=
\delta(\sigma'_1)=\delta(\sigma'_2*\sigma').
$$
Since $\delta(\sigma_2) = \delta(\sigma'_2)$, by Definition~\ref{foot-canc}  this implies
$\delta(\sigma)=\delta(\sigma')$, so that $\fdiff$ is well-defined.
\qed

\smallskip

For $\RAM$, the $\fdiff$ operation is defined as follows. For $l_1, l_2 \in
\Foot(\RAM)$, take any $\sigma_1, \sigma_2 \in \RAM$ such that $\delta(\sigma_1)
= l_1$ and $\delta(\sigma_2) = l_2$. Then $l_2 \fdiff l_1 = \{\sigma \mid
\dom(\sigma) \uplus \dom(\sigma_1) = \dom(\sigma_2) \}$, if $\dom(\sigma_1)
\subseteq \dom(\sigma_2)$; otherwise, $l_1 \fdiff l_2$ is undefined.

We say that a footprint $l_1$ is {\bf \em smaller} than $l_2$,
written $l_1 \preceq l_2$, when $l_2\fdiff l_1$ is defined. 
The $\circ$ and $\fdiff$ operations on footprints satisfy an analogue of
Proposition~\ref{prop-diff}.
\begin{prop}\label{prop-delta}
For all $l_1,l_2,l_3\in\Foot(\Sigma)$, if $l_1 \circ l_2$ and $l_1\fdiff l_3$
are defined, then 
$$
(l_1 \circ l_2)\fdiff l_3 = (l_1 \fdiff l_3) \circ l_2.
$$
\end{prop}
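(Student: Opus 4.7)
The plan is to reduce the identity at the level of footprints to associativity of $*$ on suitably chosen representative states, using the well-definedness of $\circ$ and $\fdiff$ (already proved) to discharge any choice of representatives.

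First I would exploit the assumption that $l_1 \fdiff l_3$ is defined to pick witnessing representatives $\sigma_3, \sigma \in \Sigma$ with $\sigma_3 * \sigma$ defined, $\delta(\sigma_3) = l_3$, and $\delta(\sigma_3 * \sigma) = l_1$. Then I would set $\sigma_1 := \sigma_3 * \sigma$, so $\delta(\sigma_1) = l_1$ and $l_1 \fdiff l_3 = \delta(\sigma)$ by the definition of $\fdiff$. Next, using that $l_1 \circ l_2$ is defined, there exist $\sigma_1' \in l_1$ and $\sigma_2 \in l_2$ with $\sigma_1' * \sigma_2$ defined; since $\delta(\sigma_1) = \delta(\sigma_1')$, Definition~\ref{defn-foot} applied with this $\sigma_2$ gives that $\sigma_1 * \sigma_2$ is also defined. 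So I now have a single triple $(\sigma_3, \sigma, \sigma_2)$ witnessing both hypotheses simultaneously.

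Then I would apply the (partial) associativity of $*$: since $\sigma_1 * \sigma_2 = (\sigma_3 * \sigma) * \sigma_2$ is defined, the expression $\sigma_3 * (\sigma * \sigma_2)$ is also defined and equal to it, and in particular $\sigma * \sigma_2$ is defined. Computing the left-hand side of the desired equation:
\[
(l_1 \circ l_2) \fdiff l_3 \;=\; \delta(\sigma_1 * \sigma_2) \fdiff \delta(\sigma_3) \;=\; \delta\bigl(\sigma_3 * (\sigma * \sigma_2)\bigr) \fdiff \delta(\sigma_3) \;=\; \delta(\sigma * \sigma_2),
\]
where the last step uses the decomposition $\sigma_3 * (\sigma * \sigma_2)$ as a witness for $\fdiff$. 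For the right-hand side, since $\sigma * \sigma_2$ is defined,
\[
(l_1 \fdiff l_3) \circ l_2 \;=\; \delta(\sigma) \circ \delta(\sigma_2) \;=\; \delta(\sigma * \sigma_2),
\]
matching the left-hand side.

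The only real obstacle is the opening move: one needs representatives that simultaneously witness the definedness of $\fdiff$ and of $\circ$. This is exactly where the definition of $\delta$ (as the equivalence class of states that compose with the same partners) does the work, letting me transport the $\circ$-witness from an arbitrary pair $(\sigma_1', \sigma_2)$ over to the specific $\sigma_1 = \sigma_3 * \sigma$ that already admits a $l_3$-decomposition. Once that alignment is achieved, partial associativity of $*$ closes the argument.
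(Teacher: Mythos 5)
Your proof is correct. The paper actually states Proposition~\ref{prop-delta} without proof, so there is nothing to diverge from; your argument is the expected one: pick a representative $\sigma_1=\sigma_3*\sigma$ witnessing $l_1\fdiff l_3$, use the definition of $\delta$ (equivalently, the already-established well-definedness of $\circ$) to transport the definedness of the $\circ$-witness onto that representative, and then let partial associativity of $*$ identify both sides with $\delta(\sigma*\sigma_2)$. The one subtle point -- aligning the two witnesses on a single triple of states -- is exactly the step you isolate and justify, so the argument is complete.
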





\noindent In the rest of the paper, we fix a separation algebra $\Sigma$ with the $*$
operation cancellative on footprints.

\section{Linearizability with Ownership Transfer\label{sec:observ}}

In the following, we consider descriptions of computations of a library
providing several methods to a multithreaded client. We fix a set $\ThreadID$
of thread identifiers and a set $\MethodName$ of method names.  As we
explained in Section~\ref{sec:intro}, a good definition of linearizability has
to allow replacing a concrete library implementation with its abstract version
while keeping client behaviours reproducible. For this, it should require that
the two libraries have similar client-observable behaviours. Such behaviours are
recorded using {\em histories}, which we now define in our setting.
\begin{defi}\label{def:intact}
An {\bf \em interface action} $\psi$ is an expression of the form
$(t, \call\ m(\sigma))$ or $(t,\ret\ m(\sigma))$,
where $t\in\ThreadID$, $m\in\MethodName$ and $\sigma\in\Sigma$.
We denote the sets of all call and return actions by $\CallAct$ and $\RetAct$,
and the set of all interface actions by $\CallRetAct$.
\end{defi}

An interface action records a call to or a return from a library method $m$ by
thread $t$. The component $\sigma$ in $(t, \call\ m(\sigma))$ specifies the part
of the state transferred upon the call from the client to the library; $\sigma$
in $(t, \ret\ m(\sigma))$ is transferred in the other direction.  For example,
in the algebra $\RAM$, the annotation $\sigma = [42 : 0]$ implies the transfer
of the cell at the address $42$ storing $0$. In the algebra $\RAM_\pi$, $\sigma =
[42 : (0, 1/2)]$ implies the transfer of a read permission for this cell.

\begin{defi}
  A {\bf \em history} $H$ is a finite sequence of interface actions such that
  for every thread $t$, its projection $H|_t$ to actions by $t$ is a sequence of
  alternating call and return actions over matching methods that starts from a
  call action.
\end{defi}

In the following, we use the standard notation for sequences: $\varepsilon$ is
the empty sequence, $\tau(i)$ is the $i$-th element of a sequence $\tau$,
$\tau\pref_k$ is the prefix of $\tau$ of length $k$, and $|\tau|$ is the
length of $\tau$.

Not all histories make intuitive sense with respect to the ownership transfer
reading of interface actions. For example, let $\Sigma = \RAM$ and consider the
history in Figure~\ref{fig:hist}(a). The history is meant to describe {\em
  all}\/ the interactions between the library and the client.  According to the
history, the cell at the address $10$ was first owned by the client, and then
transferred to the library by thread $1$. However, before this state was
transferred back to the client, it was again transferred from the client to the
library, this time by thread $2$. This is not consistent with the intuition of
ownership transfer, as executing the second action requires the cell to be owned
both by the library and by the client, which is impossible in $\RAM$.

As we show in this paper, histories that do not respect the notion of ownership,
such as the one above, cannot be generated by any program, and should not be
taken into account when defining linearizability.  We now use the notion of
footprints of states from Section~\ref{sec:prelim} to characterise formally the
set of histories that respect ownership.
A finite history $H$ induces a partial function $\dbf{H}:\Foot(\Sigma)
\rightharpoonup \Foot(\Sigma)$, which tracks how a computation with the history
$H$ changes the footprint of the library state:
$$
\begin{array}[t]{@{}r@{\ }c@{\ }l@{\ \ }l@{}}
\dbf{\varepsilon}l 
&=& 
l;
\\[2pt]
\dbf{H \psi}l
& = & 
\dbf{H}l \circ \delta(\sigma),
&
\mbox{if } \psi = (\_, \call\ \_(\sigma)) \wedge {(\dbf{H}l \circ \delta(\sigma))\fdef};
\\[2pt]
\dbf{H \psi}l
& = & 
\dbf{H}l \fdiff \delta(\sigma),
&
\mbox{if } \psi = (\_, \ret\ \_(\sigma)) \wedge {(\dbf{H}l\fdiff\delta(\sigma))\fdef};
\\[2pt]
\dbf{H \psi}l 
& = & 
\mbox{undefined},
&
\mbox{otherwise}.
\end{array}
$$
Using this function, we characterise histories respecting the notion of
ownership as follows.
\begin{defi}\label{defn:balance}
  A history $H$ is {\bf \em balanced} from $l\in\Foot(\Sigma)$ if $\dbf{H}(l)$
  is defined. We call subsets of $\BHistory = \{(l,H) \mid \mbox{$H$ is balanced
    from $l$}\}$ {\bf\em interface sets}.
\end{defi}
An interface set can be used to describe all the behaviours of a library
relevant to its clients. In the following, $\cH$ ranges over interface sets.

To keep client behaviours reproducible when replacing a concrete library by an
abstract one, we do not need to require the latter to reproduce the histories of
the former exactly: the histories generated by the two libraries can be
different in ways that are irrelevant for their clients. We now introduce a {\em
  linearizability relation} that matches a history of a concrete library with
that of the abstract one that yields the same client-observable behaviour.
\begin{defi}\label{lin}
  The {\bf\em linearization relation} $\sqsubseteq$ on histories is defined as
  follows: $H \sqsubseteq H'$ holds if there exists a bijection $\rho \colon
  \{1,\ldots, |H|\} \to \{1,\ldots, |H'|\}$ such that 
$$
\begin{array}{@{}l@{}l@{}}
\forall i, j.\, 
(H(i) = H'(\rho(i))) \wedge ((i < j \wedge 
 ( & (\exists t.\, H(i) = (t, \_) \wedge H(j) = (t,\_))
\vee {} \\[2pt]
& (H(i) = (\_, \ret\ \_) \wedge H(j) = (\_,\call\ \_)))) {\implies} \rho(i) < \rho(j)).
\end{array}
$$

We lift $\sqsubseteq$ to $\BHistory$ as follows: $(l,H) \sqsubseteq (l',H')$
holds if $l' \preceq l$ and $H \sqsubseteq H'$.

Finally, we lift $\sqsubseteq$ to interface sets as follows: $\cH_1 \sqsubseteq
\cH_2$ holds if
$$
\forall (l_1,H_1) \in\cH_1.\, \exists (l_2,H_2) \in \cH_2.\, (l_1,H_1)
\sqsubseteq (l_2, H_2).
$$
\end{defi}
Thus, a history $H$ is linearized by a history $H'$ when the latter is a
permutation of the former preserving the order of actions within threads and
non-overlapping method invocations. The duration of a method invocation is
defined by the interval from the method call action to the corresponding return
action (or to the end of the history if there is none). An interface set $\cH_1$
is linearized by an interface set $\cH_2$, if every history in $\cH_1$ may be
reproduced in a linearized form by $\cH_2$ without requiring more memory. We now
discuss the definition in more detail.

Definition~\ref{lin} treats parts of memory whose ownership is passed between
the library and the client in the same way as parameters and return values in
the classical definition~\cite{linearizability}: they are required to be the
same in the two histories. In fact, the setting of the classical definition can
be modelled in ours if we pass parameters and return values via the heap. Let
$\Sigma = \RAM$ and let us fix distinct locations $\myarg_t \in \Loc$ for $t \in
\ThreadID$ meant for the transfer of parameters and return values. Then
histories of the classical definition are represented in our setting by
histories where all actions are of the form
$$
(t, \call\ m([\myarg_t : {\it param}])) \mbox{ or } (t,\ret\ m([\myarg_t : {\it
  retval}])), \mbox{ where }
{\it param}, {\it retval} \in \Val.
$$
The novelty of our definition lies in restricting the histories considered to
balanced ones, which are the only ones that can be produced by programs (we
formalise this fact in Section~\ref{sec:semantics}). The notion of balancedness
also plays a key role in proving the Abstraction Theorem in the presence of
ownership transfer (Section~\ref{sec:rearr}).

\begin{figure}
\leftline{(a):}

\bigskip

\includegraphics[scale=.34, trim= 0 16cm 0 0cm]{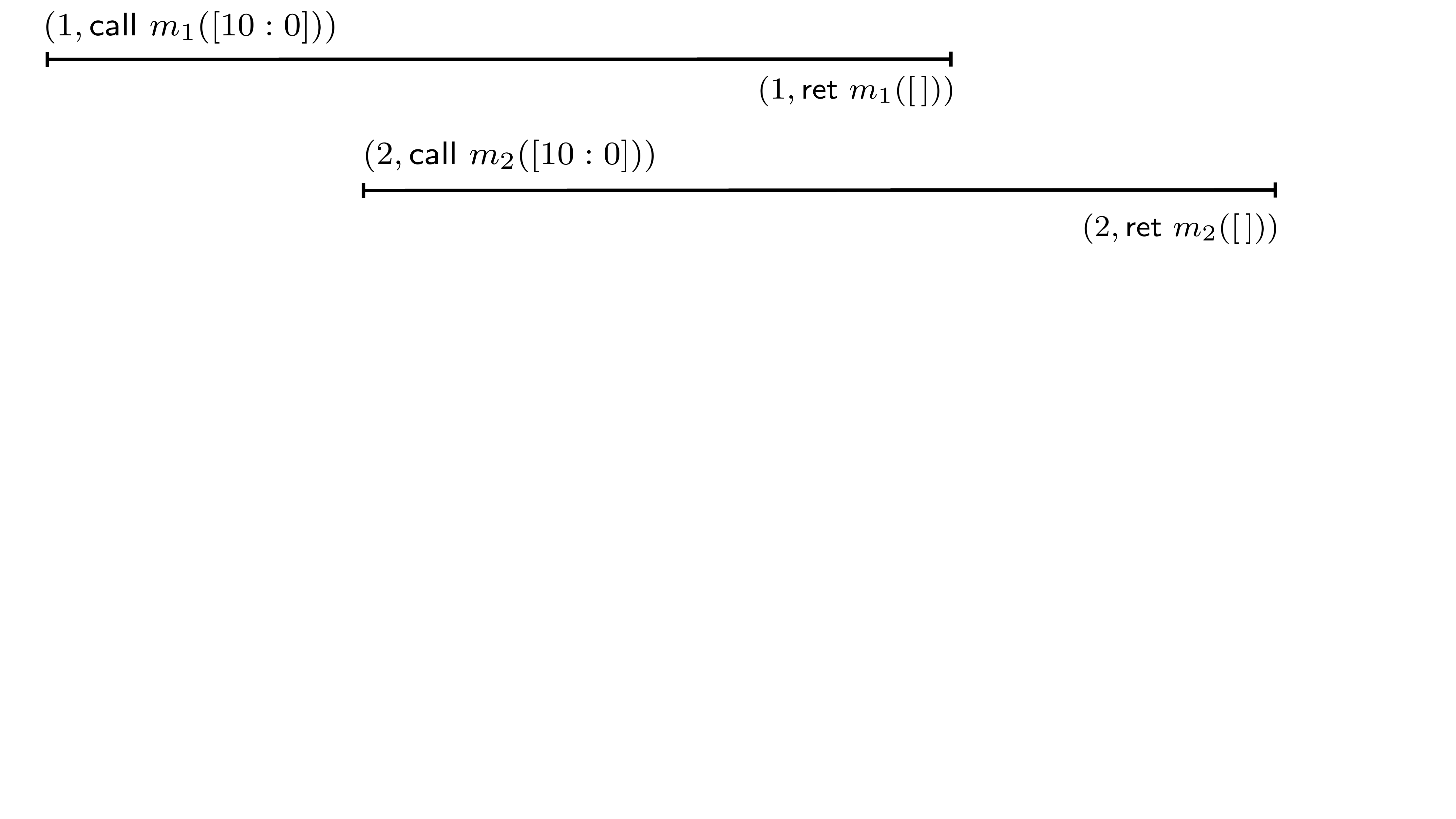}

\leftline{(b):}

\bigskip

\includegraphics[scale=.34, trim= 0 16cm 0 0cm]{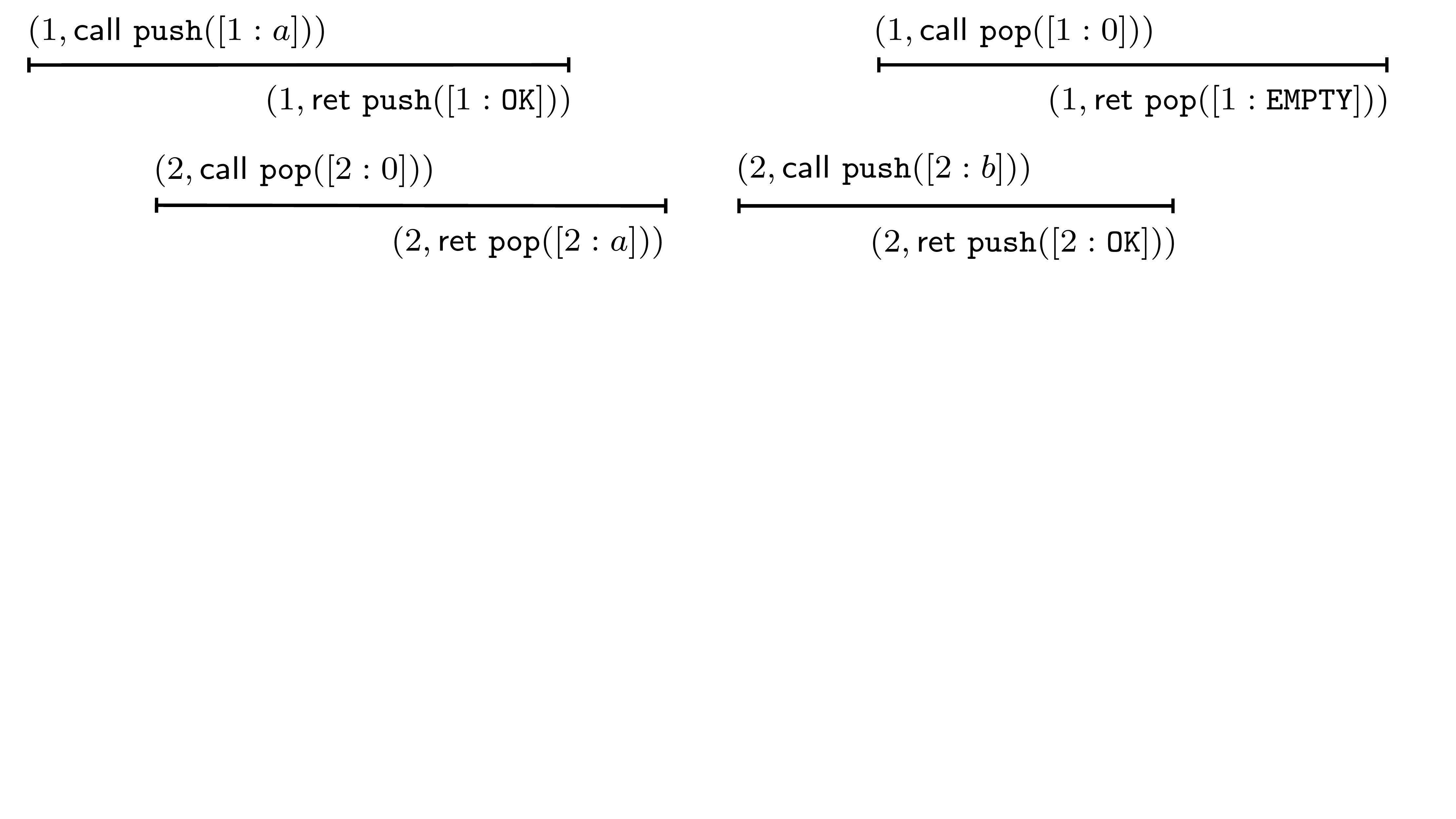}

\leftline{(c):}

\bigskip

\includegraphics[scale=.34, trim= 0 19cm 0 0cm]{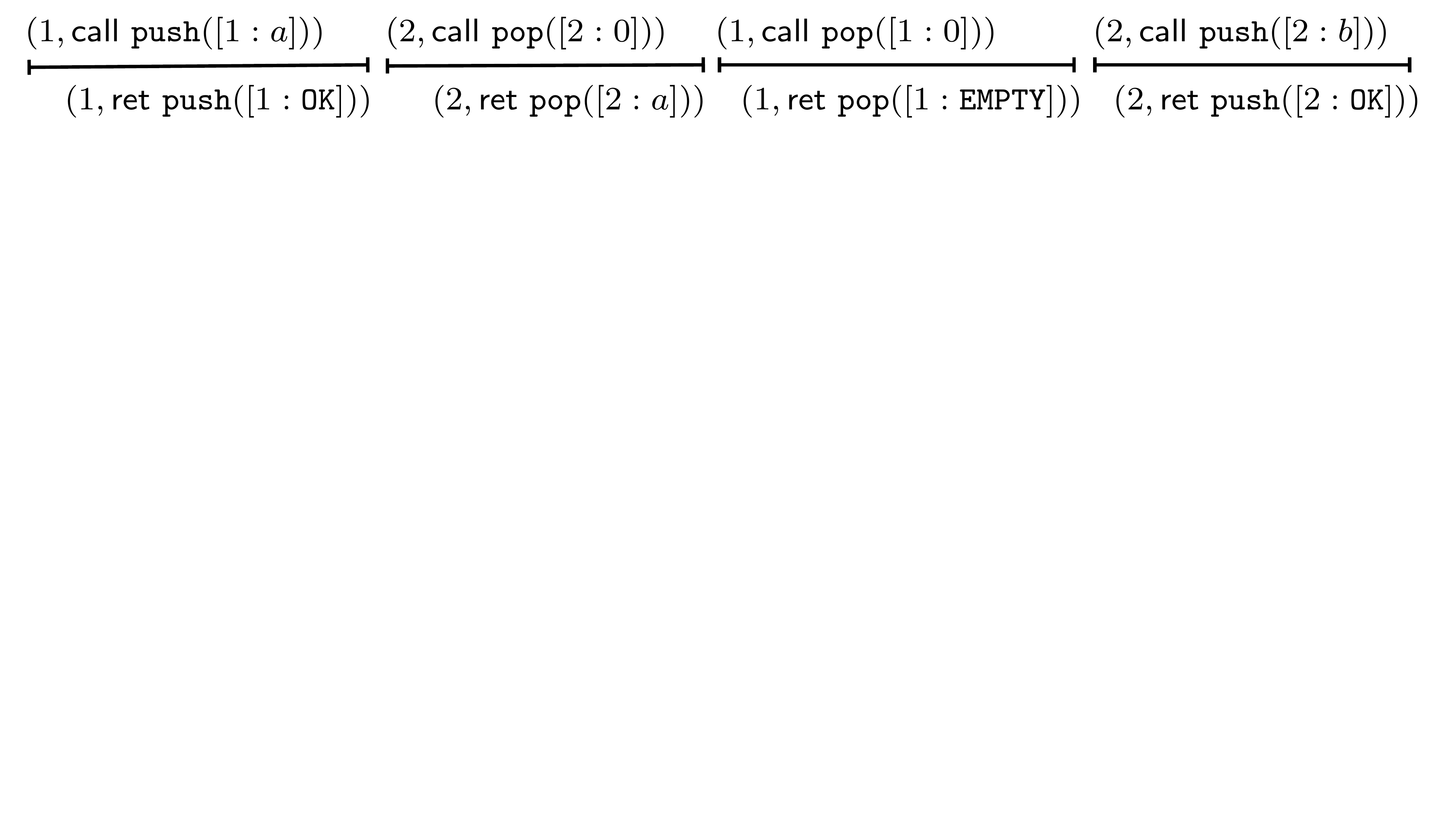}

\leftline{(d):}

\bigskip

\includegraphics[scale=.34, trim= 0 16cm 0 0cm]{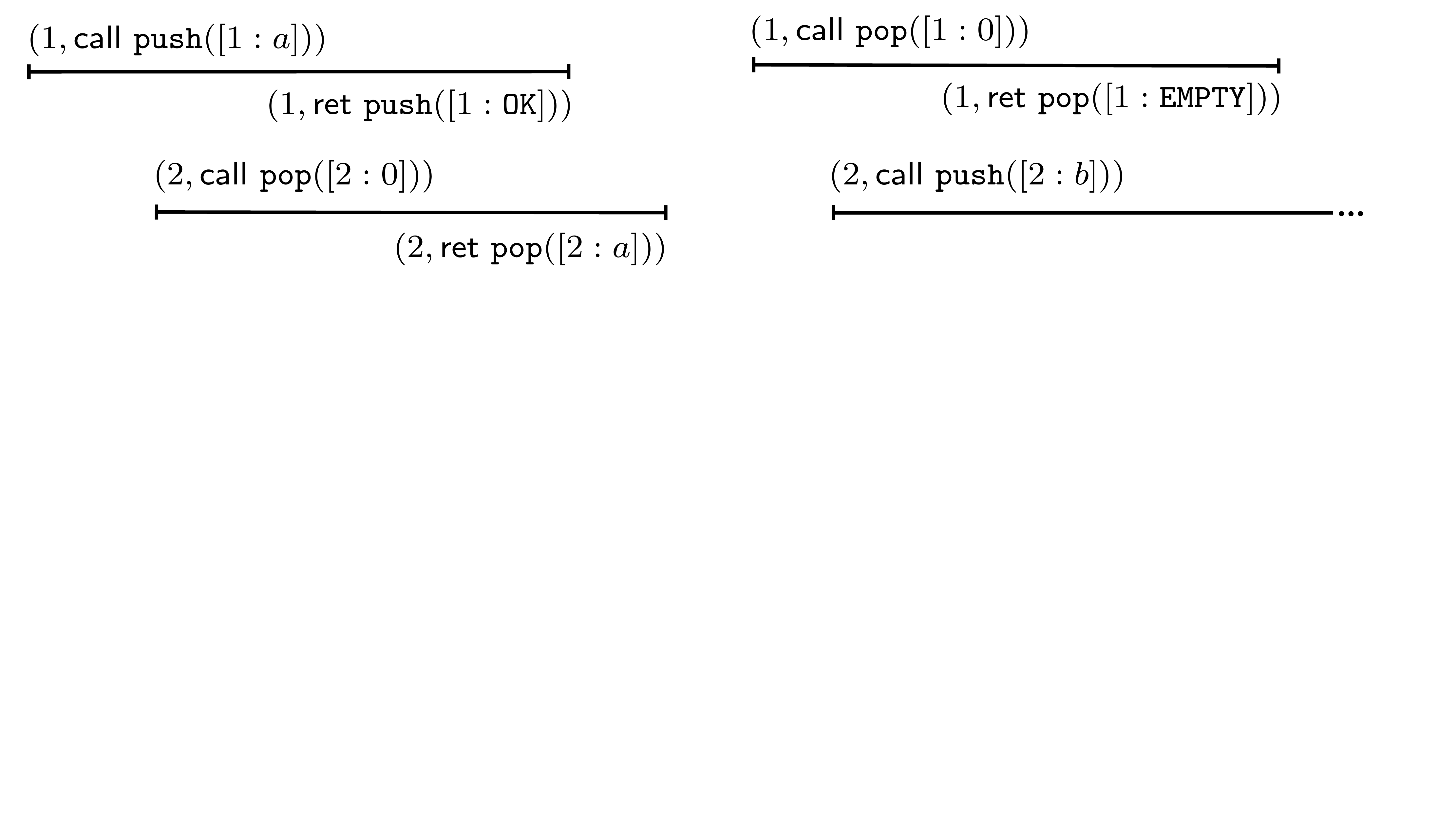}

\leftline{(e):}

\bigskip

\includegraphics[scale=.34, trim= 0 16cm 0 0cm]{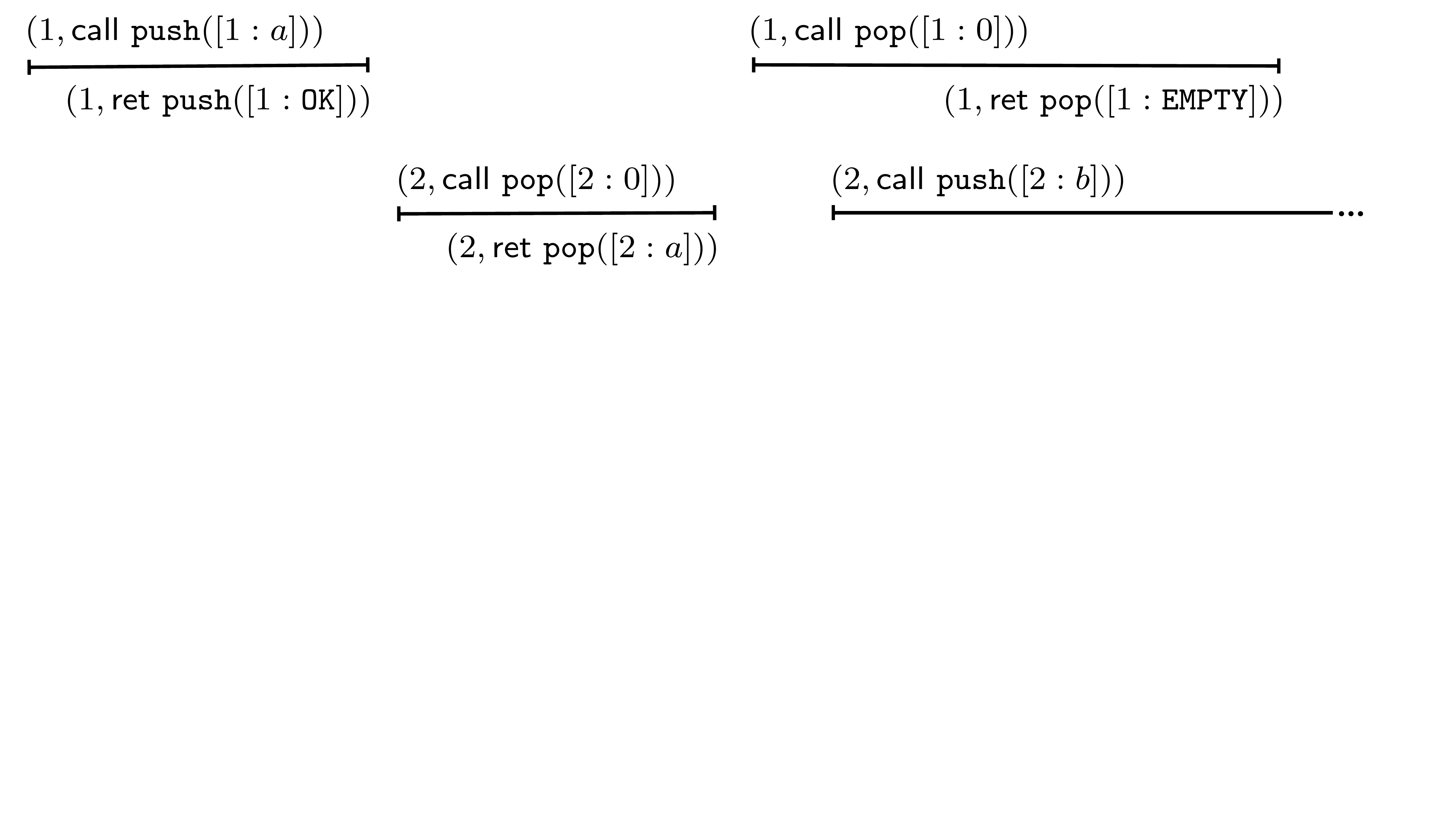}
\caption{Example histories}
\label{fig:hist}
\end{figure}

The fact that the linearizability relation allows us to permute actions by
different threads lets us arrange method invocations into a linear sequence.
For example the history in Figure~\ref{fig:hist}(b) is linearized by that in
Figure~\ref{fig:hist}(c). The former might correspond to a concurrent stack
implementation, where threads $1$ and $2$ pass parameters and return values via
locations $1$ and $2$, respectively. Histories such as the one in
Figure~\ref{fig:hist}(c), where a call to every method is immediately followed
by the corresponding return, are called {\bf\em sequential}. Sequential
histories correspond to abstract libraries with every method implemented
atomically. When the histories in Figures~\ref{fig:hist}(b)
and~\ref{fig:hist}(c) are members of interface sets defining the behaviour of
concurrent and atomic stack implementations, the linearizability relationship
between them allows us to justify that the call to {\tt pop} by thread $1$ in
Figure~\ref{fig:hist}(b) can return ${\tt EMPTY}$, since this behaviour can be
witnessed by the valid history of the atomic implementation in
Figure~\ref{fig:hist}(c). (In fact, the {\tt pop} would also be allowed to
return $b$, since the resulting history would be linearized by a sequential
history with the {\tt push} of $b$ preceding the {\tt pop}.)

The requirement that the order of non-overlapping method invocations be
preserved is inherited from the classical notion of
linearizability~\cite{linearizability}. As shown by Filipovi\'c et
al.~\cite{tcs10}, this requirement is essential to validate an Abstraction
Theorem for clients that can communicate via shared client-side variables. For
example, since in Figure~\ref{fig:hist}(b) the {\tt push} of $a$ returns before
the {\tt push} of $b$ is called, the order between these method invocations has
to stay the same in any linearizing history, such as the one in
Figure~\ref{fig:hist}(c).

Following Filipovi\'c et al.~\cite{tcs10}, we do not require the abstract
history $H'$ to be sequential, like in the classical definition of
linearizablity. This allows our definition to compare behaviours of two
concurrent library implementations. We also allow a concrete history to contain
calls without matching returns, arising, e.g., because the corresponding method
invocation did not terminate. In this case, we require the same behaviour to be
reproduced in the abstract history~\cite{icalp}, which is possible because the
latter does not have to be sequential. For example, the history in
Figure~\ref{fig:hist}(d) is linearized by that in Figure~\ref{fig:hist}(e). This
yields a simpler treatment of non-terminating calls than the use of completions
in the classical definition of linearizability~\cite{linearizability}.


Definition~\ref{lin} requires that the initial footprint of an abstract history
$H'$ be smaller than that of the concrete history $H$. This requirement is
standard in data refinement~\cite{blame}: it ensures that, when we replace a
concrete library by an abstract one in a program, the library-owned memory stays
disjoint from the client-owned one. It does not pose problems in practice, as
the abstract library generating $H'$ usually represents some of the data
structures of the concrete library abstractly and, hence, more concisely.

So far we have defined the notion of linearizability on interface sets without
taking into account library implementations that generate them. In the rest of
the paper, we develop this notion for libraries written in a particular
programming language and prove an Abstraction Theorem, which guarantees that a
library can be replaced by another library linearizing it when we reason about
its client program.

\section{Programming Language\label{sec:prog}}

We consider a simple concurrent programming language:
$$
\begin{array}{@{}l@{\ \ }c@{\ \ }l@{}}
C & ::=&
  c \mid m \mid  C; C \mid  C + C \mid  C^*
\\[2pt]
\cL & ::=&  \{m = C;\ \ldots;\ m = C \}
\\[2pt]
\cS & ::=& 
  {\sf let}\ \cL\ {\sf in}\ C \parallel \ldots \parallel C  
\end{array}
$$
A program consists of a single {\bf\em library} $\cL$ implementing methods
$m\in\MethodName$ and its {\bf\em client} $C_1 \parallel \ldots
\parallel C_n$, given by a parallel composition of threads. 
The language is parameterised by a set of {\bf\em primitive commands} $c \in
\PComm$, meant to execute atomically. Commands also include method calls
$m\in\MethodName$, sequential composition $C;C'$, nondeterministic choice $C+C'$
and finite iteration $C^*$. We use $+$ and ${}^*$ instead of conditionals and
while loops for theoretical simplicity: as we show below, the latter can be
defined in the language as syntactic sugar. Methods do not take arguments and do
not return values, as these can be passed via special locations on the heap
associated with the identifier of the thread calling the method
(Section~\ref{sec:observ}). We disallow nested method calls. We also assume that
every method called in the program is defined by the library, and thus call
$\cS$ a {\bf\em complete program}. An {\bf \em open program} is a library $\cL$
without a client, or a client $\Cc$ without a library implementation:
$$
\begin{array}{@{}l@{\ \ }c@{\ \ }l@{}}
\Cc  & ::=&   {\sf let}\ [-]\ {\sf in}\ C\parallel\ldots\parallel C
\\[2pt]
\cO &::=& \cS \mid \Cc \mid \cL  
\end{array}
$$
In $\Cc$, we allow the client to call methods that are not defined in the
program (but belong to the missing library).  An open program represents a library
or a client considered in isolation.  The novelty of the kind of open programs
we consider here is that we allow them to communicate with their environment via
ownership transfers. We now define a way to specify a contract this
communication follows.

\subsection{Method Specifications\label{sec:mspecs}}

A {\bf \em predicate} is a set of states from $\Sigma$, and a {\bf \em
  parameterised predicate} is a mapping from thread identifiers to
predicates. We use the same symbols $p,q,r$ for ordinary and parameterised
predicates; it should always be clear from the context which one we mean.  When
$p$ is a parameterised predicate, we write $p_t$ for the predicate obtained by
applying $p$ to a thread $t$. Both kinds of predicates can be described
syntactically, e.g., using separation logic assertions~\cite{lics02}.

We define possible ownership transfers between components with the aid of {\bf
  \em method specifications} $\Gamma$, which are sets of Hoare triples $\{p\}\
m\ \{q\}$, at most one for each method. Here $p$ and $q$ are parameterised
predicates such that $p_t$ describes pieces of state transferred when thread $t$
calls the method $m$, and $q_t$, those transferred at its return. Note that the
intention of the pre- and postconditions in method specifications is only to
identify the areas of memory transferred; in other words, they describe the
``type'' of the returned data structure, but not its ``value''. As usual for
concurrent algorithms, a complete specification of a library is given by its
abstract implementation (Section~\ref{sec:refinement}). 


For example, as we discussed in Section~\ref{sec:intro}, when programmers store
pointers in a concurrent container, they often intend to transfer the ownership
of the data structures these pointers identify at calls to and returns from the
container's methods. In Figure~\ref{fig:impl}(a) we give an example of such a
container---a bounded stack represented by an array. For readability, we write
examples in C instead of the minimalistic language introduced above.
The library protects the array by a lock; more complicated algorithms allow a
higher degree of concurrency~\cite{herlihy-book}.
Take $\Sigma = \RAM$ and for $x \in \Loc$ let
$\Obj(x)\subseteq \RAM$ denote the set of states representing all well-formed
data structures of a certain type allocated at the address $x$. For example, for
objects with a single integer field we have $\Obj(x) = \{[x : y] \mid y \in
\Val\}$. Then the specification of the stack when it stores pointers to such
data structures can be given as follows:
\be\label{container}
\begin{array}{l}
\{\exists x.\, \myarg_t \mapsto x * {\sf Obj}(x)
\}\ {\tt push}\ \{\myarg_t \mapsto {\tt OK} \vee 
(\myarg_t \mapsto {\tt FULL} * {\sf Obj}(x))\};
\\[2pt]
\{\myarg_t \mapsto \_\}\ {\tt pop}\ 
\{\exists x.\, \myarg_t \mapsto x * ((x = {\tt EMPTY} \wedge \emp) \vee 
(x \not= {\tt EMPTY} \wedge {\sf Obj}(x)))\}.
\end{array}
\ee
Here we use the separation logic syntax to describe predicates parameterised by
the thread identifier $t$. Thus, $\emp$ denotes the empty heap $[\,]$, $x
\mapsto y$ the heap $[x : y]$, and $*$ the combination of 
heaps with disjoint domains. In the following
we also use the assertion $x..y \mapsto \_$ for $x \leq y$, 
denoting all the heaps with the domain $\{x, x+1, \ldots, y\}$. 
We use distinguished locations $\myarg_t$, $t \in \ThreadID$
to pass parameters and return values. According to the specification, the stack
gets the ownership of an object when a pointer to it is pushed, and gives it up
when the pointer is popped.

\begin{figure}
{\small
\begin{tabular}{@{}l@{\qquad\quad}|@{\qquad\quad}l@{}}
{}
\begin{source}[basicstyle = \ttfamily,language=C,escapeinside={(*}{*)}]
void *stack[SIZE];
int count = 0; // count of
// elements stored
Lock array_lock; // protects 
// the array and the count

int push(void *arg) {
  lock(array_lock);
  if (count == SIZE) {
    unlock(array_lock);
    return FULL;
  }
  stack[count++] = arg;
  unlock(array_lock);
  return OK;
}

void *pop() {
  lock(array_lock);
  if (count == 0) {
    unlock(array_lock);
    return EMPTY;
  }
  void *obj = stack[--count];
  unlock(array_lock);
  return obj;
}
\end{source}
&
\begin{source}[basicstyle = \ttfamily,language=C,escapeinside={(*}{*)}]
struct Node { Node *prev, *next; };
Node *free_list; // a cyclic doubly-
// linked list with a sentinel node
Lock *list_lock; // protects the list

void free(void *arg) {
  Node *block = (Node*)arg;
  lock(list_lock);
  block->prev = free_list;
  block->next = free_list->next;
  free_list->next->prev = block;
  free_list->next = block;
  unlock(list_lock);
}

void *alloc() {
  lock(list_lock);
  if (free_list->next == free_list) {
    unlock(list_lock);
    return NULL;
  }
  Node *block = free_list->next;
  free_list->next = block->next;
  block->next->prev = free_list;
  unlock(list_lock);
  return block;
}
\end{source}
\\
\\
\qquad  \qquad \qquad \qquad  (a)
& \quad \qquad \qquad \qquad  \qquad  (b)
\end{tabular}
}
\caption{Example concurrent library implementations: 
(a) a bounded stack storing pointers to objects;
(b) a memory allocator managing memory blocks of a fixed size}
\label{fig:impl}
\end{figure}

Now take ${\it size} \in \mathbb{N}$ and let 
$$
\Obj(x) = \big\{\sigma \mid \dom(\sigma) = \{x, \ldots, x+{\it size}-1\}\big\}.
$$
We specify an allocator managing blocks of {\it size} memory cells as follows:
\be\label{alloc}
\begin{array}{l}
\{\exists x.\, \myarg_t \mapsto x * (x..(x+{\it size}-1) \mapsto \_) \}\ 
{\tt free}\ 
\{\myarg_t \mapsto \_\};
\\[2pt]
\{\myarg_t \mapsto \_\}\ {\tt alloc}\ 
\{\exists x.\, \myarg_t \mapsto x * ((x = 0 \wedge \emp) \vee 
(x \not= 0 \wedge (x..(x+{\it size}-1) \mapsto \_)))\}.
\end{array}
\ee
The specification corresponds to the ownership transfer reading of allocator
calls explained in Section~\ref{sec:intro}. In Figure~\ref{fig:impl}(b) we give
an example allocator corresponding to the specification (we have omitted
initialisation code from the figure). Note that, unlike the stack, the allocator
does access the blocks of memory transferred to it by the client, since it
stores free-list pointers inside them.

To define the semantics of ownership transfers unambiguously
(Section~\ref{sec:semantics}), we require pre- and postconditions in method
specifications to be {\em precise}~\cite{seplogic-concurrent}.
\begin{defi}
  A predicate $r\in\power{\Sigma}$ is {\bf \em precise} if for every state
  $\sigma$ there exists at most one substate $\sigma_1$ satisfying $r$, i.e.,
  such that $\sigma_1 \in r$ and $\sigma = \sigma_1 * \sigma_2$ for some
  $\sigma_2$. 
\end{defi}
Since the $*$ operation is cancellative, when such a substate $\sigma_1$ exists,
the corresponding substate $\sigma_2$ is unique and is denoted by $\sigma \diff
r$. A parameterised predicate $r$ is precise if so is $r_t$ for every $t$.

Informally, a precise predicate carves out a unique piece of the heap. For
example, assuming the algebra $\RAM$, the predicate $\{[42 : 0]\}$ and those
used in the allocator specification are precise. However, the predicate $\{[42 :
0], [\,]\}$ is not: when $\sigma = [42 : 0]$ we can take either $\sigma_1 = [42
: 0]$ and $\sigma_2 = [\,]$, or $\sigma_1 = [\,]$ and $\sigma_2 = [42 : 0]$.

A {\bf\em specified open program} is of the form $\Gamma \vdash \Cc$ or $\cL :
\Gamma$. In the former, the specification $\Gamma$ describes all the methods
that $\Cc$ may call.  In the latter, $\Gamma$ provides specifications for the
methods in the open program that can be called by its external environment. In
both cases, $\Gamma$ specifies the type of another open program that can fill in
the hole in $\Cc$ or $\cL$. When we are not sure which form a program has, we
write $\Gamma \vdash \cO: \Gamma'$. In this case, if $\cO$ does not have a
client, then $\Gamma$ is empty; if $\cO$ does not have a library, then $\Gamma'$
is empty; and if $\cO$ is complete, then both $\Gamma$ and $\Gamma'$ are empty.
For specified open programs 
$$
\Gamma \vdash \Cc \;=\; 
\Gamma \vdash {\sf let}\ [-]\ {\sf in}\ C_1\parallel\ldots\parallel C_n
$$
and
$$
\cL : \Gamma 
$$ 
agreeing on the specification $\Gamma$ of library methods, 
we denote by $\Cc(\cL)$ the complete program 
$$
{\sf let}\ \cL\ {\sf in}\ C_1\parallel\ldots\parallel C_n.
$$

\subsection{Primitive Commands\label{sec:prim}} 

We now discuss primitive commands in more detail. Consider the set
$\power{\Sigma} \cup \{\top\}$ of subsets of $\Sigma$ with a special element
$\top$ used to denote an error state, resulting, e.g., from dereferencing an
invalid pointer. We extend the $*$ operation on $\power{\Sigma}$ to
$\power{\Sigma}\cup \{\top\}$ by letting $\top*p=p*\top=\top*\top = \top$ for
all $p\in\power{\Sigma}$.

We assume an interpretation of every primitive command $c \in \PComm$ as a
transformer $f^t_c:\Sigma \to (\power{\Sigma} \cup \{\top\})$, which maps
pre-states to states obtained when thread $t \in\ThreadID$ executes $c$ from a
pre-state. The fact that our transformers are parameterised by $t$ allows atomic
accesses to areas of memory indexed by thread identifiers. This idealisation
simplifies the setting in that it lets us do without special thread-local or
method-local storage for passing method parameters and return values.

Some typical primitive commands are:
$$
{\sf skip},
\qquad
[E] = E',
\qquad
{\sf assume}(E), 
$$
where expressions $E$ are defined as follows:
$$
E \;::=\;
\mathbb{Z} 
\mid
\thread 
\mid
[E] 
\mid
 E+E 
\mid
{-}E 
\mid
{!E} 
\mid
\ldots
$$
Here ${\sf tid}$ refers to the identifier of the thread executing the command,
$[E]$ returns the contents of the address $E$ in memory, and $!E$ is the C-style
negation of an expression $E$---it returns $1$ when $E$ evaluates to $0$, and
$0$ otherwise. The $\assume(E)$ command filters out all the input states where $E$
evaluates to $0$. Hence, after $\assume(E)$ is executed, $E$ always has
a non-zero value. Using it, the standard commands for conditionals and loops 
can be defined in our language
as follows:
\begin{gather}
(\mathsf{if}\ E\ \mathsf{then}\ C_{1}\ \mathsf{else}\ C_{2})
\ =\ (\assume(E);C_{1}) + (\assume(!E);C_{2}), \label{if-assm}
\\
(\mathsf{while}\ E\ \mathsf{do}\ C)
\ =\ 
(\assume(E);C)^*;\, \assume(!E).\notag
\end{gather}

\noindent For the above commands $c$ and $t\in\ThreadID$, we define $f^t_c: \RAM \to
\power{\RAM} \cup \{\top\}$ using the transition relation $\leadsto_t: \RAM
\times (\RAM \cup \{\top\})$ in Figure~\ref{fig:transfer}:
$$
f_c^t(\sigma)
\ =\ 
\mbox{if ($\sigma, c \leadsto_t \top$) then }
\top \mbox{ else } \bigcup\big\{ \sigma' \mid 
\sigma, c \leadsto_t  \sigma' \big\}.
$$
In the figure, $\db{E}_{\sigma, t} \in \Val \cup \{\top\}$ denotes the result of
evaluating the expression $E$ in the state $\sigma$ with the current thread
identifier $t$. When this evaluation dereferences illegal memory addresses, it
results in the error value $\top$.  We define $f^t_c: \RAM_\pi \to
\power{\RAM_\pi} \cup \{\top\}$ similarly, but using the transition relation
$\leadsto_t: \RAM_\pi \times (\RAM_\pi \cup \{\top\})$ in
Figure~\ref{fig:transfer-permission1}. The transformers formalise the semantics
of permissions explained in Section~\ref{sec:prelim}: permissions less than $1$
allow reading, and the full permission $1$ additionally allows writing. Note
that $\assume$ yields an empty set of post-states when its condition evaluates
to zero, leading to the program getting stuck. Thus, even though, when executing
the {\sf if} statement~(\ref{if-assm}), both branches of the non-deterministic
choice will be explored, only the branch where the assume condition evaluates to
true will proceed further.

\begin{figure}[t]
$$
\begin{array}{@{}l@{\quad}l@{\quad}l@{\qquad}l@{}}
\sigma, {\sf skip}
& \leadsto_t & 
\sigma
\\[2pt]
\sigma, [E] = E'
& \leadsto_t &
\sigma{[\db{E}_{\sigma, t} : \db{E'}_{\sigma,t}]},
&
\mbox{if}\
\db{E}_{\sigma,t} \in \dom(\sigma), \db{E'}_{\sigma,t}\in \Val
\\[2pt]
\sigma, {[E]} = E'
& \leadsto_t & 
\top,
&
\mbox{if}\ 
\db{E}_{\sigma,t} \not\in \dom(\sigma)
\mbox{ or } \db{E'}_{\sigma,t} = \top
\\[2pt]
\sigma, \assume(E)
& \leadsto_t & 
\sigma,
&
\mbox{if}\
\db{E}_{\sigma, t} \in \Val - \{0\}
\\[2pt]
\sigma, \assume(E)
& \leadsto_t &
\top,
&
\mbox{if}\
\db{E}_{\sigma, t} = \top
\end{array}
$$
\caption{\label{fig:transfer}
Transition relation for sample primitive commands in
the $\RAM$ model.
The result $\top$ indicates that the command faults.}
\end{figure}
\begin{figure}[t]
$$
\begin{array}{@{}l@{\quad}l@{\quad}l@{\quad \ \ }l@{}}
\sigma, {\sf skip}
& \leadsto_t & 
\sigma
\\[2pt]
\sigma, {[E]} = E'
& \leadsto_t &
\sigma{[\db{E}_{\sigma, t} : (\db{E'}_{\sigma,t},1)]},
&
\mbox{if}\ \sigma(\db{E}_{\sigma,t})  = (\_, 1),
\db{E'}_{\sigma,t} \in \Val
\\[2pt]
\sigma, {[E]} = E'
& \leadsto_t & 
\top,
&
\mbox{if the above condition does not hold}
\\[2pt]
\sigma, \assume(E)
& \leadsto_t & 
\sigma,
&
\mbox{if}\
\db{E}_{\sigma, t} \in \Val - \{0\} 
\\[2pt]
\sigma, \assume(E)
& \leadsto_t &
\top,
&
\mbox{if}\
\db{E}_{\sigma, t} = \top
\end{array}
$$
\caption{\label{fig:transfer-permission1}
Transition relation for sample primitive commands
in the $\RAM_\pi$ model. The evaluation of expressions $\db{E}_{\sigma, t}$ ignores
permissions in $\sigma$.}
\end{figure}

For our results to hold, we need to place some restrictions on the transformers
$f_c^t$ for every primitive command $c \in \PComm$ and thread $t \in \ThreadID$:
\begin{description}
\item[Footprint Preservation] 
$\forall \sigma, \sigma'\in \Sigma.\, \sigma' \in f^t_c(\sigma) \implies
\delta(\sigma')=\delta(\sigma)$. 

\item[Strong Locality]
$\!\forall \sigma_1, \sigma_2 \,{\in}\, \Sigma.\, 
{(\sigma_1\,{*}\,\sigma_2)\fdef} \wedge f^t_c(\sigma_1) \,{\not=}\, \top {\implies}
f^t_c(\sigma_1\,{*}\,\sigma_2) \,{=}\, f^t_c(\sigma_1)\,{*}\,\{\sigma_2\}$.
\end{description}
Footprint Preservation prohibits primitive commands from allocating or
deallocating memory. This does not pose a problem, since in the context of
linearizability, an allocator is just another library and should be treated as such.
The Strong Locality of $f_c^t$ says that, if a command $c$ can be safely executed from a state $\sigma_1$, then when
executed from a bigger state $\sigma_1*\sigma_2$, it does not change the
additional state $\sigma_2$ and its effect depends only on the state $\sigma_1$
and not on the additional state $\sigma_2$. 

The Strong Locality is a strengthening of the locality property in separation
logic~\cite{asl}:
$$
\forall \sigma_1, \sigma_2 \in \Sigma.\, 
{(\sigma_1 * \sigma_2)\fdef} \wedge {f^t_c(\sigma_1) \not= \top} \implies
f^t_c(\sigma_1 * \sigma_2) \subseteq f^t_c(\sigma_1) * \{\sigma_2\}.
$$
Locality rules out commands that can check if a cell is allocated in the heap
other than by trying to access it and faulting if it is not allocated. For
example, let $\Sigma=\RAM$ and consider the following transformer $f^t:
\RAM\rightarrow \power{\RAM}\cup\{\top\}$:
$$
f^t(\sigma) \ =\ 
\text{if $\sigma(1)\fdef$ then } \{\sigma[1 : 0]\} \mbox{ else }
\{\sigma\}.
$$
The transformer $f^t$ defines the denotation of a `command'
that writes $0$ to the cell at the address $1$ if it is allocated and acts as
a no-op if it is not. This violates locality. Indeed, 
take $\sigma_1=[\,]$ and $\sigma_2=[1 : 1]$. Then
$$
f^t(\sigma_1 * \sigma_2)= f^t([1:1]) =\{[1 : 0]\}
$$ 
and
$$
f^t(\sigma_1) * \{\sigma_2\}=f^t([\,]) * \{[1:1]\}=
\{[\,]\} * \{[1:1]\}=\{[1:1]\}.
$$
Hence, $f^t(\sigma_1* \sigma_2) \subseteq f^t(\sigma_1) * \{\sigma_2\}$ does not
hold.

While locality prohibits the command from changing the additional state, it
permits the effect of the command to depend on this state~\cite{blame}. 
The Strong Locality forbids such dependencies. To see this,
consider another `command' defined by the following transformer $f^t:
\RAM\rightarrow \power{\RAM}\cup\{\top\}$:
$$
\begin{array}{@{}l@{\ }l@{}}
f^t(\sigma)  \ =\ & 
\mbox{if $\sigma(1)\fundef$ then } \top 
\\[2pt]
&\mbox{else if  $({\sigma(2)\fdef})$ then $\{\sigma[1 :  0]\}$}
\\[2pt]
%
%
&\mbox{else $\{\sigma[1 : 0], \sigma[1 : 1]\}$}.
\end{array}
$$
The command does not access the cell at the address $2$, since it does not fault
if the cell is not allocated. However, when the cell is allocated, the effect of
the command depends on its value. It is easy to check that $f^t$ is
local. However, it is not strongly local, since for $\sigma_1 = [1 : 0]$ and
$\sigma_2 = [2 : 0]$, we have
$$
f^t(\sigma_1 * \sigma_2) = f^t([1 : 0, 2 : 0]) = \{[1: 0, 2:0]\}
$$
and
$$
f^t(\sigma_1)  * \{\sigma_2\} = f^t([1 : 0]) * \{[2:0]\} = \{[1:0], [1:1]\} *
\{[2:0]\} = \{[1:0, 2:0], [1:1, 2:0]\},
$$
so that $f^t(\sigma_1* \sigma_2) = f^t(\sigma_1) * \{\sigma_2\}$ does not
hold.
The property of Strong Locality subsumes the one of contents independence used
in situations similar to ours in previous work on data refinement in a
sequential setting~\cite{blame}.

The transformers for standard commands, except memory (de)allocation, satisfy
the conditions of Footprint Preservation and Strong Locality.

\section{Client-Local and Library-Local Semantics\label{sec:semantics}}

We now give the semantics to complete and open programs. In the latter case, we
define component-local semantics that include all behaviours of an open program
under any environment satisfying the specification associated with it. In
Section~\ref{sec:refinement}, we use these to lift linearizability to libraries
and formulate the Abstraction Theorem.

Programs in our semantics denote sets of {\em traces}, recording every step in a
computation. These include both internal actions by program components and calls
and returns. We define program semantics in two stages. First, given a program,
we generate the set of the possible execution traces of the program. This is
done solely based on the structure of its statements, without taking into
account restrictions arising from the semantics of primitive commands or
ownership transfers.
%
%
The next step filters out traces that are not consistent with the above
restrictions using a trace evaluation process and, for open programs, annotates
calls and returns appropriately.

\subsection{Traces\label{sec:traces}}  

Traces consist of {\em actions}, which include primitive commands performed
internally by a component and calls or returns, possibly annotated with states.
Thus actions, include all interface actions $\psi$ from Definition~\ref{def:intact}.
\begin{defi}
The set of {\bf \em actions} is defined as follows: 
$$
\varphi  \in \Act \ ::= \ 
\psi \mid
(t, c) \mid
 (t, \call\ m) \mid 
(t,\ret\ m),
$$
where $t\in\ThreadID$, $m\in\MethodName$ and $c \in \PComm$.
\end{defi}
\begin{defi}
  A {\bf \em trace} $\tau$ is a finite sequence of actions such that for every
  thread $t$, the projection of $\tau$ to $t$'s call and return actions is a
  sequence of alternating call and return actions over matching methods that
  starts from a call action.
\end{defi}

We classify actions in a trace as those performed by the client and the library
based on whether they happen inside a method.
\begin{defi}
For a trace $\tau$ and an index $i \in \{1,\ldots,|\tau|\}$, an action 
$\tau(i)$ is a {\bf \em client action} if $\tau(i) = (t,c)$ for
some thread $t$ and a primitive command $c$ and 
$$
\forall j.\, j < i \wedge \tau(j) = (t, \call\ \_) \implies 
\exists k.\, j < k < i \wedge \tau(k) = (t, \ret\ \_).
$$
An action $\tau(i)$ is a {\bf \em library action} if $\tau(i) = (t,c)$ for
some $t$ and $c$ but $\tau(i)$ is not a client action, that is,
$$
\exists j.\, j < i \wedge \tau(j) = (t, \call\, \_) \wedge 
\neg\exists k.\, j < k < i \wedge \tau(k) = (t, \ret\, \_).
$$
A trace is a {\bf \em client trace}, if all of its actions of the form $(t,c)$
are client actions; it is a {\bf \em library trace}, if they all of them are
library actions.
\end{defi}
In the following, $\kappa$ denotes client traces, $\lambda, \zeta, \alpha,
\beta$ library traces, and $\tau$ arbitrary ones. We write $\client(\tau)$ for
the projection of $\tau$ to client, call and return actions, $\lib(\tau)$ for
that to library, call and return actions, and $\history(\tau)$ for that to call
and return actions.

%

\subsection{Trace Sets\label{sec:tsets}}

Consider a program $\Gamma \vdash \cO : \Gamma'$ and let
$M\subseteq \MethodName$ be the set of methods implemented by its library or
called by its client. We define the trace set $\dbp{\Gamma \vdash \cO : \Gamma'} \in
\power{\WTrace}$ of $\cO$ in Figure~\ref{fig:trace}.  We first define the trace
set $\dbp{C}_t\eta$ of a command $C$, parameterised by the identifier $t$ of
the thread executing it and a mapping $\eta \in M
\times \ThreadID \to \power{\WTrace}$ giving the trace set of the body of every
method that $C$ can call when executed by a given thread.  The trace set of a
client $\dbp{C_1\parallel\ldots\parallel C_n} \eta$ is obtained by
interleaving traces of its threads.
\begin{figure}[t]
$$
\begin{array}{@{}c@{}}
\begin{array}{@{}r@{}c@{}l@{}}
\dbp{c}_{t}\eta & {} = {} & \{(t, c)\}
\\[2pt]
\dbp{C_1 + C_2}_t \eta & = & {\dbp{C_1}_t \eta} \cup {\dbp{C_2}_t \eta}
\\[2pt]
\dbp{C^*}_{t}\eta & = & \big(\dbp{C}_t \eta\big)^*
\\[2pt]
\dbp{m}_t \eta & = & 
\{(t, \call\ m)\, \tau\, (t, \ret\ m) \mid \tau \in \eta(m,t)\}
\\[2pt]
\dbp{C_1; C_2}_t \eta & = & 
\{\tau_1 \tau_2 \mid \tau_1 \in \dbp{C_1}_t \eta \wedge 
\tau_2 \in \dbp{C_2}_t \eta\}
\\[2pt]
\dbp{C_1 \parallel \ldots \parallel C_n} \eta & = & 
\bigcup\{\tau_1 \parallel \ldots \parallel \tau_n \mid \forall t \in \{ 1,\ldots,n\}.\, \tau_t \in \dbp{C_t}_t \eta\}
\end{array}
\\
\\
\begin{array}{@{}r@{}c@{}l@{}}
\dbp{ {\sf let}\ \{m = C_m \mid m \in M\}\ {\sf in}\ 
C_1 \parallel\ldots \parallel C_n} & = &
\prefix(\dbp{C_1 \parallel\ldots \parallel C_n} (\mylambda (m,t).\, \dbp{C_m}_t (\_)))
\\[2pt]
\dbp{\Gamma : {\sf let}\ [-]\ {\sf in}\ C_1\parallel\ldots\parallel C_n} & {} = {} &
\prefix(\dbp{C_1\parallel\ldots\parallel C_n} 
(\mylambda (m,t).\, \{\varepsilon\}))
\\[2pt]
\dbp{\{m = C_m \mid m \in \{m_1,\ldots,m_j\}\} : \Gamma} & = & {} 
\\[2pt]
\multicolumn{3}{@{}l@{}}{
\hfill
\begin{array}{@{}r@{}}
\prefix(\bigcup\nolimits_{k\ge 1} \dbp{
C_\cmgc \parallel \ldots (\mbox{$k$ times}) \ldots
\parallel C_\cmgc}(\mylambda (m,t).\, \dbp{C_m}_t (\_)))
\\[2pt]
(\mbox{where } C_\cmgc = (m_1 + \ldots + m_j)^*)
\end{array}
}
\end{array}
\end{array}
$$
\caption{Trace sets of commands and programs. Here $\prefix(T)$ is the prefix
  closure of $T$ and $\tau \in \tau_1 \parallel \ldots \parallel \tau_n$ if and
  only if every action in $\tau$ is done by a thread $t\in\{1,\ldots, n\}$ and
  for all such $t$, we have $\tau|_t = \tau_t$.  We use $\mylambda$ for
  functions, in contrast to $\lambda$ for library traces.  }
\label{fig:trace}
\end{figure}

The trace set $\dbp{\Cc(\cL)}$ of a complete program is that of its client
computed with respect to a mapping $\mylambda (m,t).\, \dbp{C_m}_t (\_)$
associating every method $m$ with the trace set of its body $C_m$. Since we
prohibit nested method calls, $\dbp{C_m}_t\eta$ does not depend on $\eta$. We
prefix-close the resulting trace set to take into account incomplete
executions. In particular, this allows the thread scheduler to be unfair: a
thread can be preempted and never scheduled again.

A program $\Gamma \vdash \Cc$ generates client traces $\dbp{\Gamma \vdash \Cc}$,
which do not include internal library actions. This is achieved by associating
an empty trace with every library method. Finally, a program $\cL : \Gamma'$
generates all possible library traces $\dbp{\cL : \Gamma'}$. This is achieved by
running the library under its {\em most general client}, where every thread
executes an infinite loop, repeatedly invoking arbitrary library methods.

\subsection{Evaluation\label{sec:eval}}  

\begin{figure}
\leftline{$\db{\Gamma \vdash \cO : \Gamma' } : \Sigma \to (2^\Trace \cup \{\top\})$:}
\vspace{5pt}
\centerline{$
\begin{array}{@{}r@{\,}c@{\,}l@{}}
\db{\Gamma \vdash \cO : \Gamma' }\sigma & {} = {} &
\mbox{if } \exists \tau \in \dbp{\Gamma \vdash
  \cO : \Gamma'}.\, \db{\Gamma \vdash \tau : \Gamma'}\sigma = \top 
\mbox{ then } \top 
\\[2pt]
&& \mbox{else } 
\{\tau' \mid \exists \tau \in \dbp{\Gamma \vdash
  \cO : \Gamma'}.\,  (\_, \tau') \in 
\db{\Gamma \vdash \tau : \Gamma'}\sigma\}
\end{array}
$}

\bigskip

\leftline{$\db{\Gamma \vdash \tau : \Gamma'} : \Sigma \rightarrow 
(\power{\Sigma \times \Trace} \cup \{\top\})$:}
\vspace{5pt}
\centerline{$
\begin{array}{@{}r@{\,}c@{\,}l@{}}
\db{\Gamma \vdash \varepsilon : \Gamma'}\sigma & {} = {}& \{(\sigma,\varepsilon)\}
\\[2pt]
\db{\Gamma \vdash  \tau\varphi : \Gamma'}\sigma & {} = {} &
\begin{array}[t]{@{}l@{}}
\mbox{if $(\db{\Gamma \vdash  \tau : \Gamma'}\sigma = \top)$ then $\top$}
\\[2pt]
\mbox{else if $(\exists (\sigma',\_) \in \db{\Gamma \vdash  \tau : \Gamma'}\sigma.\,
\db{\Gamma \vdash \varphi : \Gamma'}\sigma' = \top)$ then $\top$} 
\\[2pt]
\mbox{else $\{(\sigma'',\tau'\varphi') \mid\exists
  \sigma'.\,(\sigma',\tau') \in \db{\Gamma \vdash \tau : \Gamma'}\sigma \wedge
  (\sigma'',\varphi') \in \db{\Gamma \vdash \varphi : \Gamma'}\sigma'\}$}
\end{array}
\end{array}
$}

\bigskip

\leftline{$\db{\Gamma \vdash \varphi : \Gamma'} : \Sigma 
\rightarrow (\power{\Sigma \times \Act} \cup \{\top\})$:}
\vspace{5pt}
\centerline{$
\begin{array}{@{}r@{\,}c@{\,}l@{}}
\db{\Gamma \vdash (t,c) : \Gamma'}\sigma 
& {} =  {}&
\mbox{if } (f^t_c(\sigma) = \top) \mbox{ then } \top \mbox{ else } 
\{(\sigma',(t,c)) \mid \sigma' \in f^t_c(\sigma)\}
\\[2pt]
\db{(t,\call\ m)}\sigma & {} = {}& \{(\sigma, (t,\call\ m))\}
\\[2pt]
\db{(t,\ret\ m)}\sigma & {} = {}& \{(\sigma, (t,\ret\ m))\}
\\[2pt]
\db{ (t,\call\ m) : (\{p\}\ {m}\ \{q\}), \Gamma'}\sigma
& {} = {}& \{(\sigma * \sigma_p, (t,\call\ m(\sigma_p))) \mid \sigma_p \in p_t 
\wedge (\sigma * \sigma_p)\fdef \}
\\[2pt]
\db{(t,\ret\ m) : (\{p\}\ {m}\ \{q\}), \Gamma'}\sigma & {} = {}&
\mbox{if } (\sigma \diff q_t)\fundef \mbox{ then } \top
\mbox{ else } \{(\sigma \diff q_t, (t,\ret\ m(\sigma \diff {(\sigma \diff q_t)})))\}
\\[2pt]
\db{ (\{p\}\ {m}\ \{q\}), \Gamma \vdash (t,\call\ m)}\sigma & {} = {}&
\mbox{if } (\sigma \diff p_t)\fundef \mbox{ then } \top
\mbox{ else } \{(\sigma \diff p_t, (t,\call\ m(\sigma \diff {(\sigma \diff p_t)})))\}
\\[2pt]
\db{ (\{p\}\ {m}\ \{q\}), \Gamma \vdash (t,\ret\ m)}\sigma
& {} = {}& \{(\sigma * \sigma_q, (t,\ret\ m(\sigma_q))) \mid \sigma_q \in q_t \wedge (\sigma * \sigma_q)\fdef \}
\end{array}
$}
\caption{Semantics of programs}\label{fig:eval}
\end{figure}

The set of traces generated using $\dbp{\cdot}$ may include those not consistent
with the semantics of primitive commands or expected ownership transfers. We
therefore define the meaning of a program $\db{\Gamma \vdash \cO : \Gamma'} \in
\State \to (\power{\WTrace} \cup \{\top\})$ by evaluating every trace in
$\dbp{\Gamma \vdash \cO : \Gamma'}$ from a given initial state to determine
whether it is feasible. For open programs, this process also annotates calls and
returns in a trace with states transferred. 

The formal definition of $\db{\Gamma \vdash \cO : \Gamma'}$ is given in
Figure~\ref{fig:eval} with the aid of a trace evaluation function $\db{\Gamma
  \vdash \tau : \Gamma'} : \Sigma \rightarrow \power{\Sigma \times \Trace} \cup
\{\top\}$. Given an initial state, this either yields multiple final states and
annotated traces, or fails and produces $\top$. If the resulting set of
state-trace pairs is empty, then the trace is infeasible and is discarded.  If
the evaluation produces $\top$ on any trace from $\dbp{\Gamma \vdash \cO :
  \Gamma'}$, then the program has no semantics for the given initial state and
its denotation is defined to be $\top$. The evaluation of $\tau$ is defined
inductively on its length using a function $\db{\Gamma \vdash \varphi : \Gamma'}
: \Sigma \rightarrow \power{\Sigma \times \Act} \cup \{\top\}$ that evaluates a
single action $\varphi$. We explain this function by considering separately the
cases of a complete program, open program with a library and open program with a
client.

The evaluation $\db{\varphi}$ for an action in a complete program has the
standard semantics, with the effects of primitive commands computed using
their transformers from Section~\ref{sec:prog}. In this case, calls and returns
are left unannotated, since no ownership transfers to or from the external
environment are performed.

The function $\db{\varphi : \Gamma'}$ gives a {\bf \em library-local} semantics
to the program $\cL:\Gamma'$, in the sense that it generates library traces
under any client respecting $\Gamma'$. When a method $m$ from $\Gamma'$ is
called by thread $t$, the library receives the ownership of any state consistent
with the method precondition $p_t$. This state has to be compatible with that of
the library.  After the method returns, the library has to give up the piece of
state satisfying its postcondition. Since $q_t$ is precise, this piece of state
is determined uniquely. The evaluation faults if the state to be transferred is
not available; thus, a library has no semantics if it violates the contract with
its client given by $\Gamma'$. This also ensures that the histories produced by
a library are balanced.
\begin{prop}\label{safe2wb}
  If $\lambda \in \db{\cL : \Gamma'}\sigma$, then $\history(\lambda)$ is balanced from
  $\delta(\sigma)$.
\end{prop}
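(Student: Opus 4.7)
The plan is to prove the proposition by strengthening the statement and proceeding by induction on the length of the underlying syntactic trace. Specifically, I would show the following invariant: for every $\tau \in \dbp{\cL : \Gamma'}$ such that $\db{\cL : \Gamma' \vdash \tau}\sigma \neq \top$ and every $(\sigma',\tau') \in \db{\cL : \Gamma' \vdash \tau}\sigma$, the footprint function $\dbf{\history(\tau')}$ is defined at $\delta(\sigma)$ and
\[
\dbf{\history(\tau')}(\delta(\sigma)) \;=\; \delta(\sigma').
\]
The proposition is then immediate: if $\lambda \in \db{\cL : \Gamma'}\sigma$, then $\db{\cL : \Gamma'}\sigma \neq \top$ (so no trace in $\dbp{\cL : \Gamma'}$ faults), and there is some $\tau$ and some final state $\sigma'$ with $(\sigma', \lambda) \in \db{\cL : \Gamma' \vdash \tau}\sigma$, so $\dbf{\history(\lambda)}(\delta(\sigma))$ is defined, i.e.\ $\history(\lambda)$ is balanced from $\delta(\sigma)$.

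The base case $\tau = \varepsilon$ is trivial since $\tau' = \varepsilon$, $\sigma' = \sigma$, and $\dbf{\varepsilon}(\delta(\sigma)) = \delta(\sigma)$. For the inductive step, write $\tau = \tau_0\varphi$. By the definition of $\db{\Gamma \vdash \tau : \Gamma'}$ in Figure~\ref{fig:eval}, any $(\sigma',\tau') \in \db{\cL : \Gamma' \vdash \tau_0\varphi}\sigma$ decomposes as $\tau' = \tau'_0\varphi'$ with $(\sigma_0,\tau'_0) \in \db{\cL : \Gamma' \vdash \tau_0}\sigma$ and $(\sigma',\varphi') \in \db{\varphi : \Gamma'}\sigma_0$. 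The induction hypothesis gives $\dbf{\history(\tau'_0)}(\delta(\sigma)) = \delta(\sigma_0)$. I would then case-split on $\varphi$. If $\varphi = (t,c)$ is a primitive command, then $\sigma' \in f^t_c(\sigma_0)$ and Footprint Preservation gives $\delta(\sigma') = \delta(\sigma_0)$; since $(t,c)$ contributes no interface action, $\history(\tau'_0\varphi') = \history(\tau'_0)$ and we are done. If $\varphi = (t,\call\ m)$, then $\varphi' = (t,\call\ m(\sigma_p))$ with $\sigma_p \in p_t$ and $\sigma' = \sigma_0 * \sigma_p$ defined, so by the definition of $\circ$ and the $\dbf{\cdot}$ transformer, $\dbf{\history(\tau'_0)(t,\call\ m(\sigma_p))}(\delta(\sigma)) = \delta(\sigma_0) \circ \delta(\sigma_p) = \delta(\sigma_0 * \sigma_p) = \delta(\sigma')$. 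The return case is symmetric: because $\db{\cdot}\sigma \neq \top$, $\sigma_0 \diff q_t$ is defined; setting $\sigma_q = \sigma_0 \diff (\sigma_0 \diff q_t)$, we have $\sigma_0 = \sigma' * \sigma_q$, and the definition of $\fdiff$ together with the inductive hypothesis yields $\dbf{\history(\tau'_0)(t,\ret\ m(\sigma_q))}(\delta(\sigma)) = \delta(\sigma_0) \fdiff \delta(\sigma_q) = \delta(\sigma')$.

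The bookkeeping is routine; the only mildly delicate point is to carefully match the state-level operations $*$ and $\diff$ appearing in the library-local semantics with the footprint-level operations $\circ$ and $\fdiff$ in the definition of $\dbf{\cdot}$. This is where the well-definedness propositions for $\circ$ and $\fdiff$ together with the $*$-cancellativity on footprints play their role: they guarantee that whenever the state-level operation used in the evaluation step is defined, the corresponding footprint operation is defined and produces exactly $\delta$ of the resulting state. The one place where the absence of $\top$ in $\db{\cL : \Gamma'}\sigma$ is genuinely used is the return case, where it ensures that $\sigma_0 \diff q_t$ exists; this is precisely the ``contract'' ensuring that a library does not return a piece of state it does not own, which is the semantic content of balancedness.
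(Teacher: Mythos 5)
The paper states Proposition~\ref{safe2wb} without proof, presenting it as a direct consequence of how the library-local evaluation is defined, so there is nothing to compare against; judged on its own, your argument is correct. Strengthening the claim to the exact invariant $\dbf{\history(\tau')}(\delta(\sigma)) = \delta(\sigma')$ and inducting on the trace is exactly the right move, and your three cases line up precisely with the clauses of $\db{\varphi : \Gamma'}$ in Figure~\ref{fig:eval}: Footprint Preservation handles primitive commands, the definedness of $\sigma_0 * \sigma_p$ gives definedness of $\delta(\sigma_0) \circ \delta(\sigma_p)$ at calls, and the non-faulting of the return clause gives the decomposition $\sigma_0 = \sigma' * \sigma_q$ needed for $\fdiff$ (with well-definedness of $\fdiff$ resting on the standing assumption that $*$ is cancellative on footprints). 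The only blemish is notational: the trace evaluation should be written $\db{\tau : \Gamma'}$ rather than $\db{\cL : \Gamma' \vdash \tau}$.
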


The function $\db{\Gamma \vdash \varphi}$ gives a {\bf \em client-local}
semantics to $\Gamma \vdash \Cc$, in the sense that it generates traces of this
client assuming any behaviour of the library consistent with $\Gamma$. When a
thread $t$ calls a method $m$ in $\Gamma$, it transfers the ownership of a piece
of state satisfying the method precondition $p_t$ to the library being called.
As before, this piece is defined uniquely, because preconditions are precise.
When such a piece of state is not available, the evaluation faults. This ensures
that client respects the method specifications of the libraries it uses. When
the method returns, the client receives the ownership of an arbitrary piece of
state satisfying its postcondition $q_t$, compatible with the current state of
the client.

\subsection{Connection between Local and Global Semantics\label{sec:decomp}}
We now formulate a lemma, used in the proof of the Abstraction Theorem
(Section~\ref{sec:refinement}), that states the connection between the
library-local and client-local semantics on one side and the semantics of
complete programs on the other. We start by introducing some auxiliary
definitions.

\begin{defi}
  A program $\Gamma \vdash \cO : \Gamma'$ is {\bf \em safe} at $\sigma$, if
  $\db{\Gamma \vdash \cO : \Gamma'}\sigma \not= \top$; $\cO$ is safe for
  $\cI\subseteq \Sigma$, if it is safe at $\sigma$ for all $\sigma \in \cI$.
\end{defi}

For a set of initial states $\cI\subseteq\Sigma$, let
$$
\db{(\Gamma \vdash \cO : \Gamma'),\cI}=\{(\sigma, \tau) \mid
\sigma\in\cI \wedge \tau\in\db{\Gamma \vdash \cO : \Gamma'}\sigma\}.
$$
We define an operator $\otimes : \power{\Sigma \times \WTrace} \times
\power{\Sigma \times \WTrace} \rightarrow \power{\Sigma \times \WTrace}$
combining the resulting sets $X$ and $Y$ of state-trace pairs produced by 
the client-local and library-local semantics into a set corresponding to the
complete program:
$$
X \otimes Y 
= \{(\sigma * \sigma', \tau) \,\mid\, 
\exists \kappa,\lambda.\,
(\sigma,\kappa) \in X \wedge (\sigma',\lambda) \in Y
\wedge {(\sigma * \sigma')\fdef} \wedge \cover(\tau,\kappa,\lambda)\},
$$
where
$$
\cover(\tau,\kappa,\lambda)
\iff
\history(\kappa) = \history(\lambda) \wedge
\client(\tau) = \erase(\kappa) \wedge
\lib(\tau) = \erase(\lambda)
$$
and $\erase$ is a function on traces that erases the state annotations from
their interface actions.
\begin{lem}
\label{prop:sem:local-global}
Assume $\Gamma \vdash \Cc$ and $\cL : \Gamma$ safe for $\cI_0$ and $\cI_1$,
respectively. Then $\Cc(\cL)$ is safe for $\cI_0*\cI_1$ and
$$
\db{\fillin{\cL}{\Cc}, \cI_0 * \cI_1} = \db{\Gamma \vdash \Cc,\cI_0} \otimes \db{\cL : \Gamma,\cI_1}.
$$
\end{lem}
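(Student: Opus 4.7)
The plan is to prove both claims---safety of $\fillin{\cL}{\Cc}$ on $\cI_0\ast\cI_1$ and the equality of state-trace sets---simultaneously via a \emph{single-trace decomposition lemma}: for every $\tau \in \dbp{\fillin{\cL}{\Cc}}$, $\sigma_0\in\cI_0$ and $\sigma_1\in\cI_1$, there exist annotated traces $\kappa,\lambda$ with $\history(\kappa)=\history(\lambda)$, $\erase(\kappa) = \client(\tau)$ and $\erase(\lambda) = \lib(\tau)$, together with states $\sigma_0',\sigma_1'$, such that the client-local and library-local evaluations produce $(\sigma_0',\kappa)$ from $\sigma_0$ and $(\sigma_1',\lambda)$ from $\sigma_1$ respectively, while the complete-program evaluation of $\tau$ from $\sigma_0\ast\sigma_1$ produces $(\sigma_0'\ast\sigma_1',\tau)$. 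Both statements of the lemma follow from this decomposition together with the definition of $\otimes$ and $\cover$.

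The decomposition lemma is proved by induction on $|\tau|$, case-analysing the last action. For a client primitive action $(t,c)$, Strong Locality gives $f^t_c(\sigma_0\ast\sigma_1) = f^t_c(\sigma_0)\ast\{\sigma_1\}$ when the left-hand side is non-$\top$; contrapositively, a fault on the combined side forces one on the client-local side and is ruled out by safety of $\Gamma \vdash \Cc$, and Footprint Preservation keeps $\sigma_1$'s footprint intact so that the induction hypothesis applies on the library side. Library primitive actions are symmetric using safety of $\cL : \Gamma$. For a call $(t,\call\ m)$, the complete-program step leaves the state intact while the decomposition transfers a substate $\sigma_p \in p_t$ from client to library; preciseness of $p_t$ determines $\sigma_p$ uniquely from $\sigma_0$, so that both local evaluations annotate the action with the \emph{same} $\sigma_p$, and associativity of $\ast$ gives $(\sigma_0\diff\sigma_p)\ast(\sigma_1\ast\sigma_p)=\sigma_0\ast\sigma_1$. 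Returns are symmetric using preciseness of $q_t$. Safety of $\fillin{\cL}{\Cc}$ falls out of each step, and the forward inclusion $\db{\fillin{\cL}{\Cc},\cI_0\ast\cI_1}\subseteq\db{\Gamma\vdash\Cc,\cI_0}\otimes\db{\cL:\Gamma,\cI_1}$ is then immediate, with $\cover(\tau,\kappa,\lambda)$ certified by the lemma.

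For the reverse inclusion, given $(\sigma_0,\kappa)\in\db{\Gamma\vdash\Cc,\cI_0}$ and $(\sigma_1,\lambda)\in\db{\cL:\Gamma,\cI_1}$ with $\history(\kappa)=\history(\lambda)$, I would construct a trace $\tau\in\dbp{\fillin{\cL}{\Cc}}$ satisfying $\cover(\tau,\erase(\kappa),\erase(\lambda))$ and then replay the induction to verify $(\sigma_0\ast\sigma_1,\tau)\in\db{\fillin{\cL}{\Cc}}$. The main obstacle will be establishing the syntactic membership $\tau\in\dbp{\fillin{\cL}{\Cc}}$: the library trace $\lambda$ is produced under the most general client $C_\cmgc = (m_1+\ldots+m_j)^*$, whose call sequence is a priori unconstrained, whereas the complete program's trace set expands each method call $m$ within client traces into $(t,\call\ m)$, a segment from $\dbp{C_m}_t (\_)$, and $(t,\ret\ m)$. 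The shared history $\history(\kappa)=\history(\lambda)$, together with the per-thread alternation of calls and returns required by the definition of a trace, furnishes a canonical partition of $\lambda$ into maximal library-trace segments, one per call/return pair, which can then be spliced in place of the $m$ tokens in $\kappa$ to yield an admissible trace of $\fillin{\cL}{\Cc}$. Agreement of state annotations on matching call and return actions between $\kappa$ and $\lambda$ then follows from the same preciseness-plus-cancellativity argument used in the forward direction.
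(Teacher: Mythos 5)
Your proposal is correct and follows essentially the same route as the paper's proof: an induction on trace length with case analysis on the last action, using Strong Locality and Footprint Preservation for primitive commands and preciseness plus cancellativity of $*$ for calls and returns, together with a separate syntactic splicing argument for composing a library trace generated under the most general client with a client trace. The only difference is organizational---the paper factors the argument into three lemmas (single-action preservation, evaluation decomposition/composition, and syntactic trace-set decomposition/composition) where you fold the forward direction into one induction---but the substance is identical.
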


The lemma shows that the set of traces produced by $\Cc(\cL)$ can be obtained by
combining pairs of traces with the same history produced by $\Cc$ and $\cL$.
Note that, since the semantics of $\Cc(\cL)$ does not annotate calls and returns
with the states transferred, in $\cover$ we have to erase these annotations from
the local traces $\kappa$ or $\lambda$ before comparing the traces with
$\tau$. Unpacking the definition of $\otimes$ and using the fact that
$$
\forall \kappa,\lambda.\,
\history(\kappa) = \history(\lambda)
\implies
\exists \tau.\,\cover(\tau,\kappa,\lambda),
$$
from Lemma~\ref{prop:sem:local-global} we get the following two corollaries.
\begin{cor}[Decomposition]\label{lemma-clientlocal}
  Assume $\Gamma \vdash \Cc$ and $\cL : \Gamma$ safe for $\cI_0$ and $\cI_1$,
  respectively. Then $\Cc(\cL)$ is safe for $\cI_0*\cI_1$ and
\begin{multline*}
\forall (\sigma, \tau) \in \db{\fillin{\cL}{\Cc}, \cI_0 * \cI_1}.\,
\exists (\sigma_0,\kappa)\in \db{\Gamma \vdash \Cc,\cI_0}.\,
\exists (\sigma_1,\lambda) \in \db{\cL : \Gamma,\cI_1}.\, \\
\sigma= \sigma_0*\sigma_1 \wedge 
\cover(\tau,\kappa,\lambda).
\end{multline*}
\end{cor}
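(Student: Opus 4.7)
The plan is to derive Corollary~\ref{lemma-clientlocal} directly from Lemma~\ref{prop:sem:local-global} by unpacking the definition of the combination operator $\otimes$. The safety claim is already stated verbatim in the conclusion of the lemma, so no separate argument is needed for that part.

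For the existential decomposition, I would take an arbitrary $(\sigma, \tau) \in \db{\fillin{\cL}{\Cc}, \cI_0 * \cI_1}$. By the equality supplied by Lemma~\ref{prop:sem:local-global}, this pair also lies in $\db{\Gamma \vdash \Cc, \cI_0} \otimes \db{\cL : \Gamma, \cI_1}$. Unfolding the definition of $\otimes$ given just above the lemma, this membership is logically equivalent to the existence of states $\sigma_0, \sigma_1$ and traces $\kappa, \lambda$ satisfying $\sigma = \sigma_0 * \sigma_1$, $(\sigma_0, \kappa) \in \db{\Gamma \vdash \Cc, \cI_0}$, $(\sigma_1, \lambda) \in \db{\cL : \Gamma, \cI_1}$, and $\cover(\tau, \kappa, \lambda)$; the compatibility condition $(\sigma_0 * \sigma_1)\fdef$ appearing in the definition of $\otimes$ is automatic because $\sigma = \sigma_0 * \sigma_1$ is itself an element of $\Sigma$. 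This conjunction is precisely what the corollary asserts.

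There is essentially no obstacle at the level of this corollary: its non-trivial content is packaged inside Lemma~\ref{prop:sem:local-global}. The substantive direction being used here is the left-to-right inclusion $\db{\fillin{\cL}{\Cc}, \cI_0 * \cI_1} \subseteq \db{\Gamma \vdash \Cc, \cI_0} \otimes \db{\cL : \Gamma, \cI_1}$, whose proof must show that any global trace of a complete program can be split into a client-local and a library-local trace whose state annotations at calls and returns line up correctly --- this is where Strong Locality of primitive commands and preciseness of pre- and postconditions do real work. Once that inclusion is available inside the lemma, obtaining the corollary is purely a matter of rewriting set-builder notation.
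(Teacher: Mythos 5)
Your proposal is correct and matches the paper's own derivation: the paper obtains this corollary precisely by unpacking the definition of $\otimes$ in the equality supplied by Lemma~\ref{prop:sem:local-global}, with the safety claim carried over verbatim from the lemma. No gap.
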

\begin{cor}[Composition]\label{lemma-clientlocal2}
If
$\Gamma \vdash \Cc$ 
and 
$\cL : \Gamma$  are
safe for $\cI_0$ and $\cI_1$, respectively, then
\begin{multline*}
\forall (\sigma_1,\kappa) \in \db{\Gamma : \Cc,\cI_0}.\, 
\forall (\sigma_2,\lambda) \in \db{\cL : \Gamma,\cI_1}.\,
({(\sigma_0 * \sigma_1)\fdef} \wedge 
\history(\kappa) = \history(\lambda)) \implies {}
\\
\exists \tau.\, (\sigma_0 *\sigma_1, \tau) \in \db{\fillin{\cL}{\Cc}, \cI_0 *
  \cI_1} 
\wedge \cover(\tau,\kappa,\lambda).
\end{multline*}
\end{cor}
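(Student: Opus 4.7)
The plan is to obtain this corollary as an almost immediate consequence of Lemma~\ref{prop:sem:local-global} together with the auxiliary fact, already stated just above the corollary, that
$$
\forall \kappa,\lambda.\; \history(\kappa) = \history(\lambda) \implies \exists \tau.\; \cover(\tau,\kappa,\lambda).
$$
So there are essentially three steps, none of which involves going back to the inductive definition of $\db{\cdot}$.

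First, I would fix $(\sigma_0,\kappa) \in \db{\Gamma \vdash \Cc, \cI_0}$ and $(\sigma_1,\lambda) \in \db{\cL:\Gamma, \cI_1}$ satisfying the hypotheses $(\sigma_0 * \sigma_1)\fdef$ and $\history(\kappa) = \history(\lambda)$. (Note that the corollary's displayed formula uses $\sigma_1,\sigma_2$ in the quantifiers but $\sigma_0,\sigma_1$ in the conclusion; I would silently align the names with those of the Decomposition corollary, i.e., use $\sigma_0$ for the client component and $\sigma_1$ for the library component.) Applying the auxiliary fact to $\kappa$ and $\lambda$ produces some trace $\tau$ with $\cover(\tau,\kappa,\lambda)$.

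Second, I would unfold the definition of $\otimes$: by construction,
$$
(\sigma_0 * \sigma_1, \tau) \;\in\; \db{\Gamma \vdash \Cc, \cI_0} \otimes \db{\cL:\Gamma, \cI_1},
$$
since the four conjuncts required by $\otimes$ — membership of $(\sigma_0,\kappa)$ and $(\sigma_1,\lambda)$ in the respective local semantics, definedness of $\sigma_0 * \sigma_1$, and $\cover(\tau,\kappa,\lambda)$ — all hold.

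Third, invoking Lemma~\ref{prop:sem:local-global} (whose safety hypotheses coincide with ours), the right-hand side above equals $\db{\fillin{\cL}{\Cc}, \cI_0 * \cI_1}$, so $(\sigma_0 * \sigma_1, \tau) \in \db{\fillin{\cL}{\Cc}, \cI_0 * \cI_1}$, which together with $\cover(\tau,\kappa,\lambda)$ is exactly what the corollary asserts.

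There is no genuine obstacle; the only real content is the auxiliary fact about the existence of a covering trace, which is straightforward: given two traces with the same history, one merely interleaves the client actions of $\kappa$ and the library actions of $\lambda$ in any order that agrees with the common sequence of call/return actions and then erases the state annotations. If required I would spell this out explicitly, defining $\tau$ by induction on $|\kappa| + |\lambda|$, inserting the next client action from $\kappa$ whenever the head of $\kappa$ is a client action, and dually for $\lambda$, and consuming the matching call/return action from both sides simultaneously — but given that the fact is already stated in the text preceding the corollary, I would simply cite it.
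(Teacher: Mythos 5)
Your proposal is correct and follows exactly the route the paper takes: the paper derives this corollary (together with the Decomposition corollary) from Lemma~\ref{prop:sem:local-global} by unpacking the definition of $\otimes$ and citing the same auxiliary fact that a covering trace $\tau$ exists whenever $\history(\kappa)=\history(\lambda)$. Your observation about the $\sigma_1,\sigma_2$ versus $\sigma_0,\sigma_1$ naming mismatch is a genuine typo in the statement, and your silent realignment is the right reading.
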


\noindent Corollary~\ref{lemma-clientlocal} can be viewed as carrying over properties of
the local semantics, such as safety, to the global one, and in this sense is the
statement of the soundness of the former with respect to the latter. The
corollary also confirms that the client defined by $\dbp{\cL : \Gamma'}$ and
$\db{\lambda : \Gamma'}$ is indeed most general, as it reproduces library
behaviours under any possible clients. Corollary~\ref{lemma-clientlocal2}
carries over properties of the global semantics to the local ones, stating the
adequacy of the latter.


Lemma~\ref{prop:sem:local-global} is proved in Appendix~\ref{sec:proofs}. Most
of the proof deals with maintaining a splitting of the state of
$\fillin{\cL}{\Cc}$ into the parts owned by $\cL$ and $\Cc$, which changes
during ownership transfers. The proof relies crucially on the safety of the
client and the libraries and the Strong Locality property of primitive commands.
In more detail, safety is defined by considering executions of a component in
the library-local or the client-local semantics. These execute the component
code only on the memory it owns, whose amount only changes with ownership
transfers to and from its environment according to method
specifications. Because of the Strong Locality property, commands fault when
accessing memory cells that are not present in the state they are run from, and
their execution does not depend on any additional memory that might be present
in the state. Hence, when we use a component inside a complete program, its
safety guarantees that the component code does not touch the part of the heap
belonging to other components in the program, and its execution is not affected
by the state of such components. This guarantees that the behaviour a component
produces as part of the complete program can be reproduced when we execute it in
isolation and vice versa, allowing us to establish
Lemma~\ref{prop:sem:local-global}. In practice, the safety of a program can be
established using existing program logics, such as separation
logic~\cite{seplogic-concurrent,rgsep-thesis}.

\section{Abstraction Theorem\label{sec:refinement}}

We are now in a position to define the notion of linearizability on libraries
and prove the central technical result of this paper---the Abstraction Theorem.
We define linearizability between specified libraries $\cL : \Gamma$, together
with their sets of initial states $\cI$. First, using the library-local
semantics of Section~\ref{sec:semantics}, we define the interface set describing
all the behaviours of a library $\cL$ when run from initial states in $\cI$:
$$
\interf(\cL : \Gamma,\cI) = \{(\delta(\sigma_0),\history(\tau)) \mid
(\sigma_0,\tau) \in \db{(\cL : \Gamma),\cI}\} \subseteq \BHistory.
$$
\begin{defi}\label{lin2}
  Consider $\cL_1 : \Gamma$ and $ \cL_2 : \Gamma$ safe for $\cI_1$ and $\cI_2$,
  respectively.  We say that $(\cL_1 : \Gamma,\cI_1)$ \textbf{\em is linearized by}
  $(\cL_2 : \Gamma,\cI_2)$, written $(\cL_1 : \Gamma,\cI_1) \sqsubseteq (\cL_2 :
  \Gamma,\cI_2)$, if, according to Definition~\ref{lin},
$$
\interf(\cL_1 : \Gamma,\cI_1) \sqsubseteq \interf(\cL_2 : \Gamma,\cI_2).
$$
For an interface set $\cH_2$ we say that $(\cL_1 : \Gamma,\cI_1)$ \textbf{\em is
  linearized by} $\cH_2$, written $(\cL_1 : \Gamma,\cI_1) \sqsubseteq \cH_2$, if
$$
\interf(\cL_1 : \Gamma,\cI_1) \sqsubseteq \cH_2.
$$
\end{defi}
Thus, $(\cL_1 : \Gamma,\cI_1)$ is linearized by $(\cL_2 : \Gamma,\cI_2)$ if
every history generated by the library-local semantics of the former may be
reproduced in a linearized form by the library-local semantics of the latter
without requiring more memory. The relation $(\cL_1 : \Gamma,\cI_1) \sqsubseteq
(\cL_2 : \Gamma,\cI_2)$ allows us to specify a library by another piece of code,
but possibly simpler than the original one. For example, the stack and the
allocator from Figure~\ref{fig:impl} with method
specifications~(\ref{container}) and~(\ref{alloc}) can be specified by the
libraries in Figure~\ref{fig:spec}. The libraries replace the array and the
linked list in the implementations by the abstract data types of a sequence and
a set (we assume a trivial extension of the $\RAM$ algebra from
Section~\ref{sec:prelim} to allow memory cells to store values of such
types). Thus, the abstract libraries use less memory than the concrete ones.
Instead of using locking, all operations on the abstract data types are done
atomically; formally, we assume primitive commands corresponding to the code in
the atomic blocks.

\begin{figure}
{\small
\begin{tabular}{@{}l@{\qquad\quad}|@{\qquad\quad}l@{}}
{}
\begin{source}
Sequence<void*> stack;

int push(void *arg) {
  atomic { 
    if (nondet()) { return FULL; }
    else { 
      add_to_head(stack, arg); 
      return OK; 
    }
  }
}

void *pop() {
  atomic {
    if (!isEmpty(stack)) {
      void *obj = head(stack);
      stack = tail(stack);
      return obj; 
    } else { return EMPTY; }
  }
}
\end{source}
&
\begin{source}
Set<void*> free_list;

void free(void *arg) {
  atomic {
    add(free_list, arg);
  }
}

void *alloc() {
  atomic {
    if (!isEmpty(free_list)) {
      Node *block = 
          (Node*)take(free_list);
      block->next = nondet();
      block->prev = nondet();
      return block;
    } else {
      return 0; 
    }
  }
}
\end{source}
\\
\\
\qquad  \qquad \qquad \qquad  (a)
&  \qquad \qquad \qquad  \qquad  (b)
\end{tabular}
}
\caption{Specifications corresponding to the implementations in
  Figure~\ref{fig:impl}: (a) a bounded stack storing pointers to objects; (b) a
  memory allocator managing memory blocks of a fixed size. The {\tt Node}
  structure is defined in Figure~\ref{fig:impl}(b).}
\label{fig:spec}
\end{figure}

The other relation $(\cL_1 : \Gamma,\cI_1) \sqsubseteq \cH_2$ introduced in
Definition~\ref{lin2} allows us to specify a library directly by an interface
set, without fixing a piece of code generating it. The interface set $\cH_2$ can
still be simpler than that of $(\cL_1 : \Gamma,\cI_1)$, e.g., containing only
sequential histories. Even though the two forms of defining linearizability may
seem very similar, as we show in Section~\ref{sec:rearr}, their mathematical
properties are fundamentally different.

We now formulate two variants of the Abstraction Theorem, corresponding to the
two ways of specifying libraries (we prove them in Section~\ref{sec:rearr}).
\begin{thm}[Abstraction---specification by code]\label{thm2}
If 
\begin{iteMize}{$\bullet$}
\item $\cL_1 : \Gamma$,
$ \cL_2 : \Gamma$,
$\Gamma \vdash \Cc$
are safe for $\cI_1$, $\cI_2$, $\cI$, respectively, and
\item
$(\cL_1 : \Gamma,\cI_1) \sqsubseteq  (\cL_2 : \Gamma,\cI_2)$, 
\end{iteMize}
then
\begin{iteMize}{$\bullet$}
\item
$\Cc(\cL_1)$ and $\Cc(\cL_2)$ are safe for $\cI*\cI_1$ and $\cI *\cI_2$,
respectively, and
\item
$
\forall (\sigma_1,\tau_1) \in \db{\fillin{\cL_1}{\Cc}, \cI * \cI_1}.\,
\exists (\sigma_2,\tau_2) \in \db{\fillin{\cL_2}{\Cc}, \cI *\cI_2}.\,
\client(\tau_1) = \client(\tau_2).
$
\end{iteMize}
\end{thm}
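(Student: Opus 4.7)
The plan is to reduce the theorem to the Decomposition and Composition corollaries of Lemma~\ref{prop:sem:local-global}, with the core work being a \emph{rearrangement lemma} on client-local traces (which is the content of Section~\ref{sec:rearr-proof}). Safety of $\fillin{\cL_i}{\Cc}$ on $\cI * \cI_i$ for $i \in \{1,2\}$ is immediate from Lemma~\ref{prop:sem:local-global} applied to the given safety of $\Gamma \vdash \Cc$ on $\cI$ and of $\cL_i : \Gamma$ on $\cI_i$. For the refinement part, I would proceed in three steps.

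First, fix $(\sigma_1, \tau_1) \in \db{\fillin{\cL_1}{\Cc}, \cI * \cI_1}$. By Corollary~\ref{lemma-clientlocal} there exist $(\sigma_0, \kappa) \in \db{\Gamma \vdash \Cc, \cI}$ and $(\sigma_1^\star, \lambda_1) \in \db{\cL_1 : \Gamma, \cI_1}$ with $\sigma_1 = \sigma_0 * \sigma_1^\star$ and $\cover(\tau_1, \kappa, \lambda_1)$, so that $H_1 = \history(\kappa) = \history(\lambda_1)$ and $(\delta(\sigma_1^\star), H_1) \in \interf(\cL_1 : \Gamma, \cI_1)$. The hypothesis $(\cL_1 : \Gamma, \cI_1) \sqsubseteq (\cL_2 : \Gamma, \cI_2)$ then supplies $(\sigma_2^\star, \lambda_2) \in \db{\cL_2 : \Gamma, \cI_2}$ together with a bijection $\rho$ witnessing $H_1 \sqsubseteq H_2$, where $H_2 = \history(\lambda_2)$ and $\delta(\sigma_2^\star) \preceq \delta(\sigma_1^\star)$.

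Second, and most importantly, I would establish a rearrangement lemma: from $\kappa$ and $\rho$, one can build a client-local trace $\kappa' \in \db{\Gamma \vdash \Cc, \cI}$ from the same $\sigma_0$ with $\history(\kappa') = H_2$ and $\erase(\kappa') = \erase(\kappa)$. Syntactically, $\kappa'$ is obtained from $\kappa$ by permuting its call/return actions according to $\rho$ while leaving each client primitive action in place within its thread; the per-thread order constraint from Definition~\ref{lin} guarantees $\kappa'|_t = \kappa|_t$ for every $t$, so $\kappa'$ remains in $\dbp{\Gamma \vdash \Cc}$. The delicate point is feasibility of $\db{\Gamma \vdash \kappa'}\sigma_0$: here the balancedness of $H_2$ from $\delta(\sigma_2^\star)$, the inequality $\delta(\sigma_2^\star) \preceq \delta(\sigma_1^\star)$, and Proposition~\ref{prop-delta} combine to show that the complementary client-side footprint $\delta(\sigma_0) \circ (\delta(\sigma_1^\star) \fdiff \delta(\sigma_2^\star))$ evolves consistently along $H_2$, so that every re-positioned call still finds the state it must transfer and every re-positioned return is matched by corresponding library state. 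Strong Locality of primitive commands then ensures that each client primitive action, which in $\kappa'$ fires from a state at least as large as the one it fired from in $\kappa$, still executes safely and with the same effect. Justifying these moves, typically by an adjacent-swap induction on $\rho$ using the non-overlap and per-thread constraints, is the main obstacle and motivates the dedicated treatment in Section~\ref{sec:rearr-proof}.

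Third, once $\kappa'$ is in hand, I observe that $\sigma_0 * \sigma_2^\star$ is defined: since $\delta(\sigma_2^\star) \preceq \delta(\sigma_1^\star)$, one can factor $\sigma_1^\star = \sigma_a * \sigma_b$ with $\delta(\sigma_a) = \delta(\sigma_2^\star)$, and the definedness of $\sigma_0 * \sigma_1^\star = \sigma_0 * \sigma_a * \sigma_b$ forces $\sigma_0 * \sigma_a$ to be defined; the footprint characterisation of compatibility (Definition~\ref{defn-foot}) then yields $(\sigma_0 * \sigma_2^\star)\fdef$. Corollary~\ref{lemma-clientlocal2} applied to $(\sigma_0, \kappa')$ and $(\sigma_2^\star, \lambda_2)$, whose histories coincide by construction, produces $\tau_2$ with $(\sigma_0 * \sigma_2^\star, \tau_2) \in \db{\fillin{\cL_2}{\Cc}, \cI * \cI_2}$ and $\cover(\tau_2, \kappa', \lambda_2)$. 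Hence $\client(\tau_2) = \erase(\kappa') = \erase(\kappa) = \client(\tau_1)$, completing the argument.
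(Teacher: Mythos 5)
Your overall skeleton --- decompose via Corollary~\ref{lemma-clientlocal}, rearrange, recompose via Corollary~\ref{lemma-clientlocal2} --- is the paper's `decompose, rearrange, compose' scheme, and your safety argument and the compatibility argument for $(\sigma_0 * \sigma_2^\star)\fdef$ are fine. But there is a genuine gap in the middle step: you rearrange the \emph{wrong component}. You transform the client trace $\kappa$ (history $H_1$) into $\kappa'$ with history $H_2$ and then claim $\erase(\kappa') = \erase(\kappa)$. That claim is false whenever the permutation $\rho$ is non-trivial: $\erase$ only strips the state annotations from interface actions, it does not delete them, so if the calls and returns of $\kappa'$ sit at different global positions relative to each other and to the primitive actions, then $\erase(\kappa')$ and $\erase(\kappa)$ are different sequences. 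What your construction actually delivers is exactly the conclusion of Lemma~\ref{thm} (Rearrangement---client), namely $\kappa \sim \kappa'$: identical per-thread projections and identical projection to non-interface actions. Feeding $\kappa'$ into the composition then yields only $\client(\tau_1) \sim \client(\tau_2)$, which is the weaker conclusion of Theorem~\ref{thm-spec}, not the exact equality $\client(\tau_1) = \client(\tau_2)$ that Theorem~\ref{thm2} asserts (and which, as the paper notes, forces $\history(\tau_1) = \history(\tau_2)$).

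The paper's proof avoids this by leaving $\kappa$ untouched and instead rearranging the \emph{library-local} trace $\lambda_2$ of the abstract library: by Lemma~\ref{cor-rearr}, since $(\delta(\sigma_{\bf l}^1), H_1) \sqsubseteq (\delta(\sigma_{\bf l}^2), H_2)$ and $\cL_2 : \Gamma$ is safe, there is a trace $\lambda'_2 \in \db{\cL_2 : \Gamma}\sigma_{\bf l}^2$ with $\history(\lambda'_2) = H_1 = \history(\kappa)$. This exploits the closure property of the library-local semantics in the opposite direction from yours: the library may receive state at a call earlier and give it up at a return later, so the abstract library can ``de-linearize'' $H_2$ back into the concrete history $H_1$. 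Composing the unchanged $\kappa$ with $\lambda'_2$ then gives $\client(\tau_2) = \erase(\kappa) = \client(\tau_1)$ on the nose. Your route cannot be patched within the client-local semantics; you must move the interface actions of $\lambda_2$, not of $\kappa$.
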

Thus, when reasoning about a client $\fillin{\cL_1}{\Cc}$ of a library $\cL_1$,
we can soundly replace $\cL_1$ by a library $\cL_2$ linearizing it: if a
safety property over client traces holds of $\fillin{\cL_2}{\Cc}$, it will also
hold of $\fillin{\cL_1}{\Cc}$. In practice, we are usually interested in {\bf
  \em atomicity abstraction}, a special case of this transformation when methods
in $\cL_2$ are atomic. An instance is replacing one of the libraries from
Figure~\ref{fig:impl} by its specification from Figure~\ref{fig:spec}. The
requirement that $\Cc$ be safe in the theorem restricts its applicability to
well-behaved clients that do not access memory owned by the library: you cannot
replace a library by another one if the client can access its internal data
structures and thereby ``look inside the box''. Similarly, the safety of the
libraries ensures that they cannot corrupt the data structures owned by the
client. 
%
%

The other version of the Abstraction Theorem, allowing library specification by
an interface set, guarantees that replacing a library by its specification
leaves all the original client behaviours reproducible modulo the following
notion of trace equivalence.
\begin{defi}
  Client traces $\kappa$ and $\kappa'$ are {\bf\em equivalent}, written $\kappa \sim
  \kappa'$, if $\kappa|_t = \kappa'|_t$ for all $t \in \ThreadID$ and the projections
  of $\kappa$ and $\kappa'$ to non-interface actions are identical.
\end{defi}
\begin{thm}[Abstraction---specification by an interface set]\label{thm-spec}
If 
\begin{iteMize}{$\bullet$}
\item $\cL_1 : \Gamma$ and
$\Gamma \vdash \Cc$
are safe for $\cI_1$ and $\cI$, respectively, and
\item
$(\cL_1,\cI_1) \sqsubseteq \cH_2$, 
\end{iteMize}
then
\begin{iteMize}{$\bullet$}
\item
$\Cc(\cL_1)$ is safe for $\cI*\cI_1$ and
\item
$
\forall (\sigma,\tau_1) \in \db{\fillin{\cL_1}{\Cc}, \cI * \cI_1}.\,
\exists \kappa, l.\, 
(\sigma', \kappa) \in \db{\Gamma \vdash \Cc, \cI} \wedge
(l, \history(\kappa)) \in \cH_2 \wedge
{(\delta(\sigma') \circ l)\fdef} \wedge
\client(\tau_1) \sim \erase(\kappa).
$
\end{iteMize}
\end{thm}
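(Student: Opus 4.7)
The plan is to first decompose a global trace of $\Cc(\cL_1)$ into matching client and library components via Corollary~\ref{lemma-clientlocal}, then transport the library side through the linearization hypothesis to an abstract history in $\cH_2$, and finally rearrange the client trace accordingly.

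Safety of $\Cc(\cL_1)$ for $\cI * \cI_1$ follows immediately from Lemma~\ref{prop:sem:local-global}. For the main statement, fix $(\sigma, \tau_1) \in \db{\fillin{\cL_1}{\Cc}, \cI * \cI_1}$. Applying Corollary~\ref{lemma-clientlocal} yields $(\sigma_0, \kappa_1) \in \db{\Gamma \vdash \Cc, \cI}$ and $(\sigma_1, \lambda_1) \in \db{\cL_1 : \Gamma, \cI_1}$ with $\sigma = \sigma_0 * \sigma_1$ and $\cover(\tau_1, \kappa_1, \lambda_1)$. By definition of $\interf$, the pair $(\delta(\sigma_1), \history(\lambda_1))$ lies in $\interf(\cL_1 : \Gamma, \cI_1)$, so the hypothesis $(\cL_1, \cI_1) \sqsubseteq \cH_2$ supplies $(l, H_2) \in \cH_2$ with $l \preceq \delta(\sigma_1)$ and $\history(\lambda_1) \sqsubseteq H_2$ via some bijection $\rho$ witnessing linearization.

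I would then construct $\kappa$ from $\kappa_1$ by permuting only the interface actions of $\kappa_1$ according to $\rho$, leaving each primitive client action in its original per-thread position. Since $\rho$ preserves order within each thread and between non-overlapping method invocations, the resulting sequence is a valid trace in $\dbp{\Gamma \vdash \Cc}$; by construction $\kappa|_t = \kappa_1|_t$ for every $t$ after erasing annotations, and the projection to primitive actions is untouched, giving $\client(\tau_1) = \erase(\kappa_1) \sim \erase(\kappa)$ and $\history(\kappa) = H_2$. Setting $\sigma' = \sigma_0$, the compatibility $(\delta(\sigma') \circ l) \fdef$ holds because $\delta(\sigma_0) \circ \delta(\sigma_1) \fdef$ and $l \preceq \delta(\sigma_1)$, so associativity of $\circ$ gives $(\delta(\sigma_0) \circ l) \fdef$.

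The hard part will be verifying that $\kappa$ is in fact feasible in $\db{\Gamma \vdash \Cc, \cI}$ from $\sigma_0$, i.e., that the rearrangement respects the client-local semantics. Moving an interface action past a primitive action of another thread changes which portion of the heap the client owns at that intermediate point, so one must argue both that every rearranged ownership transfer has the required piece of state available at its new moment---which rests on the balancedness of $H_2$ from $l$ (together with Proposition~\ref{safe2wb} applied to $\lambda_1$ and the relation $l \preceq \delta(\sigma_1)$)---and that each primitive command still produces the same outcome after its neighbouring interface actions have been reshuffled, which rests on Footprint Preservation and Strong Locality: a primitive command's effect depends only on the cells in its own footprint and is insensitive to any other portion of the state. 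This is the same rearrangement core as for Theorem~\ref{thm2}, and I would discharge its technical details in Section~\ref{sec:rearr-proof}.
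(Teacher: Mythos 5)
Your proposal is correct and follows essentially the same route as the paper: decompose via Corollary~\ref{lemma-clientlocal}, invoke the linearizability hypothesis to obtain $(l,H_2)\in\cH_2$ with $l\preceq\delta(\sigma_1)$, and then rearrange the client-local trace to have history $H_2$ while preserving per-thread projections and primitive actions. The rearrangement step you sketch and defer to Section~\ref{sec:rearr-proof} is exactly the content of Lemma~\ref{thm} (Rearrangement---client), which the paper's proof cites directly at that point.
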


\noindent The theorem shows that client behaviours of $\Cc(\cL_1)$ can be reproduced by
the client-local semantics of $\Cc$ projected to histories in $\cH_2$ with
initial footprints compatible with initial client states. Note that
$\client(\tau_1) = \client(\tau_2)$ in Theorem~\ref{thm2} implies that
$\history(\tau_1) = \history(\tau_2)$, i.e., $\Cc(\cL_2)$ can reproduce the
history of $\Cc(\cL_1)$ exactly. In contrast, Theorem~\ref{thm-spec} does not
guarantee this, since $\kappa \sim \kappa'$ does not imply $\history(\kappa) =
\history(\kappa')$; we only know that the projection to non-interface actions is
reproduced. We discuss the reason for this discrepancy below.


\subsection{The Rearrangement Lemma and the Proof of the Abstraction
  Theorem\label{sec:rearr}}

The key component used for establishing Theorem~\ref{thm2} is the Rearrangement
Lemma: if $H \sqsubseteq H'$, then every execution trace of a library producing
$H'$ can be transformed into another trace of the same library that differs from
the original one only in the order of interface actions and produces $H$,
instead of $H'$. Hence, the library specification can simulate any behaviour of
its implementation the client can expect.
\begin{lem}[Rearrangement---library]\label{cor-rearr}
  If $(\delta(\sigma),H) \sqsubseteq (\delta(\sigma'),H')$ and $\cL : \Gamma$ is
  safe at $\sigma'$, then
$$
\forall \lambda' \in\db{\cL : \Gamma}\sigma'.\, 
\history(\lambda') = H' \implies 
\exists \lambda\in\db{\cL : \Gamma}\sigma'.\, \history(\lambda) = H.
$$
\end{lem}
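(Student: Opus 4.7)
The plan is to construct the trace $\lambda$ explicitly from $\lambda'$ using the bijection $\rho$ that witnesses $H \sqsubseteq H'$: take the interface action at position $\rho(i)$ in $\lambda'$ and place it at position $i$ in $\lambda$, keeping all library (non-interface) actions in their original relative positions with respect to the surrounding interface actions. Because $\rho$ preserves per-thread order, the projection $\lambda|_t$ coincides with $\lambda'|_t$ for every thread $t$, so $\lambda$ is still a valid interleaving of per-thread traces and therefore $\lambda \in \dbp{\cL : \Gamma}$.

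The main work is to show that $\lambda$ is feasible from $\sigma'$, i.e., that $\lambda \in \db{\cL : \Gamma}\sigma'$. I would decompose the rearrangement $\lambda' \leadsto \lambda$ into a sequence of adjacent transpositions of interface actions, each consistent with the constraints on $\rho$: either a return moved past a later call from a different thread, the symmetric case, or two calls (resp.\ two returns) from distinct threads exchanged. For each such swap I would use Proposition~\ref{prop-diff} at the state level and Proposition~\ref{prop-delta} at the footprint level to show that the two orderings yield the same result whenever both are defined, relying on associativity, commutativity, and cancellativity of $*$. Library actions from threads other than the two involved in the swap may lie between the swapped interface actions; by Strong Locality and Footprint Preservation their effect depends only on the state they own and they preserve footprints, so they commute past a moved interface action without changing the resulting state. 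Iterating this step-by-step argument yields the feasibility of~$\lambda$.

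The main obstacle, I expect, is showing that every intermediate footprint along $\lambda$ remains defined, which is exactly the requirement that $H$ be balanced from $\delta(\sigma')$. Balancedness of $H'$ from $\delta(\sigma')$ is supplied by Proposition~\ref{safe2wb} since $\lambda' \in \db{\cL : \Gamma}\sigma'$; however, an arbitrary permutation of a balanced history need not be balanced, so the argument must crucially use the structural constraints on $\rho$: preservation of per-thread order and of the return-before-call relation between non-overlapping invocations. The point is that each allowed atomic swap either moves a call earlier past an action of a still-open invocation, so the acquired footprint is compatible with the ownership already present, or moves a return later past a call from another thread, so the released footprint is still owned at the moment of release. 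Combining these per-swap observations with the algebraic laws of $\circ$ and $\fdiff$ from Section~\ref{sec:prelim} should give a clean inductive proof that both $\lambda$ lies in $\db{\cL : \Gamma}\sigma'$ and $\history(\lambda) = H$, but getting the balancedness bookkeeping right is the delicate step.
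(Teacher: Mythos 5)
Your high-level strategy matches the paper's: realise the permutation $\rho$ as a sequence of adjacent transpositions of interface actions by different threads, justify each swap using Strong Locality, Footprint Preservation and Propositions~\ref{prop-diff} and~\ref{prop-delta}, and observe that per-thread order preservation keeps the trace in $\dbp{\cL:\Gamma}$. You also correctly identify that the whole difficulty is the definedness of intermediate states. But your proposal stops exactly where the paper's proof begins its hardest work, and the one-sentence justification you offer for the critical case is not an argument.

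The problematic swap is turning $(t_1,\ret\ m_1(\sigma_1))\,(t_2,\call\ m_2(\sigma_2))$ into $(t_2,\call\ m_2(\sigma_2))\,(t_1,\ret\ m_1(\sigma_1))$. Your justification --- ``the released footprint is still owned at the moment of release'' --- addresses whether the return can still fire after the call, but the actual obstruction is on the other side: after the swap the library must momentarily hold $\theta * \sigma_2$, where $\theta$ still contains the not-yet-released $\sigma_1$. If $\sigma_1$ and $\sigma_2$ overlap (take $\Sigma=\RAM$ and the same cell returned and immediately re-transferred in, as in Figure~\ref{fig:cex}), then $(\theta\diff\sigma_1)*\sigma_2$ is defined but $\theta*\sigma_2$ is not, and the swap is impossible. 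Ruling this out is not a per-swap, local observation: one must show that whenever this swap is demanded by $\rho$, the history obtained after the swap is still balanced, and balancedness of the \emph{target} history $H$ does not straightforwardly transfer to the \emph{intermediate} histories, which can differ from $H$ substantially. The paper's proof handles this by isolating the notion of a ``conflicting pair of histories'' (Definition~\ref{def-confl}), proving via Proposition~\ref{cor-call-ret} that a swap succeeds whenever the resulting history is balanced, and then proving Lemma~\ref{call-ret} --- that no conflicting pair exists --- by a separate, global induction that runs an auxiliary transformation to empty out the segment $S_3$ of unmatched calls and then derives a contradiction using Proposition~\ref{prop-foot} and the hypothesis $l_2\preceq l_1$. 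That argument is the substance of Section~\ref{sec:rearr-proof} and is entirely absent from your proposal; without it the induction on swaps does not go through.
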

The proof of the lemma is highly non-trivial and is a subject of
Section~\ref{sec:rearr-proof}. We point out Lemma~\ref{cor-rearr} would not hold
had we included unbalanced histories in our definition of linearizability. To
show this, take $\Sigma = \RAM$ and consider the histories in
Figure~\ref{fig:cex}.  In Figure~\ref{fig:cex}(b) the library receives the cell
$10$ from the client, then returns it and then receives it again. Even though
the history in Figure~\ref{fig:cex}(a) is linearized by that in
Figure~\ref{fig:cex}(b), the former is not balanced, and by
Proposition~\ref{safe2wb}, cannot be produced by $\cL$.  This shows that
Lemma~\ref{cor-rearr} does not hold for unbalanced $H$.

\begin{figure}[t]
\leftline{(a):}

\bigskip

\hspace{1cm}\includegraphics[scale=.34, trim= 0 16cm 0 0cm]{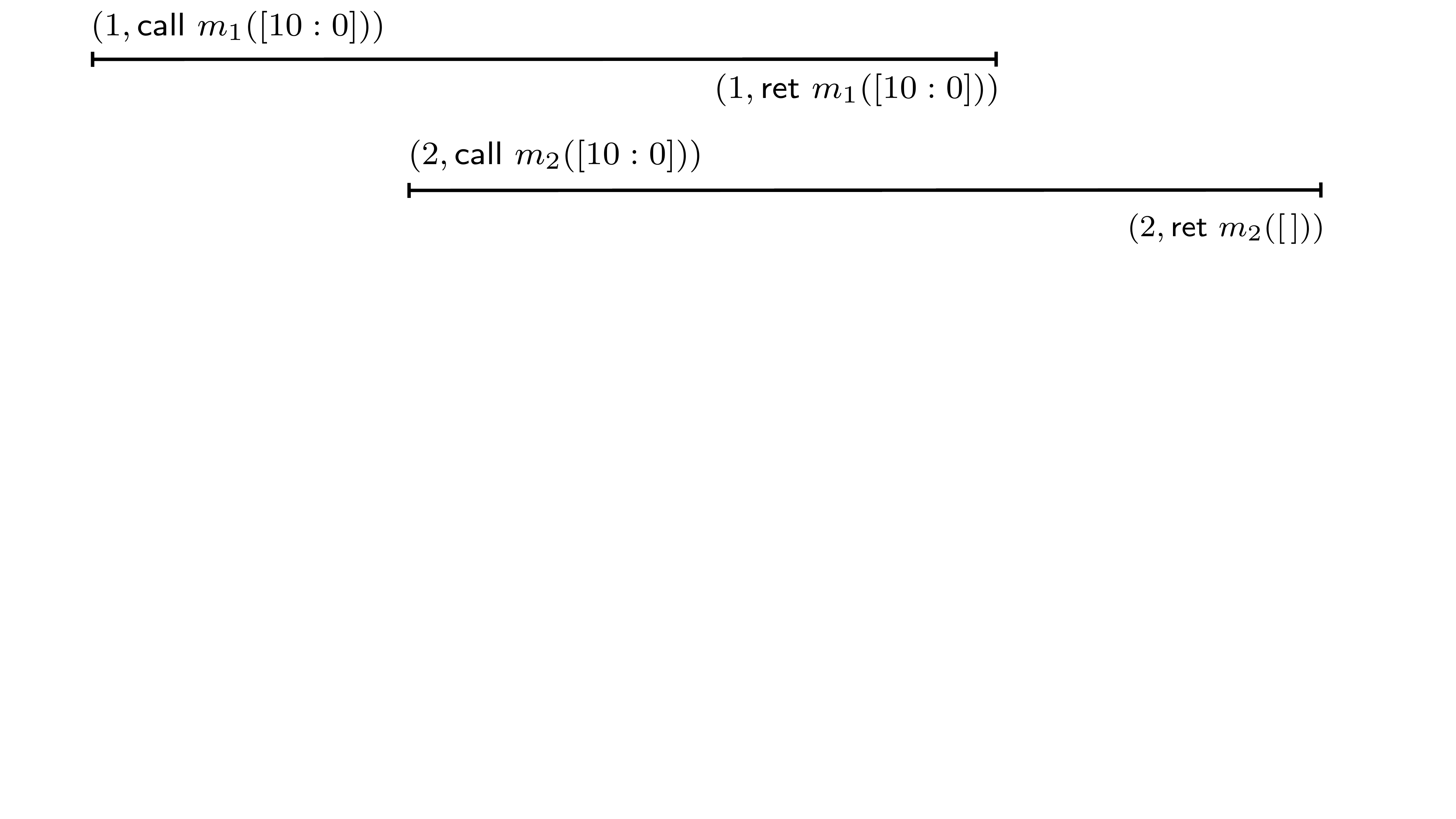}

\leftline{(b):}

\bigskip

\includegraphics[scale=.34, trim= 0 21cm 0 0cm]{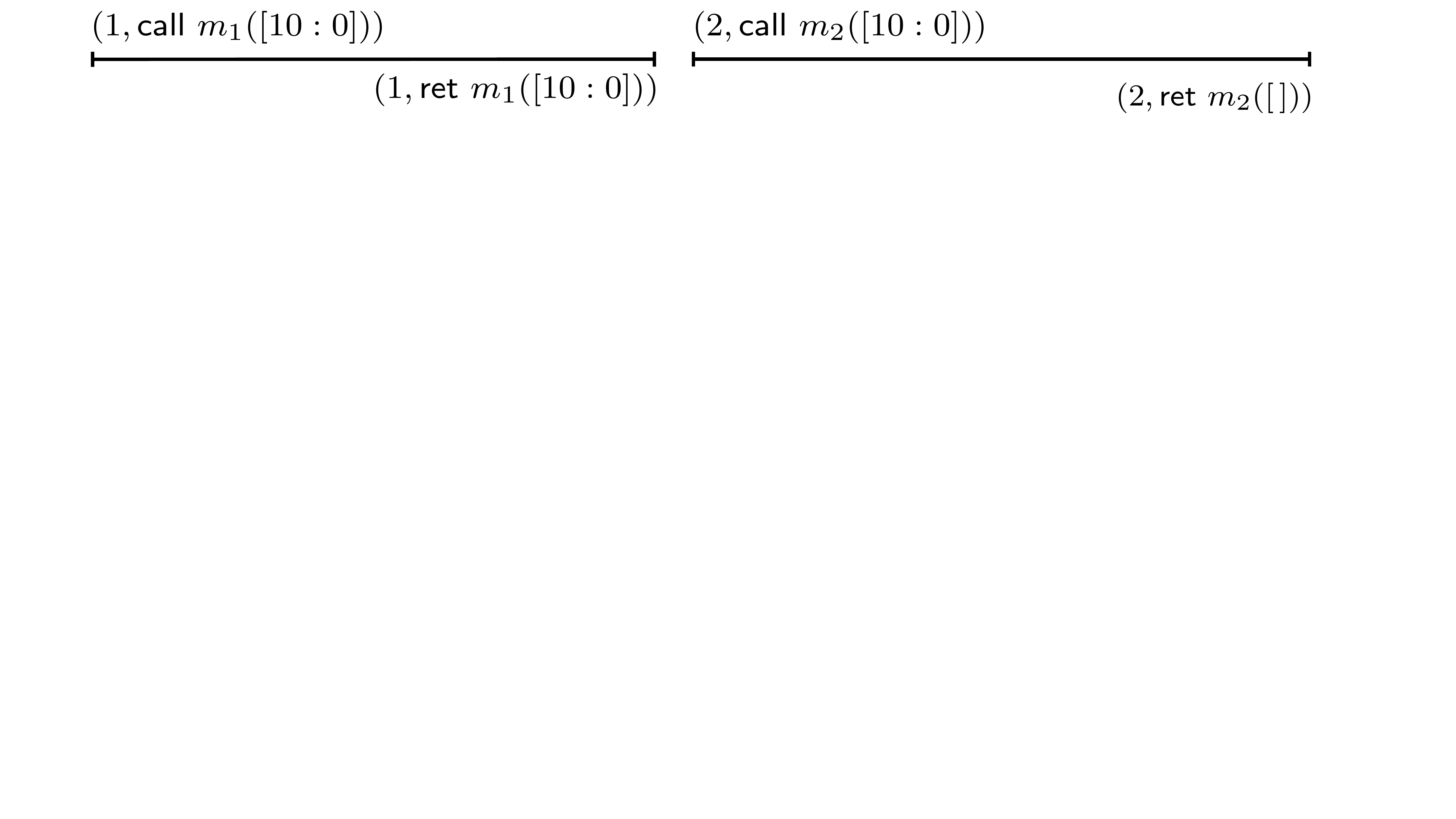}
\caption{Counterexample showing the need for history balancedness in
  Lemma~\ref{cor-rearr}} 
\label{fig:cex}
\end{figure}

Note that we have $H\sqsubseteq H$ for any history $H$. As a consequence, from
Lemma~\ref{cor-rearr} we obtain the following surprising result, stating that
linearizability between libraries is equivalent to inclusion between the sets of
histories they produce.
\begin{cor}\label{cor:subset}
If $\cL_1 : \Gamma$ and $\cL_2 : \Gamma$ are safe for $I_1$ and $I_2$, respectively, then
\begin{multline*}
(\cL_1 : \Gamma, \cI_1) \sqsubseteq (\cL_2 : \Gamma, \cI_2) \iff {}\\
\forall (l, H) \in \interf(\cL_1 : \Gamma, \cI_1).\, \exists l'.\, l' \preceq l \wedge
(l',H) \in \interf(\cL_2 : \Gamma, \cI_2).
\end{multline*}
\end{cor}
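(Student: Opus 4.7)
The plan is to prove the two implications separately, with the forward direction doing the real work via Lemma~\ref{cor-rearr} and the reverse direction being essentially immediate from reflexivity of $\sqsubseteq$ on histories.

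For the reverse implication, I would take any $(l,H) \in \interf(\cL_1 : \Gamma, \cI_1)$, obtain the promised $l' \preceq l$ with $(l',H) \in \interf(\cL_2 : \Gamma, \cI_2)$, and then simply observe that $(l,H) \sqsubseteq (l',H)$ because the identity bijection on $\{1,\dots,|H|\}$ witnesses $H \sqsubseteq H$ and $l' \preceq l$ by hypothesis. This matches Definition~\ref{lin2} directly, so $\interf(\cL_1 : \Gamma, \cI_1) \sqsubseteq \interf(\cL_2 : \Gamma, \cI_2)$.

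For the forward implication, I would take $(l,H) \in \interf(\cL_1 : \Gamma, \cI_1)$ and use the linearizability hypothesis to obtain some $(l_2,H_2) \in \interf(\cL_2 : \Gamma, \cI_2)$ such that $(l,H) \sqsubseteq (l_2,H_2)$, i.e.\ $l_2 \preceq l$ and $H \sqsubseteq H_2$. By definition of $\interf$, there are $\sigma_2 \in \cI_2$ and $\lambda_2 \in \db{\cL_2 : \Gamma}\sigma_2$ with $\delta(\sigma_2) = l_2$ and $\history(\lambda_2) = H_2$. Since $\cL_2 : \Gamma$ is safe for $\cI_2$, it is safe at $\sigma_2$, and $(l,H) \sqsubseteq (\delta(\sigma_2), H_2)$ is precisely the hypothesis of Lemma~\ref{cor-rearr}. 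Applying that lemma to the witnessing trace $\lambda_2$ produces some $\lambda \in \db{\cL_2 : \Gamma}\sigma_2$ with $\history(\lambda) = H$, so that $(l_2,H) \in \interf(\cL_2 : \Gamma, \cI_2)$. Taking $l' = l_2$ gives the required witness with $l' \preceq l$.

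The main obstacle is of course packaged inside Lemma~\ref{cor-rearr}: without the Rearrangement Lemma, one could not convert a trace producing the linearizing history $H_2$ into a trace of the same library producing the original history $H$ from the same initial state. The corollary's surprise is precisely that, once one has paid the price of proving Lemma~\ref{cor-rearr} (including the crucial use of balancedness, as illustrated by Figure~\ref{fig:cex}), linearizability between libraries collapses to the much simpler-looking inclusion on interface sets up to shrinking the initial footprint. No further case analysis or calculation beyond unfolding Definitions~\ref{lin} and~\ref{lin2} is needed.
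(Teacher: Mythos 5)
Your proof is correct and follows exactly the route the paper intends: the reverse direction uses the reflexivity $H \sqsubseteq H$ noted just before the corollary, and the forward direction applies Lemma~\ref{cor-rearr} to the witnessing trace of $\cL_2$ to convert the linearizing history $H_2$ back into $H$ from the same initial state $\sigma_2$, yielding $(l_2,H)\in\interf(\cL_2:\Gamma,\cI_2)$ with $l_2\preceq l$. No gaps.
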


\noindent This fact is not a consequence of ownership transfer and also holds for the
classical notion of linearizability. Intuitively, Lemma~\ref{cor-rearr}, and
hence, Corollary~\ref{cor:subset} hold due to a closure property of the
semantics of the language from Section~\ref{sec:prog}. Namely, in this and other
programming languages, there may always be a delay between the point when a
library method is called and when it starts executing and, conversely, when it
ends executing and when the control returns to the client. For example, when
executing the code in Figure~\ref{fig:spec}(a), there may be delays between a
call to {\tt push}, the execution of the atomic block and the return from {\tt
  push}. Hence, this library can produce both of the histories in
Figures~\ref{fig:hist}(b) and~\ref{fig:hist}(c).

Due to this property of the program semantics, a trace from $\dbp{\cL}$, e.g.,
one producing the history in Figure~\ref{fig:hist}(c), will stay valid if we
execute some of the calls in it earlier and returns later, like in
Figure~\ref{fig:hist}(b). This is also (usually) safe given the ownership
transfer reading of calls and returns in the library-local semantics defined by
$\db{\cL: \Gamma}$: it just means that the library receives state from the
client earlier and gives it up later. The proof of Lemma~\ref{cor-rearr} uses
such transformations on a history to ``de-linearize'' it, e.g., transforming the
history in Figure~\ref{fig:hist}(c) into that in Figure~\ref{fig:hist}(b). The
above closure property is also the reason for Theorem~\ref{thm2} guaranteeing
that $\Cc(\cL_2)$ can reproduce the history of $\Cc(\cL_1)$ exactly.

Given that replacing a library $\cL_1$ by its linearization $\cL_2$ does not
simplify its interface set, can Theorem~\ref{thm2} really simplify reasoning
about a complete program $\Cc(\cL_1)$?  Fortunately, the answer is yes, since
the point of the theorem is to simplify {\em the code} of this program. For
example, replacing the library in Figure~\ref{fig:impl}(a) by the one in
Figure~\ref{fig:spec}(a) allows us to pretend in reasoning about a complete
program that changes to the library state, shared between different threads, are
atomic, and thus consider fewer possible thread interleavings. Calls and returns
in such a complete program are merely thread-local operations that do not
complicate reasoning. In Section~\ref{sec:example}, we discuss an example of
using Theorem~\ref{thm2} to simplify proofs of complicated algorithms.

Specifying a library by an interface set $\cH_2$ instead of code, as in
Theorem~\ref{thm-spec}, does not allow us to get results such as
Lemma~\ref{cor-rearr} and Corollary~\ref{cor:subset}, since the set $\cH_2$ is
not guaranteed to satisfy any closure properties. For example, it might contain
only sequential histories, where every call is immediately followed by the
corresponding return without a delay. In fact, $\cH_2$ has to be simpler than
the interface set of $\cL_1$ for Theorem~\ref{thm-spec} to be useful, since this
is what the theorem replaces $\cL_1$ by. Fortunately, to prove
Theorem~\ref{thm-spec} we can exploit a closure property of the client-local
semantics, formalised by the following variant of the Rearrangement Lemma: any
client trace can be transformed into an equivalent one with a given history
linearizing the history of the original one.
\begin{lem}[Rearrangement---client]\label{thm}
  If $(l,H) \sqsubseteq (l',H')$ and $\Gamma \vdash \Cc$ is safe at $\sigma$, then
$$
\forall \kappa \in\db{\Gamma \vdash \Cc}\sigma.\, 
{(\delta(\sigma) \circ l)\fdef} \wedge
\history(\kappa) = H \implies \exists
\kappa'\in\db{\Cc}\sigma.\, \history(\kappa') = H' \wedge \kappa \sim \kappa'.
$$
\end{lem}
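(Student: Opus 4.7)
The plan is to prove the lemma by induction on the number of inversions of the bijection $\rho\colon \{1,\ldots,|H|\} \to \{1,\ldots,|H'|\}$ witnessing $H \sqsubseteq H'$, reducing to the case of a single adjacent swap in $H$. In the base case $\rho$ is the identity, so $H = H'$ and we may take $\kappa' = \kappa$. In the inductive step I would pick an adjacent inversion $(i, i{+}1)$ of $\rho$, swap $H(i) =: \psi_1$ and $H(i{+}1) =: \psi_2$ to form a history $H''$, construct an intermediate trace $\kappa'' \in \db{\Gamma \vdash \Cc}\sigma$ with $\history(\kappa'') = H''$ and $\kappa \sim \kappa''$, and then apply the induction hypothesis to an instance $(l'',H'') \sqsubseteq (l', H')$ with strictly fewer inversions. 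The constraints on $\rho$ ensure that any adjacent inversion pairs two actions from distinct threads and is one of three forms: two calls, two returns, or a call followed by a return; the case of a return followed by a call is prohibited by the definition of $\sqsubseteq$.

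For a single swap, decompose $\kappa = \alpha \cdot \psi_1 \cdot \beta \cdot \psi_2 \cdot \gamma$, where the subtrace $\beta$ contains no interface actions (by adjacency in $H$) and no actions by the threads of $\psi_1$ or $\psi_2$ (these threads are inside a method invocation throughout $\beta$ and so produce no client actions). The candidate trace is $\kappa'' := \alpha \cdot \psi_2 \cdot \beta \cdot \psi_1 \cdot \gamma$. The equivalence $\kappa \sim \kappa''$ is immediate: per-thread projections coincide because $\psi_1,\psi_2$ belong to distinct threads and non-interface actions keep their positions, and the non-interface subsequences of $\kappa$ and $\kappa''$ are identical. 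Verifying $\kappa'' \in \db{\Gamma \vdash \Cc}\sigma$ is the substantive step: I would track the client state through $\kappa''$ and repeatedly appeal to Strong Locality to slide the effect of $\beta$ past the swapped interface actions, since $\beta$ now acts on a state that retains (resp.\ has gained) the portion $\sigma_c$ (resp.\ $\sigma_r$) that is no longer (resp.\ already) transferred. Precision of the pre- and postconditions in $\Gamma$ then guarantees that the relocated call and return identify the same transferred states $\sigma_c$ and $\sigma_r$ as in $\kappa$.

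The hard part is the $(\call,\ret)$ case, where $\psi_1 = (t_1, \call\ m_1(\sigma_c))$ and $\psi_2 = (t_2, \ret\ m_2(\sigma_r))$: moving $\psi_2$ before $\psi_1$ requires the client state $\sigma_A$ reached after $\alpha$ to admit receiving $\sigma_r$, i.e.\ $(\sigma_A * \sigma_r)\fdef$. From $\kappa$'s validity, Strong Locality, and Footprint Preservation I can extract $((\sigma_A \diff \sigma_c) * \sigma_r)\fdef$, but extending this to $((\sigma_A \diff \sigma_c) * \sigma_c * \sigma_r)\fdef$ demands ruling out a conflict between $\sigma_c$ and $\sigma_r$; this is precisely where the balancedness hypothesis and $(\delta(\sigma) \circ l)\fdef$ do real work. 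Balancedness of $H'$ from $l'$ yields the footprint relation $\delta(\sigma_r) \preceq l'_{i-1}$ at the new position of the return, which combined with $l' \preceq l$ and Propositions~\ref{prop-diff} and~\ref{prop-delta} pushes through the common prefix to give $\delta(\sigma_r) \preceq l_{i-1}$, where $l_{i-1}$ is the library footprint just before the call in the original execution. Finally, Footprint Preservation together with the compatibility $(\delta(\sigma) \circ l)\fdef$ propagated along $\alpha$ transfers this footprint-level compatibility into the state-level compatibility $(\sigma_A * \sigma_r)\fdef$, while precision of $q_{t_2}$ pins down $\sigma_r$ as the postcondition substate in both the original and rearranged orderings, completing the verification of $\kappa''$ and hence the inductive step.
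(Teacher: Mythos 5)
Your overall strategy --- reduce to adjacent transpositions of $H$, build an intermediate trace for each swap, and isolate the $(\call,\ret)\to(\ret,\call)$ swap as the hard case --- mirrors the paper's proof (which establishes the symmetric library version, Lemma~\ref{cor-rearr}, by stage-wise prefix matching via Lemma~\ref{lemma} and Proposition~\ref{prop}, and obtains the client version by symmetry). You also correctly locate the crux: from the validity of $\kappa$ you only get $((\sigma_A \diff \sigma_c) * \sigma_r)\fdef$, and upgrading this to $(\sigma_A * \sigma_r)\fdef$ is exactly where balancedness must do the work. But your resolution of that crux is a genuine gap. You argue that balancedness of $H'$ from $l'$ gives $\delta(\sigma_r) \preceq \dbf{H'\pref_{j-1}}(l')$ at the return's position $j$ in $H'$, and that this ``pushes through the common prefix'' to the corresponding position in the history you are currently transforming. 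There is no common prefix: after several swaps (and even for a single swap deep in the history), the actions preceding the return in the intermediate history $H_k$ and those preceding it in $H'$ are \emph{different multisets} --- they differ by calls and returns that have been reordered across that point, and their footprint contributions pull in opposite directions, so neither prefix footprint dominates the other for free. This is precisely the difficulty the paper flags as ``the most challenging part of the proof'': it is handled there by the notion of conflicting histories (Definition~\ref{def-confl}), Proposition~\ref{prop-foot}, and the page-long iterative normalisation in Lemma~\ref{call-ret}, which repeatedly shrinks the mismatched segment $S_3$ while accumulating the ``stuck'' calls, before a footprint computation closes the argument. Your two-sentence appeal to Propositions~\ref{prop-diff} and~\ref{prop-delta} does not substitute for that argument; without it, the balancedness of the intermediate history $H''$ from $l$ (needed both to legitimise the swapped return in the trace and to invoke your induction hypothesis on $(l,H'')\sqsubseteq(l',H')$) is unestablished.

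A secondary, fixable flaw: your claim that $\beta$ contains no actions of the threads of $\psi_1$ and $\psi_2$ is only true in the $(\call,\ret)$ case. For a $(\call,\call)$ inversion, $t_2$ is \emph{outside} any method throughout $\beta$ (its call is $\psi_2$, which follows $\beta$), so $\beta$ may contain client actions of $t_2$, and moving $\psi_2$ to the front of $\beta$ would change $\kappa|_{t_2}$ and break both membership in the trace set and $\kappa\sim\kappa''$; symmetrically for $(\ret,\ret)$ and $t_1$. The repair is to move only the action whose thread is inside a method during $\beta$ (the call rightward across $\beta$, or the return leftward across $\beta$), i.e., use $\alpha\,\beta\,\psi_2\,\psi_1\,\gamma$ or $\alpha\,\psi_2\,\psi_1\,\beta\,\gamma$ as appropriate; the Strong Locality and precision arguments you sketch then go through for these two easy cases.
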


\noindent Intuitively, the lemma holds because, in the client-local semantics, it is safe
to execute calls later and returns earlier. Like Lemma~\ref{cor-rearr}, this
lemma would not hold if we allowed $H'$ to be unbalanced.


In summary, when a library is specified by the code of its abstract
implementation, the ability to linearize a concrete history while looking for a
matching abstract one allowed by Definition~\ref{lin} is not strictly needed.
However, it is indispensable when the library is specified directly by a set of
histories. We were able to obtain this insight into the original definition of
linearizability by formalising the guarantees the linearizability of a library
provides to its clients as Abstraction Theorems.

Using Lemmas~\ref{cor-rearr} and~\ref{thm}, we now prove the two versions of
the Abstraction Theorem.

\paragraph{\em Proof of Theorem~\ref{thm2}.} The safety of $\Cc(\cL_1)$ and
$\Cc(\cL_2)$ follows from Corollary~\ref{lemma-clientlocal}.  Take $(\sigma,
\tau_1)\in \db{\fillin{\cL_1}{\Cc}, \cI * \cI_1}$.  We transform the trace
$\tau_1$ of $\Cc(\cL_1)$ into a trace $\tau_2$ of $\Cc(\cL_2)$ with the same
client projection using the local semantics of $\cL_1$, $\cL_2$ and
$\Cc$. Namely, we first apply Corollary~\ref{lemma-clientlocal} to generate a pair
$(\sigma_{\bf l}^1,\lambda_1)\in \db{\cL_1 : \Gamma, \cI_1}$ of a library-local
initial state and a trace and a client-local pair $(\sigma_{\bf c},\kappa) \in
\db{\Gamma \vdash \Cc, \cI}$, such that 
\be\label{eq-decomp} 
\sigma = \sigma_{\bf c}*\sigma_{\bf l}^1 \ \wedge\ 
\client(\tau_1)=\erase(\kappa) \wedge\ 
\history(\kappa) =\history(\lambda_1).  
\ee 
Since $(\cL_1 : \Gamma, \cI_1)
\sqsubseteq (\cL_2 : \Gamma, \cI_2)$, for some $(\sigma_{\bf l}^2,\lambda_2) \in
\db{\cL_2 : \Gamma, \cI_2}$, we have
$$
(\delta(\sigma_{\bf l}^1), \history(\lambda_1)) \sqsubseteq 
(\delta(\sigma_{\bf l}^2), \history(\lambda_2)),
$$
which implies $\delta(\sigma_{\bf l}^2) \preceq \delta(\sigma_{\bf l}^1)$.
By Lemma~\ref{cor-rearr}, $\lambda_2$ can be transformed
into a trace $\lambda'_2$ such that 
$$((\sigma_{\bf l}^2,\lambda'_2)\in \db{\cL_2 : \Gamma,\cI_2}) \ \wedge\
(\history(\lambda'_2)=\history(\lambda_1) = \history(\kappa)).
$$
Since $\delta(\sigma_{\bf l}^2) \preceq \delta(\sigma_{\bf l}^1)$ and
$(\sigma_{\bf c}*\sigma_{\bf l}^1)\fdef$, we have $(\sigma_{\bf c}*\sigma_{\bf l}^2)\fdef$.  We then use
Corollary~\ref{lemma-clientlocal2} to compose the library-local trace $\lambda'_2$ with
the client-local one $\kappa$ into a trace $\tau_2$ such that
$$
((\sigma_{\bf c}*\sigma_{\bf l}^2,\tau_2) \in \db{\Cc(\cL_2), \cI * \cI_2})\ \wedge\
(\client(\tau_2)=\erase(\kappa) = \client(\tau_1)).\eqno{\qEd}
$$\vspace{-2 pt}

\noindent The above proof scheme can be described mnemonically as `decompose, rearrange,
compose'. We reuse its first two steps to prove Theorem~\ref{thm-spec}.

\paragraph{\em Proof of Theorem~\ref{thm-spec}.} Take $(\sigma, \tau_1)\in
\db{\fillin{\cL_1}{\Cc}, \cI * \cI_1}$.  Like in the proof of
Theorem~\ref{thm2}, we apply Corollary~\ref{lemma-clientlocal} to generate
$(\sigma_{\bf l}^1,\lambda_1)\in \db{\cL_1, \cI_1}$ and $(\sigma_{\bf c},\kappa) \in
\db{\Gamma \vdash \Cc,\cI}$ such that~(\ref{eq-decomp}) holds.
Since $(\cL_1 : \Gamma, \cI_1) \sqsubseteq \cH_2$, for some $(l_2,H_2) \in
\cH_2$, we have  
$$
(l_2 \preceq \delta(\sigma_{\bf l}^1))\ \wedge\
((\delta(\sigma_{\bf l}^1), \history(\kappa))=
(\delta(\sigma_{\bf l}^1), \history(\lambda_1)) \sqsubseteq 
(l_2,H_2)).
$$
Then by Lemma~\ref{thm}, $\kappa$ can be transformed into a trace $\kappa'$, such that
$$
(\sigma_{\bf c},\kappa') \in \db{\Gamma \vdash\Cc,\cI}
\ \wedge\
\history(\kappa') = H_2\ \wedge\
\kappa \sim \kappa'
$$
Since $\client(\tau_1) = \erase(\kappa)$, we thus have
$\client(\tau_1) \sim \erase(\kappa')$.
Furthermore, since $l_2 \preceq \delta(\sigma_{\bf l}^1)$ and $(\sigma_{\bf c} *
\sigma_{\bf l}^1)\fdef$, we have $(\delta(\sigma_{\bf c}) \circ l_2)\fdef$. Hence, $\kappa'$
and $l_2$ are the required trace and footprint.\qed

\subsection{Establishing and Using Linearizability with Ownership
  Transfer\label{sec:example}}

Our preliminary investigations show that linearizability with ownership transfer
can be established by generalising existing proof systems for proving classical
linearizability based on separation logic~\cite{rgsep-thesis}. The details of
such a generalisation are out of the scope of this paper; we plan to report on
it in the future.

The Abstraction Theorem is not just a theoretical result: it enables
compositional reasoning about complex concurrent algorithms that are challenging
for existing verification methods. For example, the theorem can be used to
justify Vafeiadis's compositional proof~\cite[Section 5.3]{rgsep-thesis} of the
multiple-word compare-and-swap (MCAS) algorithm implemented using an auxiliary
operation called RDCSS~\cite{mcas} (the proof used an abstraction of the kind
enabled by Theorem~\ref{thm2} without justifying its correctness). If the MCAS
algorithm were verified together with RDCSS, its proof would be extremely
complicated. Fortunately, we can consider MCAS as a client of RDCSS, with the two
components performing ownership transfers between them. The Abstraction Theorem
then makes the proof tractable by allowing us to verify the linearizability of
MCAS assuming an atomic specification of the inner RDCSS algorithm.

\section{Proof of the Rearrangement Lemma\label{sec:rearr-proof}}

We only give the proof of Lemma~\ref{cor-rearr}, as that of Lemma~\ref{thm} is
completely symmetric.

The proof transforms $\lambda'$ into $\lambda$ by repeatedly swapping adjacent actions
in it according to a certain strategy to make the history of the trace equal to
$H$. The most subtle place in the proof is swapping
$$
(t_1,\ret\ m_1(\sigma_1))\, (t_2,\call\ m_2(\sigma_2))
$$
to yield 
$$
(t_2,\call\ m_2(\sigma_2))\, (t_1,\ret\ m_1(\sigma_1)),
$$
where $t_1\not=t_2$. This case is subtle for the following reason.
Let the state of the library $\cL$ before the return action be $\theta$; then
$(\theta \diff \sigma_1)\fdef$ and the state of the library after executing the
return and the call is $(\theta \diff \sigma_1)*\sigma_2$. For the swapping to
be  possible, we need $(\theta * \sigma_2)\fdef$; then by
Proposition~\ref{prop-diff}
\be\label{theta-swap}
(\theta \diff \sigma_1) * \sigma_2 = (\theta * \sigma_2)\diff \sigma_1,
\ee
which can be used to establish that the resulting trace is still produced by $\cL$.
However, $(\theta * \sigma_2)\fdef$ is not
guaranteed if the history $H$ is arbitrary. For example, take $\Sigma =
\RAM$ and let $H$ and $H'$ be defined by Figures~\ref{fig:cex}(a) and~\ref{fig:cex}(b).
Since $H$ is unbalanced, it cannot be produced by any library, and 
hence, we cannot swap 
$$
(1, \ret\ m_1([10:0]))\,(2, \call\ m_2([10:0]))
$$
in $H'$. In our proof we use the fact that the history $H$ is balanced to
show that a situation in which we cannot swap a return followed by a call while
transforming $\lambda'$ into $\lambda$ cannot happen. This is non-trivial, as the
problematic situation can potentially happen midway through the transformation.
We only know that the target history $H$ of $\lambda$ is balanced, but this does not
straightforwardly imply that the histories of the intermediate traces obtained
while transforming $\lambda'$ into $\lambda$ are, since these histories might be quite
different from $H$. Inferring their balancedness from that of $H$ represents the
most challenging part of the proof.



We therefore first do the proof under an assumption that allows swapping a
return followed by a call easily and consider the general case later. This lets
us illustrate the overall idea of the proof, which is then reused in the
additional part of the proof dealing with the challenge presented by the general
case.  Namely, we make the following assumption:
\be\label{assm}
\begin{minipage}{13cm}
  $\Sigma = \RAM$ and for any $\zeta \in \db{\cL : \Gamma}\sigma'$ and
  interface actions $\psi_1 = (t_1, \_\ \_(\sigma_1))$ and $\psi_2 = (t_2, \_\
  \_(\sigma_2))$ in $\zeta$, if $t_1\not=t_2$, then $\dom(\sigma_1) \cap
  \dom(\sigma_2) = \emptyset$.
\end{minipage}
\ee 
For example, this holds when states transferred between the client and the
library are always thread-local.  It is easy to check that in $\RAM$, if
$(\theta \diff \sigma_1)\fdef$, $((\theta \diff \sigma_1) * \sigma_2)\fdef$ and
$(\sigma_1 * \sigma_2)\fdef$, then $(\theta * \sigma_2)\fdef$
and thus~(\ref{theta-swap}) holds. Hence,~(\ref{assm}) allows us to justify
swapping a return followed by a call in a trace easily.  We now proceed to prove
Lemma~\ref{cor-rearr} under this assumption.  In our proof, we use the
assumption in a single place, which we note explicitly; the rest of the proof is
independent from it.

Below we sometimes write $\sqsubseteq_\rho$ instead of $\sqsubseteq$ to make the
bijection $\rho$ used to establish the relation between histories in
Definition~\ref{lin} explicit. For a bijection $\rho$ between histories $H$ and
$H'$, we write $\id_k(\rho)$ if $\rho$ is an identity on the first $k$ actions
in $H$.

Take $\sigma,\sigma'\in\State$ and consider a trace $\lambda' \in
\db{\cL : \Gamma}\sigma'$. Assume histories $H, H'$ such that $\history(\lambda')=H'$ and
$(\delta(\sigma), H) \sqsubseteq (\delta(\sigma'), H')$, so that $H$ is balanced
from $\delta(\sigma)$ and $H'$ from $\delta(\sigma')$. We prove that there
exists a trace $\lambda \in \db{\cL : \Gamma}\sigma'$ such that $\history(\lambda)=H$.  To this
end, we define a finite sequence of steps that transforms $\lambda'$ into a such a
trace $\lambda$. The main idea of the transformation is to make progressively longer
prefixes of the trace have histories coinciding with prefixes of $H$. Namely,
the transformation is done in stages, and on stage $k = 0,1,2,\ldots, |H|$ we
obtain a trace $\alpha_k \in \db{\cL : \Gamma}\sigma'$, where $\alpha_0=\lambda'$. Every one
of these traces is such that for some prefix $\beta_k$ of $\alpha_k$ we have:
\begin{gather*}
\history(\beta_k) = H\pref_k; 
\\
\exists \rho.\, ((\delta(\sigma), H) \sqsubseteq_\rho 
(\delta(\sigma'), \history(\alpha_k))) \wedge \id_k(\rho);
\\
\forall j.\, 0 \le j < k \implies 
(\beta_j \mbox{ is a prefix of } \beta_k).
\end{gather*}
We let $\beta_0 = \varepsilon$, so that the above conditions are initially satisfied.
Thus, during the transformation, 
progressively longer prefixes $\beta_k$ of $\alpha_k$ have histories
coinciding with prefixes of $H$, while the linearizability relation between the
history $H$ and that of $\alpha_k$ is preserved. We then take $\alpha_{|H|}$ as
the desired trace $\lambda$.

The trace $\alpha_{k+1}$ is constructed from the trace $\alpha_k$ by applying
the following lemma for $\lambda_1=\beta_k$, $\lambda_1\lambda_2=\alpha_k$,
$H_1=H\pref_k$, $H_1\psi H_2=H$, $\alpha_{k+1} = \lambda_1\lambda'_2\lambda''_2$
and $\beta_{k+1} = \lambda_1\lambda'_2$.

\begin{lem}\label{lemma}
  Assume~(\ref{assm}) holds.  Consider a history $H_1\psi H_2$ and a
  trace $\lambda_1\lambda_2\in \db{\cL : \Gamma}\sigma'$ such that 
\begin{gather}
\history(\lambda_1) = H_1;\label{1}\\
\exists \rho.\,  ((\delta(\sigma), H_1\psi H_2) \sqsubseteq_\rho 
(\delta(\sigma'), \history(\lambda_1\lambda_2))) \wedge \id_{|H_1|}(\rho).\label{2}
\end{gather}
Then there exist traces $\lambda'_2$ and
$\lambda''_2$ such that $\lambda_1\lambda'_2\lambda''_2 \in \db{\cL : \Gamma}\sigma'$ and
\begin{gather}
\history(\lambda_1\lambda'_2)=H_1\psi;\label{4}\\
\exists \rho'.\,  ((\delta(\sigma), H_1\psi H_2) \sqsubseteq_{\rho'}
(\delta(\sigma'), \history(\lambda_1\lambda'_2\lambda''_2)))
\wedge \id_{|H_1\psi|}(\rho').\label{5}
\end{gather}
\end{lem}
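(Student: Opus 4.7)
The plan is to transform $\lambda_1\lambda_2$ into the desired $\lambda_1\lambda'_2\lambda''_2$ by a sequence of adjacent swaps that move the single interface action $\psi$ leftward through $\lambda_2$ until it sits at the very first interface-action slot after $\lambda_1$. Let $j = \rho(|H_1|+1)$; then $\psi$ appears as the $j$-th interface action of $\history(\lambda_1\lambda_2)$, at some trace index $p$ inside $\lambda_2$. The goal is to produce a rearranged trace where $\psi$ occupies history index $|H_1|+1$, with the remaining interface actions of $\lambda_2$ shifted one slot to the right.

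A short combinatorial analysis of $\sqsubseteq_\rho$ reveals which actions $\psi$ must cross. For any interface action $\psi'$ appearing in the history at some position $i$ with $|H_1| < i < j$, bijectivity together with $\id_{|H_1|}(\rho)$ forces $\rho^{-1}(i) > |H_1|+1$, so $\psi$ precedes $\psi'$ in $H_1\psi H_2$. If $\psi$ and $\psi'$ were same-thread, or formed a return-then-call pair with $\psi$ as the return, Definition~\ref{lin} would demand $j = \rho(|H_1|+1) < \rho(\rho^{-1}(i)) = i$, contradicting $i < j$. Hence every interface action $\psi$ must cross belongs to a different thread $t' \ne t$ and does not trigger the return-then-call constraint. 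The same same-thread argument rules out $t$-primitive commands before $\psi$ in $\lambda_2$ when $\psi$ is a call; when $\psi$ is a return, the only $t$-actions in that range are the primitive commands executing the body of the invocation whose matching call already sits in $\lambda_1$, and these will simply remain to the left of $\psi$ in the rearranged trace as part of $\lambda'_2$.

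The swaps themselves come in two flavours. Swapping $\psi$ with a primitive command $(t',c)$ with $t' \ne t$ is justified by Strong Locality together with Footprint Preservation: the transformer $f^{t'}_c$ factors through the $\sigma_\psi$-portion of state, so state definedness and the annotation on $\psi$ are preserved. Swapping $\psi$ with an interface action $\psi' = (t', \_\ \_(\sigma'))$, $t' \ne t$, requires key definedness facts such as $(\theta * \sigma')\fdef$ and $(\theta \diff \sigma_\psi)\fdef$ for the intermediate library state $\theta$; these are furnished by assumption~(\ref{assm}), which guarantees $\dom(\sigma_\psi) \cap \dom(\sigma') = \emptyset$ in $\RAM$, combined with Proposition~\ref{prop-diff} to show that the two orderings yield the same post-state. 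The only operationally dangerous combination, namely $\psi$ a return immediately followed in the swapped trace by a call $\psi'$, is precisely the pair eliminated by the analysis in the previous paragraph, so it never arises.

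Once $\psi$ reaches its target position I define $\lambda'_2$ to be the resulting prefix up to and including $\psi$ and $\lambda''_2$ the remainder, giving~(\ref{4}) directly. For~(\ref{5}) the new bijection is $\rho'(k) = \rho(k)$ for $k \le |H_1|$, $\rho'(|H_1|+1) = |H_1|+1$, and for $k > |H_1|+1$, $\rho'(k) = \rho(k) + 1$ if $|H_1| < \rho(k) < j$ and $\rho'(k) = \rho(k)$ otherwise; $\id_{|H_1|+1}(\rho')$ holds by construction, and the order-preservation clauses of Definition~\ref{lin} follow from those of $\rho$ together with the fact that the shift $i \mapsto i+1$ on the interval $(|H_1|,j)$ is order-preserving. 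The hard part of this proof is the operational bookkeeping for the interface-action swaps: guaranteeing that every intermediate trace still lies in $\db{\cL : \Gamma}\sigma'$ and that the exact annotation $\sigma_\psi$ on $\psi$ persists through the rearrangement. It is exactly at this point that assumption~(\ref{assm}) does its work, by forcing the relevant $*$-compatibilities to hold automatically; once~(\ref{assm}) is dropped, the much more delicate balancedness argument foreshadowed by the surrounding text will be required to recover these compatibilities.
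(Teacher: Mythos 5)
Your treatment of the case $\psi\in\ECallAct$ matches the paper's: the call is moved leftward past other threads' primitive commands (sound by Strong Locality, since the command is re-executed from a \emph{larger} state), past other calls, and past returns using~(\ref{assm}); your combinatorial analysis of which actions get crossed is essentially the paper's. The gap is in the case $\psi\in\ERetAct$. The paper does \emph{not} move $\psi$ leftward there: it leaves $\psi$ in place and instead moves the \emph{other} threads' return actions in $\lambda_3$ rightward past $\psi$ (Proposition~\ref{prop}(iii)), so that every individual swap is of the form ``a return happens later'', which is the sound direction. Your plan instead requires the swap of $(t',c)\,\psi$ into $\psi\,(t',c)$ for $t'\neq t$, i.e.\ executing the return \emph{earlier} and then running $c$ from the reduced state $\theta\diff\sigma_\psi$. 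This is not justified by the properties you cite: Strong Locality yields $f^{t'}_c(\sigma_1*\sigma_2)=f^{t'}_c(\sigma_1)*\{\sigma_2\}$ only under the hypothesis $f^{t'}_c(\sigma_1)\neq\top$, so it lets you enlarge the state a command runs on but says nothing about shrinking it; a priori $f^{t'}_c(\theta\diff\sigma_\psi)$ may be $\top$, and the annotation produced by the return (the unique $q_t$-substate of the current state) may change when the return is evaluated from $\theta$ rather than from a state in $f^{t'}_c(\theta)$. Assumption~(\ref{assm}) does not help here, since it constrains only the states transferred at \emph{interface} actions of different threads, not the cells touched by primitive commands. (One can in fact rescue this swap by additionally invoking the safety of $\cL:\Gamma$ at $\sigma'$, prefix closure of the trace set, and precision of $q_t$, but that argument is absent from your proof and is not the route the paper takes.)

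There is also a purely combinatorial obstruction to marching the return $\psi$ leftward: as you note, $\psi$ cannot cross the thread-$t$ primitive commands in $\lambda_3$, which form the body of the very invocation that $\psi$ terminates. But $\lambda_3$ may contain a return $\psi'$ of another thread occurring \emph{before} the last such $(t,c)$; your leftward movement of $\psi$ then gets stuck at that $(t,c)$ and never overtakes $\psi'$, so condition~(\ref{4}), which requires \emph{all} interface actions of $\lambda_3$ to end up after $\psi$ in the history, is not achieved. The paper's strategy of pushing $\psi'$ (and every other return in $\lambda_3$) to the right past $\psi$ sidesteps both problems at once; I recommend restructuring your return case along those lines.
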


To prove Lemma~\ref{lemma}, we convert $\lambda_1\lambda_2$ into $\lambda_1\lambda'_2\lambda''_2$ by
swapping adjacent actions in the trace a finite number of times while preserving
its properties of interest. These transformations are described by the following
proposition, which formalises the closure properties of the library-local
semantics we alluded to in Section~\ref{sec:rearr}.
\begin{prop}\label{prop}
Let $\cL : \Gamma$ be safe at $\sigma_0$ and consider
$\zeta \in \db{\cL : \Gamma}\sigma_0$ and a history $S$ 
such that $S \sqsubseteq_{\rho} \history(\zeta)$.
Then swapping any two adjacent actions $\varphi_1\varphi_2$ in $\zeta$
executed by different threads such that
{\renewcommand{\labelenumi}{(\theenumi)}
\begin{enumerate}[\em(i)]
\item
$\varphi_1\in\Act - \ERetAct$, $\varphi_2\in\ECallAct$; or
\item
$\varphi_1 \in \ERetAct$, $\varphi_2\in\ECallAct$, 
$\varphi_2$ precedes $\varphi_1$ in $S$, and~(\ref{assm}) holds; or
\item
$\varphi_1\in \ERetAct$, $\varphi_2\in\Act - \ECallAct$
\end{enumerate}}%
\noindent yields a trace $\zeta' \in \db{\cL : \Gamma}\sigma_0$ such that $S
\sqsubseteq_{\rho'} \history(\zeta')$ for the bijection $\rho'$ defined as
follows.  If $\varphi_1\not\in\ECallRetAct$ or $\varphi_2\not\in\ECallRetAct$,
then $\rho'=\rho$.  Otherwise, let $i$ be the index of $\varphi_1$ in
$S$.  Then $\rho'(i+1)=\rho(i)$, $\rho'(i)=\rho(i+1)$ and
$\rho'(k)=\rho(k)$ for $k\not\in \{i,i+1\}$.
\end{prop}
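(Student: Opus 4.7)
The plan is to take $\zeta = \alpha\,\varphi_1\,\varphi_2\,\beta$, let $\theta$ be the library state reached by evaluating $\alpha$ from $\sigma_0$, and show that the swapped trace $\zeta' = \alpha\,\varphi_2\,\varphi_1\,\beta$ is again in $\db{\cL : \Gamma}\sigma_0$ by checking that evaluating $\varphi_2\,\varphi_1$ from $\theta$ succeeds and lands in the same state $\theta_2$ as $\varphi_1\,\varphi_2$ does; the prefix $\alpha$ and suffix $\beta$ then evaluate identically in $\zeta$ and $\zeta'$, so $\zeta' \in \db{\cL : \Gamma}\sigma_0$ follows from the trace evaluation rules. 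I would then separately verify that the updated $\rho'$ yields $S \sqsubseteq_{\rho'} \history(\zeta')$.

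For the state-equality step, the proof case-splits on the shapes of $\varphi_1$ and $\varphi_2$. When one of them is a primitive command $(t,c)$, Strong Locality is the key tool: it lets us commute $f_c^t$ with an adjacent action by a different thread whose effect amounts to adding or removing a substate $\sigma$, via $f_c^t(\theta * \sigma) = f_c^t(\theta) * \{\sigma\}$. Footprint Preservation complements this by ensuring that compatibility conditions transfer between the pre- and post-states of the primitive (for example, $(\theta_1 * \sigma_2)\fdef$ transfers to $(\theta * \sigma_2)\fdef$ when $\varphi_2$ is a call). When both actions are interface actions of the same kind, the argument is purely algebraic: two adjacent calls commute by commutativity and associativity of $*$, and two adjacent returns commute by symmetric reasoning involving $\diff$.

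The delicate subcase is~(ii), with $\varphi_1 = (t_1, \ret\ m_1(\sigma_1))$ and $\varphi_2 = (t_2, \call\ m_2(\sigma_2))$. Here the original evaluation passes through $\theta_1 = \theta \diff \sigma_1$ and lands in $\theta_2 = (\theta \diff \sigma_1) * \sigma_2$, while the swapped order must pass through $\theta * \sigma_2$. Proposition~\ref{prop-diff} gives $(\theta * \sigma_2) \diff \sigma_1 = (\theta \diff \sigma_1) * \sigma_2 = \theta_2$ as soon as $(\theta * \sigma_2)\fdef$, so everything reduces to establishing this compatibility. Assumption~(\ref{assm}) does its work here: in $\RAM$ it forces $\dom(\sigma_1) \cap \dom(\sigma_2) = \emptyset$, and combined with $((\theta \diff \sigma_1) * \sigma_2)\fdef$, which yields $\dom(\theta \diff \sigma_1) \cap \dom(\sigma_2) = \emptyset$, this gives $\dom(\theta) \cap \dom(\sigma_2) = \emptyset$ and hence $(\theta * \sigma_2)\fdef$.

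Finally, I would verify that the updated $\rho'$ witnesses $S \sqsubseteq \history(\zeta')$ by checking the two ordering constraints of Definition~\ref{lin}. Since $\varphi_1$ and $\varphi_2$ are executed by different threads, no same-thread pair in $S$ is affected by the swap, so that constraint carries over from $\rho$. The only remaining constraint is that a return followed in $S$ by a call must map to indices in the same order in $\history(\zeta')$; this could be violated only by a pair involving $\varphi_1$ or $\varphi_2$, but in case~(i) $\varphi_1$ is not a return, in case~(iii) $\varphi_2$ is not a call, and in case~(ii) the extra hypothesis that $\varphi_2$ precedes $\varphi_1$ in $S$ means their relative order in $S$ is (call, return), which imposes no constraint. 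The main obstacle in the whole argument is case~(ii): without an assumption such as~(\ref{assm}), establishing $(\theta * \sigma_2)\fdef$ is far from automatic and requires the more delicate global reasoning about the balancedness of intermediate histories that is carried out later in Section~\ref{sec:rearr-proof}.
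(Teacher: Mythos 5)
Your proof follows essentially the same route as the paper's: decompose $\zeta$ around the swapped pair, show the two orders of evaluation reach the same state (Strong Locality plus Footprint Preservation when a primitive command is involved, Proposition~\ref{prop-diff} plus assumption~(\ref{assm}) for the return-then-call case in $\RAM$), and check the bijection $\rho'$ directly. The only step you leave implicit is that the swapped trace still belongs to the syntactic trace set $\dbp{\cL:\Gamma}$ (immediate since per-thread projections are unchanged), which the paper also dismisses as easy.
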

Since, in the library-local semantics, the library gains state at a call and
gives it up at a return, intuitively, the transformation in the proposition
allows the library to gain state earlier (i, ii) and give it up later (iii). The
assumption that $\varphi_2$ precede $\varphi_1$ in case (ii) is needed to ensure
that the transformation does not violate the linearizability relation. The proof
of case (ii) is the only place where the assumption~(\ref{assm}) is used.

\paragraph{\em Proof sketch for Proposition~\ref{prop}.}
Consider $\zeta = \zeta_1 \varphi_1 \varphi_2 \zeta_2 \in \db{\cL :
  \Gamma}\sigma_0$ and let $\zeta' = \zeta_1 \varphi_2 \varphi_1 \zeta_2$. The
proof of the required linearizability relationship is trivial. It therefore
remains to show that $(\_, \zeta') \in \db{\cL : \Gamma}\sigma_0$. We know that
for some $\alpha$ we have $\alpha \in \dbp{\cL}$ and $(\_, \zeta) \in
\db{\alpha}\sigma_0$. Let $\alpha = \alpha_1 \varphi'_1 \varphi'_2 \alpha_2$ and
$\alpha' = \alpha_1 \varphi'_2 \varphi'_1 \alpha_2$, where $\varphi'_1$ and
$\varphi'_2$ correspond to $\varphi_1$ and $\varphi_2$. It is easy to see that
$\alpha' \in \dbp{\cL}$. It therefore remains to show that $\zeta' \in
\db{\alpha'}\sigma_0$. The proof proceeds by case analysis on the kind of
actions $\varphi_1$ and $\varphi_2$. The justification of the case when
$\varphi_1$ is a return and $\varphi_2$ is a call follows from~(\ref{assm}) by
the argument given earlier. Out of the remaining cases, we only consider a
single illustrative one: $\varphi_1 = (t_1, c)$ and $\varphi_2 = (t_2, \call\
m_2(\sigma))$ for $t_1\not=t_2$.

Assume $(\theta', \zeta_1\varphi_1\varphi_2) \in
\db{\alpha_1\varphi'_1\varphi'_2}\sigma_0$. Then for some $\theta$ we have
$(\theta, \zeta_1) \in \db{\alpha_1}\sigma_0$, $f_c^{t_1}(\theta) \not= \top$
and $\theta' \in f_c^{t_1}(\theta)*\{\sigma\}$. By the Footprint Preservation
property, we get $(\theta*\sigma)\fdef$. Then by the Strong Locality property,
$$
\theta' \in f_c^{t_1}(\theta)*\{\sigma\} = f_c^{t_1}(\theta*\sigma).
$$
Hence, $(\theta', \zeta_1\varphi_2\varphi_1) \in
\db{\alpha_1\varphi'_2\varphi'_1}\sigma_0$.  
Thus, Footprint Preservation and Strong Locality guarantee that a call can
be safely executed earlier than a primitive command.
\qed

\paragraph{\em Proof of Lemma~\ref{lemma}.}
From~(\ref{1}) and~(\ref{2}) it follows that $\lambda_2=\lambda_3\psi\lambda_4$ for some
traces $\lambda_3$ and $\lambda_4$, where $\rho$ maps $\psi$ in $H_1\psi H_2$ to the
$\psi$ action shown in $\lambda_2$; see Figure~\ref{fig:diag}. We consider two
cases.

\begin{figure}
\includegraphics[scale=.32, trim= 0 18cm 0 0cm]{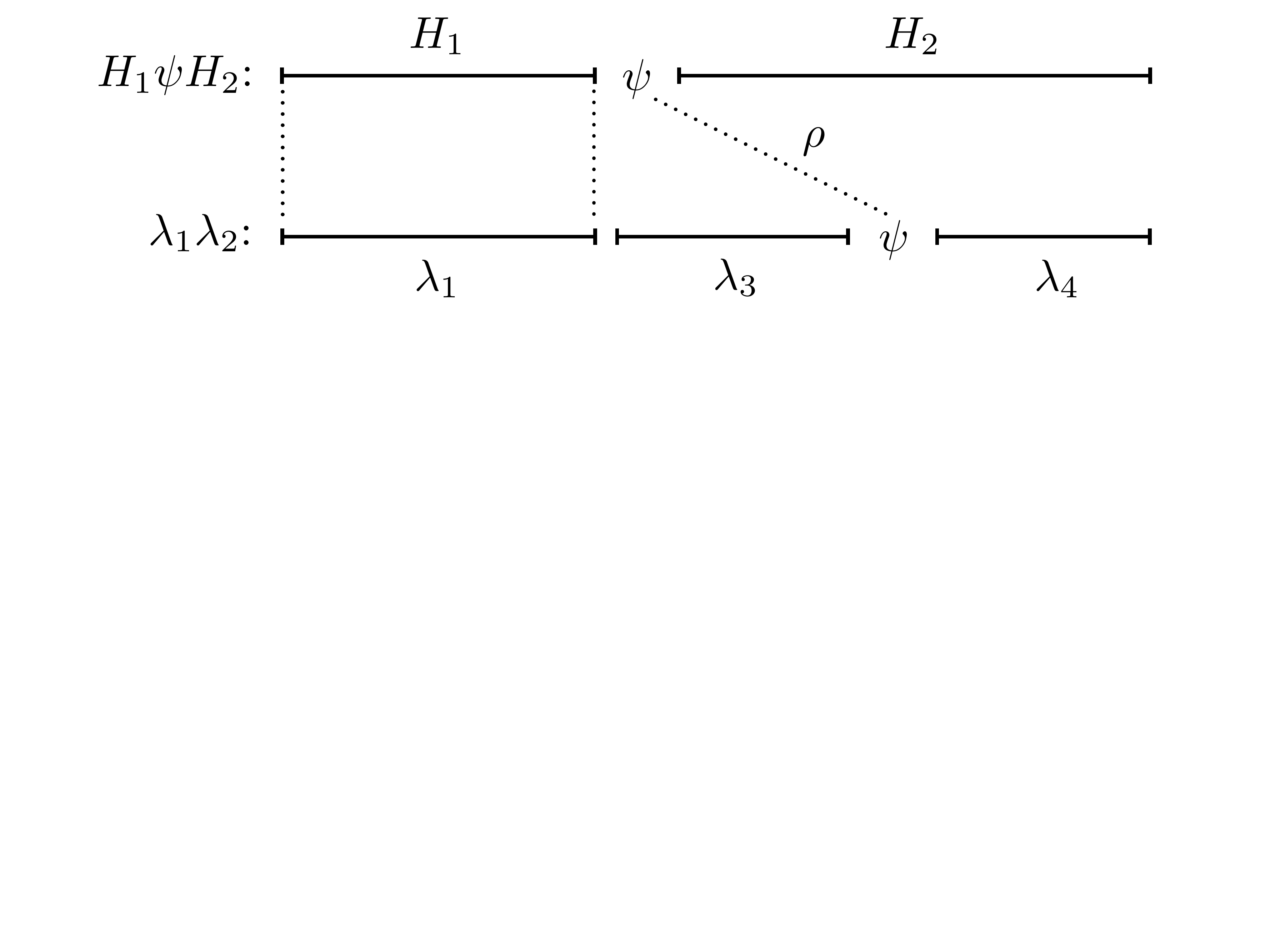}
\caption{Illustration of the proof of Lemma~\ref{lemma}}
\label{fig:diag}
\end{figure}

\medskip

1. $\psi\in\ECallAct$. Let $t$ be the thread executing $\psi$.
By~(\ref{2}) we have $(H_1\psi H_2)|_t = (\history(\lambda_1\lambda_2))|_t$. Then, since
$\lambda_1\lambda_2$ is a library trace,~(\ref{1}) implies that there are no actions by
thread $t$ in $\lambda_3$. Furthermore, for any return action $\varphi$ in $\lambda_3$,
the action in $H_1\psi H_2$ corresponding to it according to $\rho$ is in
$H_2$. Thus, we can move the action $\psi$ to the position between $\lambda_1$ and
$\lambda_3$ by swapping it with adjacent actions a finite number of times as
described in Proposition~\ref{prop}(i, ii).  As a result, we obtain the trace
$\lambda_1\psi\lambda_3\lambda_4 \in \db{\cL : \Gamma}\sigma'$.  Conditions~(\ref{4})--(\ref{5}) then
follow from Proposition~\ref{prop}(i, ii) for $\lambda'_2=\psi$ and
$\lambda''_2=\lambda_3\lambda_4$.

\medskip

2. $\psi\in\ERetAct$. Assume that $\lambda_3$ contains a call action
$\varphi$, so that it precedes the return action $\psi$ in
$\history(\lambda_1\lambda_2)$. Then by~(\ref{1}) and~(\ref{2}) the action in $H_1\psi
H_2$ corresponding to $\varphi$ according to $\rho$ is in $H_2$ and thus follows
$\psi$ in $H_1\psi H_2$. This violates the preservation of the order of
non-overlapping method invocations required by~(\ref{2}). Hence, there are no
call actions in $\lambda_3$. Since $\lambda_1\lambda_2$ is a library trace, 
this implies that for any action $\varphi=(t,\ret\
\_)$ in $\lambda_3$ there are no actions by the thread $t$ in $\lambda_3$ following
$\varphi$.  Thus, we can move all return actions in the subtrace $\lambda_3$ of
$\lambda_1\lambda_2$ to the position between $\psi$ and $\lambda_4$ by swapping them with
adjacent actions a finite number of times as described in
Proposition~\ref{prop}(iii).  We thus obtain the trace
$\lambda_1\lambda'_3\psi\lambda'_4\lambda_4 \in \db{\cL : \Gamma}\sigma'$, where $\lambda'_4$ consists of
all return actions in $\lambda_3$, and $\lambda'_3$ of the rest of actions in the
subtrace; in particular, $\lambda'_3$ does not contain any interface actions.
Conditions~(\ref{4})--(\ref{5}) then follow from Proposition~\ref{prop}(iii) for
$\lambda'_2=\lambda'_3\psi$ and $\lambda''_2=\lambda'_4\lambda_4$.  \qed

\smallskip

This completes the proof of Lemma~\ref{cor-rearr} under the
assumption~(\ref{assm}). Let us now lift this assumption and consider the only
place in the proof of the lemma that relies on it---that when we swap a return
followed by a call using Proposition~\ref{prop}(ii) in case 1 in the proof of
Lemma~\ref{lemma}. Let us now identify precise conditions under which this
situation happens. Let $\psi = (t_1,\call\ m_1(\sigma_1))$ and let the adjacent
return action with which we are trying to swap it be 
$(t_2,\ret\ m_2(\sigma_2))$.
Let $S^1$ be the target history $H_1\psi H_2$ from Lemma~\ref{lemma} and $S^2$
be the history of the trace in which we are trying to swap the return and the call.
Then the two histories are of the form 
\be\label{2hist}
\begin{array}{r@{\ }c@{\ }l}
S^1 &=& S\, (t_1,\call\ m_1(\sigma_1)) \,S_1\, (t_2,\ret\ m_2(\sigma_2))\, S_2;\\[2pt]
S^2 &=& S S'_1\, (t_2,\ret\ m_2(\sigma_2))\, (t_1,\call\ m_1(\sigma_1))\, S'_2
\end{array}
\ee
for some $S, S_1, S_2, S'_1, S'_2$.
Furthermore, from the conditions of Lemma~\ref{lemma}, we have:
{\renewcommand{\labelenumi}{(\theenumi)}
\begin{enumerate}[(i)]
\item\label{cond0} $t_1 \not= t_2$;
\item
$S^1$ and $S^2$ are balanced from some $l_1$ and $l_2$, respectively, such
that $l_2 \preceq l_1$; and
\item $S^1 \sqsubseteq_\rho S^2$, where $\id_{|S|}(\rho)$ 
  and $\rho$ maps $(t_1,\call\ m_1(\sigma_1))$ and $(t_2,\ret\ m_2(\sigma_2))$
  in $S^1$ to the corresponding actions shown in $S^2$.
\end{enumerate}
}

As the following proposition shows, we can always do the desired transformation
if the history
\be\label{hist1}
S S'_1\, (t_1,\call\ m_1(\sigma_1))\, (t_2,\ret\ m_2(\sigma_2)) \,S'_2
\ee
resulting from swapping the return and the call in $S^2$ is balanced from $l_2$.
\begin{prop}\label{cor-call-ret}
Let $\cL : \Gamma$ be safe at $\sigma_0$. Consider traces
$$
\zeta =\zeta_1\,(t_2,\ret\ m_2(\sigma_2))\, (t_1,\call\ m_1(\sigma_1))\,\zeta_2 \in
\db{\cL : \Gamma}\sigma_0
$$
and
$$
\zeta ' =\zeta_1\, (t_1,\call\ m_1(\sigma_1))\, (t_2,\ret\ m_2(\sigma_2))\, \zeta_2.
$$
If $\history(\zeta')$ is balanced from $\delta(\sigma_0)$, then $\zeta' \in
\db{\cL : \Gamma}\sigma_0$.
\end{prop}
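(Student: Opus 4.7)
The plan is to adapt the case analysis from the proof of Proposition~\ref{prop}, now justifying the return/call swap directly from the balancedness hypothesis on $\history(\zeta')$ instead of the simplifying assumption~(\ref{assm}). As in that proof, the syntactic part is easy: since $\zeta \in \db{\cL : \Gamma}\sigma_0$, some program trace $\alpha \in \dbp{\cL}$ evaluates to $\zeta$, and because the two actions being swapped belong to distinct threads, the corresponding syntactic actions in $\alpha$ can be interchanged to produce $\alpha' \in \dbp{\cL}$. So everything reduces to showing that $\db{\alpha'}\sigma_0$ contains $\zeta'$, which amounts to checking that the state reached after the swapped pair of interface actions coincides with the state reached after the original pair.

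Let $\theta$ be the library state after evaluating $\zeta_1$ from $\sigma_0$ in the execution giving rise to $\zeta$; this state is the same in both executions. In $\zeta$, the return of $m_2$ produces $\theta \diff \sigma_2$ (so $(\theta \diff \sigma_2)\fdef$) and the subsequent call of $m_1$ produces $(\theta \diff \sigma_2) * \sigma_1$ (which is therefore also defined). In $\zeta'$, the call must first yield $\theta * \sigma_1$, which requires $(\theta * \sigma_1)\fdef$; this is the single nontrivial fact to establish. Once it is in hand, Proposition~\ref{prop-diff} gives $(\theta * \sigma_1) \diff \sigma_2 = (\theta \diff \sigma_2) * \sigma_1$, so the state after the return in $\zeta'$ is defined and equal to the state after the call in $\zeta$. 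The remaining subtrace $\zeta_2$ then evaluates identically in both cases, yielding $\zeta' \in \db{\cL : \Gamma}\sigma_0$.

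The main obstacle, and the whole reason balancedness of $\history(\zeta')$ enters, is deriving $(\theta * \sigma_1)\fdef$ from that hypothesis. First, by induction along $\zeta_1$, using the Footprint Preservation property of primitive commands for internal steps and the definition of $\dbf{\cdot}$ on call and return actions, one obtains $\dbf{\history(\zeta_1)}\delta(\sigma_0) = \delta(\theta)$. Since $\history(\zeta_1)\,(t_1,\call\ m_1(\sigma_1))$ is a prefix of $\history(\zeta')$, balancedness forces $\delta(\theta) \circ \delta(\sigma_1)$ to be defined. By the definition of $\circ$, this produces witnesses $\theta' \in \delta(\theta)$ and $\sigma'_1 \in \delta(\sigma_1)$ with $\theta' * \sigma'_1$ defined. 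Applying the defining property of footprints (Definition~\ref{defn-foot}) twice---taking $\sigma'' = \sigma'_1$ to replace $\theta'$ by $\theta$, and then $\sigma'' = \theta$ to replace $\sigma'_1$ by $\sigma_1$---yields $(\theta * \sigma_1)\fdef$, completing the proof.
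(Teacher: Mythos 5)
Your proof is correct and follows exactly the route the paper sketches in the discussion opening Section~\ref{sec:rearr-proof} (the paper itself states this proposition without a written-out proof): reduce everything to the definedness of $\theta * \sigma_1$, derive that from balancedness of $\history(\zeta')$ via $\dbf{\history(\zeta_1)}\delta(\sigma_0) = \delta(\theta)$, and close the diagram with Proposition~\ref{prop-diff}. The only cosmetic remark is that your final two applications of Definition~\ref{defn-foot} are redundant, since the proved well-definedness of $\circ$ already lets you take $\theta$ and $\sigma_1$ themselves as the representatives witnessing that $\delta(\theta) \circ \delta(\sigma_1)$ is defined.
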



Thus, the only problematic case we have is when the history~(\ref{hist1}) is not
balanced from $l_2$. We summarise all the conditions under which such case can
happen in the following definition.
\begin{defi}\label{def-confl}
Histories $S^1$ and $S^2$ of the form~(\ref{2hist}) are {\bf \em conflicting} if
the conditions (i)--(iii) above are satisfied and the history~(\ref{hist1}) is
not balanced from $l_2$.
\end{defi}
Given Proposition~\ref{cor-call-ret}, the only case when the transformation in
the proof of Lemma~\ref{lemma} can fail to convert the trace is when $H_1\psi
H_2$ and the history currently being transformed are conflicting. Thus, with the
assumption~(\ref{assm}) lifted, Lemma~\ref{lemma} turns into
\begin{lem}\label{lemma2}
  Consider a history $H_1\psi H_2$, and a trace $\lambda_1\lambda_2 \in
  \db{\cL : \Gamma}\sigma'$ such that (\ref{1}) and (\ref{2}) hold. Then
  either $H_1\psi H_2$ and another history composed of actions from
  $\history(\lambda_1\lambda_2)$ are conflicting, or there exist traces $\lambda'_2$ and
  $\lambda''_2$ such that $\lambda_1\lambda'_2\lambda''_2 \in \db{\cL : \Gamma}\sigma'$
  and~(\ref{4}) and~(\ref{5}) hold.
\end{lem}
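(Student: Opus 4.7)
The plan is to follow exactly the proof structure of Lemma~\ref{lemma}, but replace the single use of Proposition~\ref{prop}(ii) by an application of Proposition~\ref{cor-call-ret}, exposing the conflicting case whenever the balancedness precondition of that proposition fails. Since Lemma~\ref{lemma} was proved under assumption~(\ref{assm}) with~(\ref{assm}) invoked only in Proposition~\ref{prop}(ii), the rest of the argument (in particular Case~2 where $\psi\in\ERetAct$ and all applications of Proposition~\ref{prop}(i) and~(iii)) carries over verbatim without any extra work.

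I would split along $\psi\in\ECallAct$ versus $\psi\in\ERetAct$ exactly as before. In Case~2 ($\psi\in\ERetAct$), I would reuse the original argument without change: by~(\ref{1}) and~(\ref{2}), no call actions can occur in $\lambda_3$, so the desired rearrangement is obtained purely by Proposition~\ref{prop}(iii), which does not depend on~(\ref{assm}). This directly yields $\lambda'_2$, $\lambda''_2$, and~(\ref{4})--(\ref{5}), with no possibility of a conflict.

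In Case~1 ($\psi=(t_1,\call\ m_1(\sigma_1))\in\ECallAct$), I would again move $\psi$ leftwards towards the boundary between $\lambda_1$ and $\lambda_3$, swapping with adjacent actions one at a time. Every swap whose preceding action is not a return is handled by Proposition~\ref{prop}(i). The only delicate swap is when the preceding action is a return $(t_2,\ret\ m_2(\sigma_2))$. Here I would apply Proposition~\ref{cor-call-ret} instead of Proposition~\ref{prop}(ii): if the history obtained after this single swap is balanced from $\delta(\sigma')$, the swap is legal and we continue; otherwise, the current history and the target history $H_1\psi H_2$ satisfy conditions~(i)--(iii) above Definition~\ref{def-confl} by construction (the threads differ because $\lambda_1\lambda_2$ is a library trace, and $\sqsubseteq_\rho$ is preserved along the rearrangement so far), and the unbalancedness of the post-swap history is precisely the negation of the hypothesis of Proposition~\ref{cor-call-ret}. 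Hence the two histories are conflicting per Definition~\ref{def-confl}, and we return the first disjunct of the conclusion.

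The main obstacle I anticipate is bookkeeping: I need to check that at the moment a blocked swap occurs, the current history really is of the form~(\ref{2hist}) and that conditions (i)--(iii) of Definition~\ref{def-confl} hold for it against $H_1\psi H_2$ with the appropriate $l_2\preceq l_1$. Conditions (i) and (iii) follow from how $\rho$ is threaded through Proposition~\ref{prop} (it is updated only on the positions being swapped, and the call $\psi$ is never yet in its target position, so $\id_{|S|}(\rho)$ holds where $S$ is the prefix already stabilised). For condition~(ii), I need the invariant that every intermediate trace produced along the rearrangement belongs to $\db{\cL : \Gamma}\sigma'$, whence Proposition~\ref{safe2wb} gives balancedness of its history from $\delta(\sigma')$, and $l_2\preceq l_1$ follows from the original $(\delta(\sigma),H_1\psi H_2)\sqsubseteq(\delta(\sigma'),\history(\lambda_1\lambda_2))$. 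Once these invariants are maintained, the recognition of the conflicting case is immediate and the rest of the proof is a clean reuse of the existing argument.
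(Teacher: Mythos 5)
Your proposal is correct and matches the paper's own treatment: the paper obtains Lemma~\ref{lemma2} precisely by rerunning the proof of Lemma~\ref{lemma} and, at the single point where assumption~(\ref{assm}) was used (swapping a return past a call via Proposition~\ref{prop}(ii)), substituting Proposition~\ref{cor-call-ret} and reading off Definition~\ref{def-confl} when its balancedness hypothesis fails. Your bookkeeping of conditions (i)--(iii) (balancedness of intermediate histories via Proposition~\ref{safe2wb}, $l_2\preceq l_1$ from hypothesis~(\ref{2}), and $\rho$ threaded through the swaps) is exactly the justification the paper leaves implicit.
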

We now show that no conflicting pairs of histories exist, hence guaranteeing
that Lemma~\ref{lemma2} can always be used to construct $\alpha_{k+1}$ from
$\alpha_k$ in transforming $\lambda'$ into $\lambda$. This completes the proof of
Lemma~\ref{cor-rearr} in the general case.

We first discuss the main idea of the proof.  The history $S^1$ in~(\ref{2hist})
is similar to~(\ref{hist1}) in that the call precedes the return. We would like
to use the fact that $S^1$ is balanced to prove that so is~(\ref{hist1}),
thereby yielding a contradiction. As we noted at the beginning of this section,
this is not straightforward due to the differences in the form of the histories
$S^1$ and $S^2$ other than the precedence of the two call and return actions. We
resolve this problem by adjusting the strategy we used above to transform
$\lambda'$ into $\lambda$ under the assumption~(\ref{assm}) to iron out the
differences between $S^1$ and $S^2$. In particular, we use a variant of the
transformation that, when the process of moving the call action $\psi$ to the
left in $\lambda_1\lambda_2$ gets stuck (Figure~\ref{fig:diag}), leaves the
corresponding action in $H_1\psi H_2$ unmatched and continues bringing the rest
of the trace $\lambda_1\lambda_2$ in sync with the target history.

\begin{lem}\label{call-ret}
There are no conflicting pairs of histories.
\end{lem}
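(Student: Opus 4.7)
I plan to prove Lemma~\ref{call-ret} by contradiction. Suppose $(S^1, S^2)$ is a conflicting pair satisfying conditions~(i)--(iii) and Definition~\ref{def-confl}; I aim to show that~(\ref{hist1}) is nonetheless balanced from $l_2$, yielding the desired contradiction. Since~(\ref{hist1}) differs from $S^2$ only in the order of the adjacent actions $(t_2,\ret\ m_2(\sigma_2))$ and $(t_1,\call\ m_1(\sigma_1))$, and $S^2$ is already balanced from $l_2$, Proposition~\ref{prop-delta} reduces the problem to establishing a single compatibility: $\dbf{SS'_1}(l_2) \circ \delta(\sigma_1)$ is defined.

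The first step is to execute the adjusted rearrangement signalled just before the lemma. I would take a library trace $\lambda \in \db{\cL : \Gamma}\sigma'$ with $\history(\lambda) = S^2$ (such $\lambda$ exists because $S^2$ is the history of an intermediate trace in the transformation of $\lambda'$ into $\lambda$). I then apply the algorithm of Lemma~\ref{lemma2} targeting $S^1$, but using only the swaps of types~(i) and~(iii) of Proposition~\ref{prop}, which do not require assumption~(\ref{assm}). Whenever moving the call $\psi = (t_1, \call\ m_1(\sigma_1))$ leftward would demand a type~(ii) swap, I leave $\psi$ in place and continue synchronising the remaining actions of $S_1$, the return $(t_2, \ret\ m_2(\sigma_2))$, and $S_2$ to match $S^1$ with $\psi$ skipped. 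This produces a library trace $\hat\lambda \in \db{\cL : \Gamma}\sigma'$ whose history $\hat S$ matches $S^1$ outside $\psi$; by Proposition~\ref{safe2wb}, $\hat S$ is balanced from $l_2$.

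The next step translates footprint compatibilities between $l_1$ and $l_2$. Writing $l_1 = l_2 \circ l_0$ (possible since $l_2 \preceq l_1$), I would prove by induction on history length, using Proposition~\ref{prop-delta}, that if both $\dbf{T}(l_2)$ and $\dbf{T}(l_2) \circ l_0$ are defined, then $\dbf{T}(l_1) = \dbf{T}(l_2) \circ l_0$. Combining this with (a) the balancedness of $S^1$ from $l_1$, which gives the compatibility of $\delta(\sigma_1)$ with the footprint reached just after $S$; (b) the balancedness of $\hat S$ from $l_2$, which supplies a parallel footprint evolution not involving $\psi$; and (c) the order information carried by the bijection $\rho$ from $S^1 \sqsubseteq_\rho S^2$, I can transfer the $\delta(\sigma_1)$-compatibility from the $l_1$-evaluation of $S^1$ to the $l_2$-evaluation of $S^2$ at the position immediately before $(t_1, \call\ m_1(\sigma_1))$. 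This yields that $\dbf{SS'_1}(l_2) \circ \delta(\sigma_1)$ is defined, making~(\ref{hist1}) balanced from $l_2$ and contradicting Definition~\ref{def-confl}.

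The main obstacle is the bookkeeping of the adjusted rearrangement: at each step I must verify that the call or return to be moved into position can be moved using only type~(i) or~(iii) swaps, so~(\ref{assm}) is never invoked; I must also track exactly where $\psi$ ends up in $\hat S$ (necessarily to the right of $(t_2, \ret\ m_2(\sigma_2))$) to guarantee that $\hat S$ has enough structural similarity with both $S^1$ and~(\ref{hist1}) for the footprint-translation step to go through. A further subtlety is that $S_1$ and $S'_1$ are different sequences of actions, so bridging between them in the footprint equalities requires careful use of the order constraints enforced by $\rho$---in particular, that no invocation in $S'_1$ overlaps the transferred call or return in a way that would invalidate the compatibility transfer.
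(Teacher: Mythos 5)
Your overall strategy---derive a contradiction by showing that $(\dbf{S S'_1}l_2)\circ\delta(\sigma_1)$ must in fact be defined, using an adjusted rearrangement that leaves the problematic call $\psi$ unmatched and keeps synchronising the rest---is the same as the paper's. But there is a genuine gap at the point you yourself flag as ``the main obstacle'': you provide no fallback when the synchronisation of the \emph{other} actions gets stuck. Bringing the calls of $S'_1$ into position requires moving calls leftward past returns, and each such move is exactly a type-(ii) swap whose legitimacy again depends on balancedness of the resulting history (Proposition~\ref{cor-call-ret}); it can fail for the same reason the original swap of $\psi$ did. So in general your $\hat S$ is not ``$S^1$ with only $\psi$ missing'': an arbitrary number of further calls may end up displaced. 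The paper resolves this by generalising the invariant to the form ({\bf F}), in which $S_0$ may contain \emph{arbitrarily many} extra call actions relative to $S'_0$, running a well-founded iteration on the length of $S_3$ (each failed move simply adds another extra call to $S_0$ and strictly shrinks $S_3$), and then discharging all the extra calls at once via Proposition~\ref{prop-foot}, whose whole point is to compare a history against the \emph{same history with extra calls inserted}, evaluated from footprints $l_1 \succeq l_2$, producing the combined footprint $l_c$ of all extra calls. Your proposed footprint-translation lemma (same history $T$ evaluated from $l_1$ versus $l_2$) does not have this shape and cannot bridge between $\dbf{S_0}l_1$ and $\dbf{S'_0}l_2$ when $S_0$ and $S'_0$ differ by more than the single action $\psi$. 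Without the stronger invariant and the ``extra calls'' accounting, the final compatibility transfer does not go through.

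A secondary issue: you ground the argument in a library trace $\lambda\in\db{\cL:\Gamma}\sigma'$ realising $S^2$, but Definition~\ref{def-confl} and the lemma are statements purely about histories and footprints; no realising trace is guaranteed. The paper's proof accordingly never touches traces---it manipulates balanced histories directly, justifying each swap by preservation of $\dbf{\cdot}$ (via Propositions~\ref{prop-delta} and~\ref{cor-call-ret}-style reasoning on footprints) rather than by membership in $\db{\cL:\Gamma}\sigma'$. This is repairable (for the application in Lemma~\ref{lemma2} only trace-realisable conflicts matter), but as written your argument does not establish the lemma as stated.
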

\proof
Consider histories $S^1$ and $S^2$ satisfying the conditions in
Definition~\ref{def-confl}. Since $S^2$ is balanced from $l_2$, 
$$
\db{S S'_1\, (t_2,\ret\ m_2(\sigma_2))}^\sharp l_2
=(\db{S S'_1}^\sharp l_2)\fdiff\delta(\sigma_2)
$$
is defined. Assume
$$
\db{S S'_1\, (t_1, \call\ m_1(\sigma_1))}^\sharp l_2=
(\db{S S'_1}^\sharp l_2)\circ \delta(\sigma_1)
$$
is defined. Since $(\db{S S'_1}^\sharp l_2)\fdiff \delta(\sigma_2)$ is defined, by 
Proposition~\ref{prop-delta}, we have:
\begin{multline*}
\db{S S'_1\, (t_1, \call\ m_1(\sigma_1))\, (t_2,\ret\ m_2(\sigma_2))}^\sharp l_2 
= ((\db{S S'_1}^\sharp l_2) \circ  \delta(\sigma_1)) \fdiff \delta(\sigma_2) = {}
\\
 ((\db{S S'_1}^\sharp l_2) \fdiff  \delta(\sigma_2)) \circ  \delta(\sigma_1)
= \db{S S'_1\, (t_2,\ret\ m_2(\sigma_2))\, (t_1, \call\ m_1(\sigma_1))}^\sharp l_2.
\end{multline*}
  Then~(\ref{hist1}) is balanced from $l_2$, contradicting our
assumptions. Hence, ${((\db{S S'_1}^\sharp l_2)\circ \delta(\sigma_1))}\fundef$.

A call action in $S'_1$ cannot be in $S_2$: in this case it would follow
$(t_2,\ret\ m_2(\sigma_2))$ in $S^1$, but precede it in $S^2$, contradicting
$S^1 \sqsubseteq S^2$.
Hence, all call actions in $S'_1$ are in $S_1$. Let $S_1 = S_3 S_4$, where
$S_3$ is the minimal prefix of $S_1$ containing all call actions from $S'_1$. Then
$$
\begin{array}{r@{\ }c@{\ }l}
S^1 &=& S\, (t_1,\call\ m_1(\sigma_1))\, S_3 S_4\, (t_2,\ret\ m_2(\sigma_2)) \,S_2;\\[2pt]
S^2 &=& S S'_1\, (t_2,\ret\ m_2(\sigma_2))\, (t_1,\call\ m_1(\sigma_1)) \,S'_2.
\end{array}
$$
If $S_3$ is non-empty, any return action in it precedes its last call action,
which is also in $S'_1$. Since $S^1 \sqsubseteq S^2$, such a return action
also has to be in $S'_1$. Thus, all return actions in $S_3$ are in $S'_1$.


The traces $S^1$ and $S^2$ are of the following more general form, obtained by
letting $S_0 = S\, (t_1,\call\ m_1(\sigma_1))$ and $S'_0 = S$:
$$
\begin{array}{r@{\ }c@{\ }l}
S^1 &=& S_0 S_3 S_4\, (t_2,\ret\ m_2(\sigma_2)) \,S_2;\\[2pt]
S^2 &=& S'_0 S'_1\, (t_2,\ret\ m_2(\sigma_2))\, (t_1,\call\ m_1(\sigma_1)) \,S'_2,
\end{array}
$$
where
\begin{iteMize}{$\bullet$}
\item $S^1$ and $S^2$ are balanced from some $l_1$ and $l_2$, respectively,
  such that $l_2 \preceq l_1$;
\item
$S_0$ and $S'_0$ are identical, except $S_0$ may have some extra call actions;
\item
$S^1 \sqsubseteq_\rho S^2$;
\item
$\rho^{-1}$ maps all call actions in $S'_1$ to actions in $S_3$;
\item
$\rho$ maps all return actions in $S_3$ to actions in $S'_1$;
\item $\rho^{-1}$ maps actions in $S'_0$ to those in $S_0$, in particular $(t_1,\call\
  m_1(\sigma_1))$ to an action in $S_0$, and $(t_2,\ret\ m_2(\sigma_2))$ to the
  same action shown in $S^1$; and
\item ${((\db{S'_0 S'_1}^\sharp l_2)\circ \delta(\sigma_1))}\fundef$.
\end{iteMize}
We denote this form by ({\bf F}). The additional call actions in $S_0$ are the
ones for which the transformation in Lemma~\ref{lemma} failed.  The conditions
relating $S_3$ and $S'_1$ imply that $S_3$ may have more calls than $S'_1$, and
$S'_1$ more returns than $S_3$. Thus, intuitively, $S_0S_3$ gains more state
than $S'_0S'_1$, including that transferred by $(t_1,\call\ m_1(\sigma_1))$, and
$S'_0S'_1$ gives up more than $S_0S_3$. In the following, we use this and the
fact that $S^1$ is balanced from a bigger footprint than $S^2$ to show that
${((\db{S'_0 S'_1}^\sharp l_2)\circ \delta(\sigma_1))}\fdef$, thereby yielding a
contradiction.

To this end, we describe a process that transforms the histories $S^1$ and
$S^2$ into another pair of histories satisfying the conditions above, but such
that $S_3$ is strictly smaller. Repeatedly applying this process, we can make
$S_3$ empty, obtaining histories satisfying ({\bf F}):
\be\label{hist-temp}
\begin{array}{l}
S_0 S_4\, (t_2,\ret\ m_2(\sigma_2)) \,S_2;\\[2pt]
S'_0 S'_1\, (t_2,\ret\ m_2(\sigma_2))\, (t_1,\call\ m_1(\sigma_1)) \,S'_2.
\end{array}
\ee
In particular, ${((\db{S'_0 S'_1}^\sharp l_2)\circ \delta(\sigma_1))}\fundef$.
Before describing the transformation process, we show that, given the above pair
of histories, we can obtain a contradiction. We use the following simple
proposition, proved in Appendix~\ref{sec:proof-foot}.
\begin{prop}\label{prop-foot}
  Assume $S$ is identical to $S'$, except it may have extra calls, and $S$ and $S'$
  are balanced from $l_1$ and $l_2$, respectively, such that $l_2 \preceq
  l_1$. Then the $\circ$-combination $l_c$ of footprints of states transferred
  at the extra call actions in $S$ is defined, $S'$ is balanced from $l_1$ and
$$
{\db{S}^\sharp l_1 = (\db{S'}^\sharp l_1) \circ l_c}
\;\wedge\;
{\db{S'}^\sharp l_2 \preceq \db{S'}^\sharp l_1}.
$$
\end{prop}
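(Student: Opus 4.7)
The plan is to prove the proposition by induction on the length of $S$, strengthening the claim so that well-definedness of $l_c$, balancedness of $S'$ from $l_1$, the equation $\db{S}^\sharp l_1 = (\db{S'}^\sharp l_1) \circ l_c$, and the inequality $\db{S'}^\sharp l_2 \preceq \db{S'}^\sharp l_1$ are all carried together as a joint inductive invariant. The base case $S = S' = \varepsilon$ is immediate on setting $l_c = \delta(e)$, where $e$ is the unit of $\Sigma$: then $\db{\varepsilon}^\sharp l = l$, so $\db{S}^\sharp l_1 = l_1 = l_1 \circ \delta(e)$, $S'$ is vacuously balanced from $l_1$, and $\db{S'}^\sharp l_2 = l_2 \preceq l_1 = \db{S'}^\sharp l_1$ by hypothesis.

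For the inductive step I extend $S$ by one action $\psi$ and split into three cases. If $\psi = (t,\call\ m(\sigma))$ is an extra call, then $S'$ is unchanged; balancedness of $S\psi$ from $l_1$ together with the inductive equation for $\db{S}^\sharp l_1$ and associativity of $\circ$ forces $l_c' = l_c \circ \delta(\sigma)$ to be defined and gives the updated equation $\db{S\psi}^\sharp l_1 = (\db{S'}^\sharp l_1) \circ l_c'$. If $\psi$ is a call shared with $S'$, associativity and commutativity of $\circ$ let me extract definedness of $(\db{S'}^\sharp l_1) \circ \delta(\sigma)$ from definedness of $(\db{S'}^\sharp l_1) \circ l_c \circ \delta(\sigma)$, establishing balancedness of $S'\psi$ from $l_1$; the equation $\db{S\psi}^\sharp l_1 = (\db{S'\psi}^\sharp l_1) \circ l_c$ and the $\preceq$ relation then follow by reassociating $\circ$ and invoking the IH decomposition $\db{S'}^\sharp l_1 = (\db{S'}^\sharp l_2) \circ m$.

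The return case $\psi = (t,\ret\ m(\sigma))$, necessarily shared with $S'$, is the only one that uses Proposition~\ref{prop-delta}. Combining the IH decomposition $\db{S'}^\sharp l_1 = (\db{S'}^\sharp l_2) \circ m$ with balancedness of $S'\psi$ from $l_2$ (which gives $(\db{S'}^\sharp l_2) \fdiff \delta(\sigma)$ defined), Proposition~\ref{prop-delta} yields $\db{S'\psi}^\sharp l_1 = ((\db{S'}^\sharp l_2) \fdiff \delta(\sigma)) \circ m = (\db{S'\psi}^\sharp l_2) \circ m$, so that $S'\psi$ is balanced from $l_1$ and the $\preceq$ relation is preserved. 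A second application of Proposition~\ref{prop-delta}, now to $\db{S}^\sharp l_1 = (\db{S'}^\sharp l_1) \circ l_c$ using the freshly established definedness of $(\db{S'}^\sharp l_1) \fdiff \delta(\sigma)$, produces $\db{S\psi}^\sharp l_1 = (\db{S'\psi}^\sharp l_1) \circ l_c$ and closes the induction.

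The main obstacle is bookkeeping: at every step one must verify the definedness preconditions for Proposition~\ref{prop-delta} and for the commutativity/associativity manipulations of $\circ$ at the footprint level. The algebraic scaffolding that makes this possible is exactly the assumption that $*$ is cancellative on footprints, which makes $\fdiff$ well-defined on $\Foot(\Sigma)$ and justifies freely moving the ``extra-call'' footprint $l_c$ past any sequence of shared calls and returns without losing information.
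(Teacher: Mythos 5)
Your proposal is correct and follows essentially the same route as the paper's proof: induction on the length of $S$ with the same three-way case split (extra call, shared call, return), the same joint inductive invariant, and the same use of Proposition~\ref{prop-delta} to commute $\fdiff$ past $\circ$ in the return case. The only cosmetic difference is that you obtain definedness of $(\db{S'}^\sharp l_1)\fdiff\delta(\sigma)$ by a second explicit appeal to Proposition~\ref{prop-delta} applied to the $\preceq$-witness, where the paper reads it off the relation $\db{S'_0}^\sharp l_2 \preceq \db{S'_0}^\sharp l_1$ directly.
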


\noindent Consider the histories in~(\ref{hist-temp}).
Since all calls from $S'_1$ are in $S_3 = \varepsilon$, $S'_1$ contains only
returns. Since the histories are balanced from $l_1$ and $l_2$, respectively,
$\db{S_0}^\sharp l_1$ and $\db{S'_0}^\sharp l_2$ are defined.  The history $S_0$
is identical to $S'_0$, except it may have extra calls. By
Proposition~\ref{prop-foot}, the $\circ$-combination of footprints of states
transferred at the extra call actions in $S_0$ is defined. Since an action
$(t_1,\call\ m_1(\sigma_1))$ is in $S_0$, but not in $S'_0$, this combination is
of the form $\delta(\sigma_1)\circ l_c$ for some $l_c$; hence,
$$
\db{S_0}^\sharp l_1 = (\db{S'_0}^\sharp l_1)\circ \delta(\sigma_1)\circ l_c
\;\wedge\; \db{S'_0}^\sharp l_2 \preceq \db{S'_0}^\sharp l_1.
$$
Therefore, $(\db{S'_0}^\sharp l_2)\circ \delta(\sigma_1)$ is defined.
Since $S'_1$ contains only return actions, 
$$\db{S'_0S'_1}^\sharp l_2 = (\db{S'_0}^\sharp l_2)\fdiff l',$$
where $l'$ is the $\circ $-combination of the footprints of states transferred
at these actions. This implies
$$
\db{S'_0}^\sharp l_2  = (\db{S'_0S'_1}^\sharp l_2)\circ l',
$$
where both expressions are defined. But then so is
$$
(\db{S'_0}^\sharp l_2)\circ \delta(\sigma_1)=
(\db{S'_0S'_1}^\sharp l_2)\circ l'\circ \delta(\sigma_1).
$$
Hence, $(\db{S'_0S'_1}^\sharp l_2)\circ \delta(\sigma_1)$ is defined, contradicting
the opposite fact established above. This contradiction implies that a
conflicting pair of histories does not exist.

\bigskip

Now assume arbitrary histories $S^1$ and $S^2$ satisfying ({\bf F}):
$$
\begin{array}{r@{\ }c@{\ }l}
S^1 &=& S_0 S_3 S_4\, (t_2,\ret\ m_2(\sigma_2)) \,S_2;\\[2pt]
S^2 &=& S'_0 S'_1\, (t_2,\ret\ m_2(\sigma_2))\, (t_2,\call\ m_1(\sigma_1)) \,S'_2,
\end{array}
$$
We show that from these we can construct another pair of histories
satisfying ({\bf F}), but with $S_3$ strictly smaller. We use the same
transformation as in the proof of Lemma~\ref{lemma}. When this transformation
gets stuck, we obtain another pair of histories of the form ({\bf F}), but again
with a smaller $S_3$.

Let us make a case split on the next action in $S_3$.
\begin{iteMize}{$\bullet$}
\item $S_3 = (t, \call\ m(\sigma))\, S_5$, such that the action corresponding to
  $(t, \call\ m(\sigma))$ according to $\rho$ is not in $S'_1$. 
  In this case we let $S_0 := S_0\, (t, \call\ m(\sigma))$ and $S_3:=S_5$. Thus, the
  call action $(t, \call\ m(\sigma))$ unmatched in $S'_1$ becomes part of $S_0$.
\item $S_3 = (t, \call\ m(\sigma))\, S_5$, such that the action
 corresponding to $(t, \call\ m(\sigma))$ according to $\rho$ is in
  $S'_1$. Let $S'_1 = S'_3\, (t, \call\ m(\sigma)) S'_4$, so that
$$
\begin{array}{r@{\ }c@{\ }l}
S^1 &=& S_0\, (t, \call\ m(\sigma))\, S_5 S_4\, (t_2,\ret\ m_2(\sigma_2))\, S_2;\\[2pt]
S^2 &=& S'_0 S'_3\, (t, \call\ m(\sigma))\, S'_4\, (t_2,\ret\ m_2(\sigma_2))\,
(t_2,\call\ m_1(\sigma_1)) \,S'_2.
\end{array}
$$
Using the transformations from case 1 in the proof of Lemma~\ref{lemma}, we can
try to move the action $(t, \call\ m(\sigma))$ to the position between 
$S'_0$ and $S'_3$ while
preserving the balancedness of the history. If this succeeds,
we construct a new pair of histories of the form ({\bf F}) by letting
$S_0 := S_0\, (t, \call\ m(\sigma))$, $S_3 := S_5$, 
$S'_0 := S'_0\, (t, \call\ m(\sigma))$ and $S'_1:=S'_3 S'_4$. 
Otherwise, we get a pair of conflicting histories, which are of the form ({\bf
  F}) but with a smaller $S_3$. Again, in this case the unmatched call action
$(t, \call\ m(\sigma))$ becomes part of $S_0$.
\item $S_3 = (t, \ret\ m(\sigma)) \,S_5$. Then the action corresponding to $(t,
  \ret\ m(\sigma))$ according to $\rho$ is also in $S'_1$, so that $S'_1 = S'_3\,
  (t, \ret\ m(\sigma))\, S'_4$:
$$
\begin{array}{r@{\ }c@{\ }l}
S^1 &=& S_0\, (t, \ret\ m(\sigma)) \,S_5 S_4\, (t_2,\ret\ m_2(\sigma_2)) \,S_2;\\[2pt]
S^2 &=& S'_0 S'_3\, (t, \ret\ m(\sigma))\, S'_4\, (t_2,\ret\ m_2(\sigma_2))\,(t_2,\call\ m_1(\sigma_1))\, S'_2.
\end{array}
$$
Using the transformations from case 2 in the proof of Lemma~\ref{lemma}, we can
move the return action to the position between $S'_0$ and $S'_3$ while
preserving the balancedness of the history. We thus obtain a pair of histories:
$$
\begin{array}{r@{\ }c@{\ }l}
S^1 &=& S_0\, (t, \ret\ m(\sigma))\, S_5 S_4\, (t_2,\ret\ m_2(\sigma_2))\, S_2;\\[2pt]
S^2 &=& S'_0\, (t, \ret\ m(\sigma))\, S'_3 S'_4\, (t_2,\ret\ m_2(\sigma_2))\,(t_2,\call\ m_1(\sigma_1)) \,S'_2.
\end{array}
$$
Then we can let $S_0 := S_0\, (t, \ret\ m(\sigma))$, $S_3 := S_5$, 
$S'_0 := S'_0\, (t, \ret\ m(\sigma))$ and $S'_1:=S'_3 S'_4$.\qed
\end{iteMize}


\section{Frame Rule for Linearizability\label{sec:frame}}

Libraries such as concurrent containers are used by clients to transfer the
ownership of data structures, but do not actually access their contents. We show
that for such libraries, the classical linearizability implies linearizability
with ownership transfer.
\begin{defi}\label{def:gamma2}
  A method specification $\Gamma' = \{\{r^m\}\ m\ \{s^m\} \mid m \in M\}$
  {\bf\em extends} a specification $\Gamma = \{\{p^m\}\ m\ \{q^m\} \mid m \in
  M\}$, if\/ $\forall t.\, r^m_t \subseteq p^m_t*\Sigma \wedge s^m_t \subseteq
  q^m_t*\Sigma$.
\end{defi}
For example, the method specification~(\ref{container}) of the stack in
Figure~\ref{fig:impl}(a) extends the following specification:
\be\label{container-val}
\begin{array}{l}
\{\exists x.\, \myarg_t \mapsto x \}\ 
{\tt push}\ \{\myarg_t \mapsto {\tt OK} \vee 
\myarg_t \mapsto {\tt FULL}\};
\\[2pt]
\{\exists y.\, \myarg_t \mapsto y\}\ {\tt pop}\ 
\{\exists x.\, \myarg_t \mapsto x\}.
\end{array}
\ee
According to this specification, {\tt push} just receives an arbitrary pointer
$x$ as a parameter; in contrast, the specification~(\ref{container})
additionally mandates that the object the pointer identifies be transferred to
the library. We now identify conditions under which the linearizability between
a pair of libraries satisfying $\Gamma$ entails that of the same libraries
satisfying an extended method specification $\Gamma'$. This yields a result
somewhat analogous to the frame rule of separation logic~\cite{lics02}.

We start by introducing some auxiliary definitions. We first define operations
for mapping between histories corresponding to extended and non-extended method
specifications. For the method specification $\Gamma$ from
Definition~\ref{def:gamma2}, we define operations $\llfloor \cdot
\rrfloor_{\Gamma}$ and $\llceil \cdot \rrceil_{\Gamma}$ on interface actions as
follows:
$$
\begin{array}{l@{\ }c@{\ }l@{}}
\llfloor(t, \call\ m(\sigma))\rrfloor_{\Gamma} &=& 
(t, \call\ m(\sigma \diff {(\sigma\diff p^m_t)}));
\\[2pt]
\llfloor(t, \ret\ m(\sigma))\rrfloor_{\Gamma} &=& 
(t, \ret\ m(\sigma \diff {(\sigma\diff q^m_t)}));
\\[2pt]
\llceil (t, \call\ m(\sigma)) \rrceil_{\Gamma}
&=& (t, \call\ m(\sigma\diff p^m_t));
\\[2pt]
\llceil (t, \ret\ m(\sigma)) \rrceil_{\Gamma}
&=& (t, \ret\ m(\sigma\diff q^m_t));
\end{array}
$$
otherwise, the result is undefined. Thus, $\llfloor \psi \rrfloor_{\Gamma}$
selects the part of the state in $\psi$ that is required by $\Gamma$ and
$\llceil \psi \rrceil_{\Gamma}$ the extra piece of state not required by it.
We then lift $\llfloor \cdot \rrfloor_\Gamma$ and $\llceil \cdot \rrceil_\Gamma$
to traces by applying them to every interface action.

Given a history $H_0$ produced by a library $\cL : \Gamma'$, we need to be able
to check that the library does not modify the extra pieces of state not required
by the original method specification $\Gamma$, which are given by $\llceil H_0
\rrceil_\Gamma$. To this end, we define an evaluation function similar to
$\dba{\cdot}^\sharp$ from Section~\ref{sec:observ}, which is meant to be applied
to $\llceil H_0 \rrceil_\Gamma$. For an interface action $\psi$ we define
$\dba{\psi} : \Sigma \rightarrow (\Sigma \cup \{\top\})$ as follows:
$$
\begin{array}{l@{\ }c@{\ }l}
\dba{(t,\call\;m(\sigma_0))}\sigma
&=& 
\mbox{if } {(\sigma*\sigma_0)\fdef} \mbox{ then } \sigma * \sigma_0
\mbox{ else } \top;
\\[2pt]
\dba{(t,\ret\;m(\sigma_0))}\sigma
&=& 
\mbox{if } {(\sigma \diff \sigma_0)\fdef} \mbox{ then } \sigma \diff \sigma_0 
\mbox{ else } \top.
\end{array}
$$
We then define the evaluation $\dba{H}: \Sigma \rightarrow
(\Sigma \cup \{\top\})$ of a history $H$ as follows: 
$$
\dba{\varepsilon} \sigma  \ =\   \sigma; \qquad
\dba{H \psi} \sigma  \ =\ 
\mbox{if } (\dba{H} \sigma \not= \top) 
\mbox{ then }\dba{\psi} (\dba{H} \sigma)
\mbox{ else }\top.
$$
Thus, if the evaluation does not fail, then the history respects the notion of
ownership and the pieces of state transferred to the library are returned to the
client unmodified.

\begin{thm}[Frame rule]\label{thm:frame}
  Assume 
\begin{enumerate}[\em(1)]
\item $\Gamma'$ extends $\Gamma$; 
\item
 for all $i \in \{1,2\}$, $\cL_i : \Gamma$ and $\cL_i: \Gamma'$
  are safe for $\cI_i$ and $\cI_i * I$, respectively; 
\item
  $(\cL_1 : \Gamma, \cI_1) \sqsubseteq (\cL_2 : \Gamma, \cI_2)$; and
\item for every $\sigma_0 \in I_1$, $\sigma'_0\in  I$
and $\lambda \in \db{\cL_1:\Gamma'}(\sigma_0*\sigma'_0)$, 
we have $\dba{\llceil\history(\lambda)\rrceil_{\Gamma}}\sigma'_0 \not= \top$.
\end{enumerate}
 Then $(\cL_1 : \Gamma', \cI_1*\cI) \sqsubseteq (\cL_2 : \Gamma', \cI_2*\cI)$.
\end{thm}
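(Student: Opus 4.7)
The strategy is to decompose each library-local trace of $\cL_i : \Gamma'$ into a $\Gamma$-core plus an inert frame, apply the given $\Gamma$-linearizability to the cores, and then re-attach the frame on the abstract side. I would first prove two supporting lemmas relating traces of $\cL : \Gamma'$ and $\cL : \Gamma$ via $\llfloor \cdot \rrfloor_\Gamma$ and $\llceil \cdot \rrceil_\Gamma$. The \emph{frame-off} lemma asserts: if $\lambda \in \db{\cL_i : \Gamma'}(\sigma*\sigma')$ with $\sigma \in \cI_i$, $\sigma' \in \cI$, and $\dba{\llceil\history(\lambda)\rrceil_\Gamma}\sigma'$ is defined, then $\llfloor \lambda \rrfloor_\Gamma \in \db{\cL_i : \Gamma}(\sigma)$. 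The proof proceeds by induction on the length of $\lambda$, maintaining the invariant that the running state in the $\Gamma'$-evaluation factors as the $*$-combination of the running $\Gamma$-state and the current frame obtained by applying $\dba{\cdot}$ to the prefix of $\llceil \history(\lambda) \rrceil_\Gamma$ seen so far. Strong Locality and Footprint Preservation render the library's internal actions oblivious to the frame part, while the precondition/postcondition splitting done at calls and returns by the semantics of Figure~\ref{fig:eval} cleanly separates the $\Gamma$-required portion from the $\llceil\cdot\rrceil_\Gamma$-part. The \emph{frame-on} lemma is the converse: given $\mu \in \db{\cL_i : \Gamma}(\sigma)$ and a $\Gamma'$-history $H'$ with $\llfloor H' \rrfloor_\Gamma = \history(\mu)$ such that $\dba{\llceil H' \rrceil_\Gamma}\sigma'$ is defined and $\cL_i : \Gamma'$ is safe at $\sigma*\sigma'$, one may construct $\mu' \in \db{\cL_i : \Gamma'}(\sigma*\sigma')$ with $\history(\mu') = H'$ by threading the same frame through $\mu$ in a dual induction.

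With these lemmas in hand the main proof is a three-stroke argument of the \emph{decompose-rearrange-compose} shape used in Section~\ref{sec:rearr}. Pick $(l_1',H_1') \in \interf(\cL_1:\Gamma',\cI_1*\cI)$, witnessed by $\lambda_1 \in \db{\cL_1 : \Gamma'}(\sigma_1*\sigma')$ with $\sigma_1 \in \cI_1$ and $\sigma' \in \cI$. Hypothesis~(4) together with the frame-off lemma gives $\llfloor \lambda_1 \rrfloor_\Gamma \in \db{\cL_1 : \Gamma}(\sigma_1)$. Hypothesis~(3) then supplies $\lambda_2 \in \db{\cL_2 : \Gamma}(\sigma_2)$ with $\sigma_2 \in \cI_2$ and a bijection $\rho$ witnessing $(\delta(\sigma_1), \llfloor \history(\lambda_1)\rrfloor_\Gamma) \sqsubseteq_\rho (\delta(\sigma_2), \history(\lambda_2))$. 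Using $\rho$, I build the $\Gamma'$-history $H_2'$ that permutes $\history(\lambda_1)$ exactly as $\rho$ permutes $\llfloor\history(\lambda_1)\rrfloor_\Gamma$ to $\history(\lambda_2)$; equivalently, at each interface action of $\lambda_2$ one attaches the frame piece carried by the $\rho$-corresponding action of $\lambda_1$. By construction $\llfloor H_2'\rrfloor_\Gamma = \history(\lambda_2)$ and $\llceil H_2'\rrceil_\Gamma$ is a linearization-style permutation of $\llceil \history(\lambda_1)\rrceil_\Gamma$. The frame-on lemma applied to $\cL_2$, available because $\cL_2:\Gamma'$ is safe at $\sigma_2*\sigma'$ by~(2), then yields $\lambda_2' \in \db{\cL_2 : \Gamma'}(\sigma_2*\sigma')$ with history $H_2'$, providing $(\delta(\sigma_1*\sigma'),\history(\lambda_1)) \sqsubseteq (\delta(\sigma_2*\sigma'), H_2')$ as required by Definition~\ref{lin2}.

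The main obstacle is ensuring that the rearranged frame record $\llceil H_2'\rrceil_\Gamma$ is still executable, i.e.\ that $\dba{\llceil H_2'\rrceil_\Gamma}\sigma'$ is defined, which is needed both to invoke the frame-on lemma on $\cL_2$ and to bound the abstract initial footprint by the concrete one. Hypothesis~(4) supplies this only for the original order $\llceil\history(\lambda_1)\rrceil_\Gamma$. The rescue is that the linearization relation preserves within-thread order and the order of non-overlapping method invocations; combined with~(4), which guarantees that the frame returned at each $\cL_1$-invocation is \emph{exactly} what was passed in, this ensures that every matched call/return pair contributes a zero net change to the frame-state and that no two in-flight frame contributions ever have to overlap differently in $H_2'$ than in $\history(\lambda_1)$. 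A cumulative-footprint argument in the spirit of Lemma~\ref{call-ret}, leveraging Propositions~\ref{prop-delta} and~\ref{prop-foot}, then transfers the definedness of $\dba{\cdot}$ from the original history to $\llceil H_2'\rrceil_\Gamma$; the resulting final frame is again $\sigma'$. Finally, compatibility $(\sigma_2*\sigma')\fdef$ follows from $\delta(\sigma_2) \preceq \delta(\sigma_1)$, which is part of $\sqsubseteq$, together with $(\sigma_1*\sigma')\fdef$ and the cancellativity of $*$ on footprints.
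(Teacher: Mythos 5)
Your two supporting lemmas are exactly the paper's Lemmas~\ref{frame-out} and~\ref{frame-in}, and the overall ``frame off, use $\Gamma$-linearizability, frame back on'' shape is right. But the middle of your argument diverges from the paper at the one point where the real difficulty lives, and the divergence opens a genuine gap. You attach the frame pieces to the \emph{linearized} trace $\lambda_2$, producing a history $H_2'$ whose frame record $\llceil H_2'\rrceil_\Gamma$ is a $\rho$-permutation of $\llceil\history(\lambda_1)\rrceil_\Gamma$, and you then need $\dba{\llceil H_2'\rrceil_\Gamma}\sigma'$ to be defined. Hypothesis~(4) gives definedness only for the original order, and your ``rescue'' does not close the hole: the linearization relation of Definition~\ref{lin} permits a return to be moved \emph{before} a call of a different thread (only the return-before-call order is preserved, not call-before-return), so a frame piece that in $\history(\lambda_1)$ is transferred in at a call and later transferred out at another thread's return may, in $H_2'$, have to be subtracted before it has been added. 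The evaluation $\dba{\cdot}$ works on concrete states and faults in that situation. Your appeal to ``zero net change per matched pair'' does not address this cross-thread reordering, and establishing executability of the permuted frame record would require a fresh analogue of Lemma~\ref{call-ret} for the $\dba{\cdot}$ evaluation, which you have not supplied and which is not obviously true under the stated hypotheses. The same problem recurs when you try to discharge the balancedness hypothesis of the frame-on lemma for $H_2'$.

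The paper avoids this entirely by inserting one extra step you omit: after obtaining $\lambda_2\in\db{\cL_2:\Gamma}\sigma_2$ from hypothesis~(3), it applies the Rearrangement Lemma (Lemma~\ref{cor-rearr}) to produce $\lambda_2'\in\db{\cL_2:\Gamma}\sigma_2$ with $\history(\lambda_2')=\history(\llfloor\lambda_1\rrfloor_\Gamma)$ \emph{exactly} --- i.e.\ it de-linearizes the abstract trace back to the concrete order. The frame is then reattached in the original order, so that $\llceil\history(\lambda_2'')\rrceil_\Gamma=\llceil\history(\lambda_1)\rrceil_\Gamma$ and hypothesis~(4) applies verbatim to discharge both preconditions of Lemma~\ref{frame-in}; the resulting abstract history equals $\history(\lambda_1)$, which satisfies $\sqsubseteq$ by reflexivity. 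If you restructure your second paragraph to invoke Lemma~\ref{cor-rearr} before framing on, your third paragraph (and its unproven permutation claim) becomes unnecessary.
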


The theorem allows us to establish the linearizability relation with respect to
the extended specification $\Gamma'$ given the relation with respect to
$\Gamma$. This enables the use of the Abstraction Theorem for clients performing
ownership transfer. However, the theorem does not guarantee the safety of the
libraries with respect to $\Gamma'$ for free, because it is not implied by their
safety with respect to $\Gamma$. Intuitively, this is because $\Gamma'$ extends
both preconditions and postconditions in $\Gamma$; hence, not only does it
guarantee to the library that the client will provide extra pieces of state at
calls, but it also requires the library to provide (possibly different) extra
pieces of state at returns. For example, $\Gamma$ might assign the specification
$\{\myarg_t \mapsto \_\}\ m\ \{\myarg_t \mapsto \_\}$ to every method $m$, and
$\Gamma'$, the specification $\{\myarg_t \mapsto \_\}\ m\ \{\exists x.\,
\myarg_t \mapsto x * x \mapsto \_\}$. Unless a library already has all the
memory required by the postconditions in $\Gamma'$ in its initial state, it has
no way of satisfying $\Gamma'$. 

This situation is in contrast to the frame rule of separation
logic~\cite{lics02}, which guarantees the safety of a piece of code with respect
to an extended specification. However, the frame rule requires the latter
specification to extend both pre- and postconditions with the same piece of
state, so that the code returns it immediately after termination. In our
setting, a library can return the extra state to its client after a different
method invocation and, possibly, in a different thread.

Finally, condition (4) in Theorem~\ref{thm:frame} ensures that the extra memory
required by postconditions in $\Gamma'$ comes from the extra memory provided in
its preconditions and the extension of the initial state, not from the memory
transferred according to $\Gamma$.

It can be shown that for the library $\cL_1$ in Figure~\ref{fig:impl}(a), the
library $\cL_2$ in Figure~\ref{fig:spec}(a) and method specification $\Gamma$
defined by~(\ref{container-val}), we have $(\cL_1 : \Gamma, I_1) \sqsubseteq
(\cL_2 : \Gamma, I_2)$. It is also not difficult to prove (e.g., using
separation logic) that $\cL_1 : \Gamma'$ and $\cL_2 : \Gamma'$ are
safe. Condition (4) in Theorem~\ref{thm:frame} is satisfied, since the proof of
safety of $\cL_1 : \Gamma'$ would use only the extra state provided in the
preconditions of $\Gamma'$ to provide the extra state required by its
postconditions. Hence, by Theorem~\ref{thm:frame} we have $(\cL_1 : \Gamma',
I_1) \sqsubseteq (\cL_2 : \Gamma', I_2)$.

%

However, Theorem~\ref{thm:frame} is not applicable to the memory allocators in
Figures~\ref{fig:impl}(b) and~\ref{fig:spec}(b): since the allocator
implementation in Figure~\ref{fig:impl}(b) stores free-list pointers inside the
memory blocks, it is unsafe with respect to the variant of the method
specification~(\ref{alloc}) that does not transfer their ownership.

The proof of Theorem~\ref{thm:frame} relies on the following two lemmas, proved
in Appendices~\ref{proof:frame-out} and~\ref{proof:frame-in}, that convert
between library traces corresponding to extended and original method
specifications. The first lemma shows that for a trace $\lambda$ produced by $\cL :
\Gamma'$, the trace $\llfloor \lambda \rrfloor_{\Gamma}$ can be produced by $\cL :
\Gamma$. The safety of the library with respect to $\Gamma$ and condition (4)
from Theorem~\ref{thm:frame} guarantee that the smaller preconditions specified
by $\Gamma$ are enough for the library to execute safely and that the extra
pieces of state in $\Gamma'$ do not influence its execution.
\begin{lem}\label{frame-out}
  If $\lambda \in \db{\cL : \Gamma'}(\sigma_0*\sigma'_0)$, $\cL:\Gamma$ is safe at
  $\sigma_0$, and $\dba{\llceil \history(\lambda) \rrceil_{\Gamma}}\sigma'_0 \not=
  \top$, then $\llfloor \lambda \rrfloor_{\Gamma} \in \db{\cL:\Gamma}\sigma_0$.
\end{lem}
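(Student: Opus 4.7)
Since $\lambda \in \db{\cL : \Gamma'}(\sigma_0 * \sigma'_0)$, unfolding the definition of $\db{\cdot}$ in Figure~\ref{fig:eval} yields a syntactic trace $\alpha \in \dbp{\cL : \Gamma'}$ (which coincides with $\dbp{\cL : \Gamma}$, since the syntactic trace sets only depend on which methods the library exposes) and an evaluation witness $(\_, \lambda) \in \db{\alpha : \Gamma'}(\sigma_0 * \sigma'_0)$. My plan is to show, by induction on the length of $\alpha$, that the same $\alpha$ evaluates under $\Gamma$ from $\sigma_0$ to produce exactly $\llfloor \lambda \rrfloor_{\Gamma}$, which then yields $\llfloor \lambda \rrfloor_{\Gamma} \in \db{\cL : \Gamma}\sigma_0$.

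\textbf{Invariant.} The key invariant I maintain after $k$ steps is that the current state $\theta$ of the $\Gamma'$-evaluation decomposes as $\theta = \theta_\Gamma * \theta_{\mathrm{ex}}$, where $\theta_\Gamma$ is the state reached by the $\Gamma$-evaluation of the $\llfloor \cdot \rrfloor_\Gamma$-projected prefix from $\sigma_0$, and $\theta_{\mathrm{ex}} = \dba{\llceil H_k \rrceil_{\Gamma}}\sigma'_0$, where $H_k$ is the history of the first $k$ interface actions of $\lambda$. Hypothesis (3) of the lemma ensures $\theta_{\mathrm{ex}}$ remains defined throughout, and safety of $\cL : \Gamma$ at $\sigma_0$ ensures the $\Gamma$-evaluation never faults.

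\textbf{Step cases.} For an internal primitive action $(t,c)$, safety gives $f^t_c(\theta_\Gamma) \neq \top$, and then Strong Locality yields $f^t_c(\theta_\Gamma * \theta_{\mathrm{ex}}) = f^t_c(\theta_\Gamma) * \{\theta_{\mathrm{ex}}\}$, so any successor state of the $\Gamma'$-evaluation splits as $\theta'_\Gamma * \theta_{\mathrm{ex}}$ with $\theta'_\Gamma \in f^t_c(\theta_\Gamma)$, preserving the invariant. For a call action $(t, \call\ m(\sigma))$ in $\lambda$, the $\Gamma'$-evaluation receives some $\sigma \in r^m_t \subseteq p^m_t * \Sigma$; preciseness of $p^m_t$ lets me decompose $\sigma$ uniquely as $\sigma_p * \sigma_x$ with $\sigma_p \in p^m_t$, and by definition this matches $\llfloor (t,\call\ m(\sigma)) \rrfloor_\Gamma = (t, \call\ m(\sigma_p))$ and $\llceil (t,\call\ m(\sigma)) \rrceil_\Gamma = (t, \call\ m(\sigma_x))$. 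I attribute $\sigma_p$ to the $\Gamma$-evaluation and $\sigma_x$ to the bookkeeping, so $\theta_\Gamma$ grows to $\theta_\Gamma * \sigma_p$ and $\theta_{\mathrm{ex}}$ grows to $\theta_{\mathrm{ex}} * \sigma_x$, exactly as required. Return actions are handled symmetrically using preciseness of $q^m_t$ together with Proposition~\ref{prop-diff}.

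\textbf{Main obstacle.} The delicate point is verifying that the decomposition $\theta = \theta_\Gamma * \theta_{\mathrm{ex}}$ is preserved across return actions, where ownership shifts out of the library. At a return, the $\Gamma'$-evaluation gives up some $\sigma \diff s^m_t$ with $s^m_t \subseteq q^m_t * \Sigma$, and I must argue that this piece itself splits cleanly into a $q^m_t$-part (relinquished by the $\Gamma$-evaluation from $\theta_\Gamma$) plus a remainder (subtracted from $\theta_{\mathrm{ex}}$). This is where the third hypothesis of the lemma, $\dba{\llceil \history(\lambda) \rrceil_{\Gamma}}\sigma'_0 \neq \top$, is indispensable: it forbids the shadow trajectory $\theta_{\mathrm{ex}}$ from ever having to surrender memory it does not hold, so the required sub-state of $\theta_{\mathrm{ex}}$ always exists. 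Combined with cancellativity of $*$, preciseness of $p^m_t$ and $q^m_t$, and Proposition~\ref{prop-diff} for swapping $*$ with $\diff$, this pins down the split uniquely at every interface action and closes the induction.
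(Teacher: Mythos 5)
Your proposal is correct and follows essentially the same route as the paper's proof: an induction on the length of the syntactic trace maintaining the invariant that the $\Gamma'$-evaluation state splits as (state of the $\Gamma$-evaluation of $\llfloor\cdot\rrfloor_\Gamma$) $*$ $\dba{\llceil\cdot\rrceil_\Gamma}\sigma'_0$, with the same use of safety plus Strong Locality for primitive commands, preciseness for calls, and safety, cancellativity and the non-faulting of $\dba{\cdot}$ for returns. The only quibble is a notational slip in the return case ($\sigma\diff s^m_t$ is the retained state, not the relinquished one), which does not affect the argument.
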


The other lemma gives conditions under which we can conclude that a trace $\lambda$
is produced by $\cL : \Gamma'$ given that $\llfloor \lambda \rrfloor_\Gamma$ is
produced by $\cL : \Gamma$.
\begin{lem}\label{frame-in}
  Assume $\llfloor \lambda \rrfloor_{\Gamma} \in\db{\cL:\Gamma}\sigma_0$,
  $(\sigma_0*\sigma'_0)\fdef$, $\history(\lambda)$ is balanced from
  $\delta(\sigma''_0*\sigma'_0)$ for $\delta(\sigma_0) \preceq
  \delta(\sigma''_0)$, and $\dba{\llceil \history(\lambda)
    \rrceil_{\Gamma}}\sigma'_0\not=\top$.  Then $\lambda \in
  \db{\cL:\Gamma'}(\sigma_0*\sigma'_0)$.
\end{lem}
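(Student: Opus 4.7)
The plan is to prove the lemma by induction on prefixes of $\lambda$, building the target $\Gamma'$-evaluation of $\lambda$ starting from $\sigma_0*\sigma'_0$ out of the given $\Gamma$-evaluation witnessing $\llfloor\lambda\rrfloor_{\Gamma}\in\db{\cL:\Gamma}\sigma_0$ and the $\dba{\cdot}$-evaluation of $\llceil\history(\lambda)\rrceil_{\Gamma}$ on $\sigma'_0$ that is well-defined by the fourth hypothesis. Since the syntactic trace set $\dbp{\cL:\Gamma}$ coincides with $\dbp{\cL:\Gamma'}$ (both depend only on the code of $\cL$), I can reuse the same underlying trace and only need to show that the $\Gamma'$-evaluation does not fault and produces $\lambda$ as its annotated output. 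For each prefix $\lambda_0$ of $\lambda$ with history $H_0$, let $\theta$ be the $\Gamma$-state reached by the evaluation of $\llfloor\lambda_0\rrfloor_{\Gamma}$, and let $\theta'=\dba{\llceil H_0\rrceil_{\Gamma}}\sigma'_0$. The invariant to carry along the induction is that $\theta*\theta'$ is defined, the corresponding $\Gamma'$-evaluation produces $\lambda_0$ and reaches state $\theta*\theta'$, and the footprint bound $\delta(\theta*\theta')\preceq\dbf{H_0}(\delta(\sigma''_0*\sigma'_0))$ holds.

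In the base case the invariant is immediate from $(\sigma_0*\sigma'_0)\fdef$ and $\delta(\sigma_0)\preceq\delta(\sigma''_0)$. For the inductive step I case-split on the next action $\varphi$. If $\varphi=(t,c)$ is a primitive command, Strong Locality gives $f^t_c(\theta*\theta')=f^t_c(\theta)*\{\theta'\}$ and Footprint Preservation keeps the bound. If $\varphi=(t,\call\ m(\sigma))$, preciseness of $p^m_t$ yields a decomposition $\sigma=\sigma_1*\sigma_2$ with $\sigma_1\in p^m_t$, so that $\llfloor\varphi\rrfloor_{\Gamma}=(t,\call\ m(\sigma_1))$ and $\llceil\varphi\rrceil_{\Gamma}=(t,\call\ m(\sigma_2))$; both the $\Gamma$- and $\dba{\cdot}$-evaluations succeed on $\varphi$, giving $(\theta*\sigma_1)\fdef$ and $(\theta'*\sigma_2)\fdef$. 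Balancedness ensures $\dbf{H_0}(\delta(\sigma''_0*\sigma'_0))\circ\delta(\sigma)$ is defined, and the footprint-bound invariant transfers this down to $\delta(\theta*\theta')\circ\delta(\sigma)$, whence $(\theta*\theta')*\sigma$ is defined. For a return $\varphi=(t,\ret\ m(\sigma))$, preciseness of $q^m_t$ decomposes $\sigma=\sigma_1*\sigma_2$; Proposition~\ref{prop-diff} then gives $(\theta*\theta')\diff\sigma=(\theta\diff\sigma_1)*(\theta'\diff\sigma_2)$, and Proposition~\ref{prop-delta} re-establishes the footprint bound.

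The principal obstacle is the call case: the individual compatibilities $(\theta*\sigma_1)\fdef$ and $(\theta'*\sigma_2)\fdef$ do not on their own entail the cross-compatibility of $\sigma_1$ with $\theta'$ and of $\sigma_2$ with $\theta$ that is needed to form $(\theta*\theta')*\sigma$. This is precisely why balancedness is hypothesised from the enlarged footprint $\delta(\sigma''_0*\sigma'_0)$ rather than $\delta(\sigma_0*\sigma'_0)$: the slack provided by $\sigma''_0$ guarantees a priori that every transfer licensed by $\history(\lambda)$ is accommodated, and the footprint-bound invariant funnels this compatibility down to the actual running state via the separation-algebra fact that $(x*y*z)\fdef$ implies $(x*z)\fdef$ (a consequence of commutativity and associativity for defined combinations). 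Applying the induction to $\lambda_0=\lambda$ then yields $\lambda\in\db{\cL:\Gamma'}(\sigma_0*\sigma'_0)$, as required.
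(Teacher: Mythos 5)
Your proof is correct and follows essentially the same route as the paper's: induction along the trace, maintaining the invariant that the $\Gamma'$-evaluation state is the $*$-combination of the $\Gamma$-evaluation state $\theta$ with the extra state $\theta'=\dba{\llceil H_0\rrceil_\Gamma}\sigma'_0$, with Strong Locality/Footprint Preservation handling primitive commands and balancedness from the enlarged footprint $\delta(\sigma''_0*\sigma'_0)$ supplying definedness of the combined state at calls. The only difference is presentational: you state the footprint-bound invariant $\delta(\theta*\theta')\preceq\dbf{H_0}(\delta(\sigma''_0*\sigma'_0))$ explicitly, whereas the paper leaves it implicit in the step "since $\history(\lambda\varphi')$ is balanced from $\delta(\sigma''_0*\sigma'_0)$, the combination is defined"; making it explicit is if anything an improvement.
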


\paragraph{\em Proof of Theorem~\ref{thm:frame}.}  Consider a trace $\lambda_1
\in \db{\cL_1 : \Gamma'}(\sigma_1*\sigma)$, where $\sigma_1 \in \cI_1$ and
$\sigma\in\cI$.  Then by (4) we have $\dba{\llceil \history(\lambda_1)
  \rrceil_{\Gamma}}\sigma\not= \top$, and hence by Lemma~\ref{frame-out} we have
$\llfloor \lambda_1 \rrfloor_{\Gamma} \in \db{\cL_1:\Gamma}\sigma_1$. Since
$(\cL_1 : \Gamma, \cI_1) \sqsubseteq (\cL_2 : \Gamma, \cI_2)$, for some
$\sigma_2\in\cI_2$ and $\lambda_2\in\db{\cL_2:\Gamma}\sigma_2$ we have
$$
(\delta(\sigma_1),\history(\llfloor \lambda_1 \rrfloor_{\Gamma})) \sqsubseteq
(\delta(\sigma_2),\history(\lambda_2)).
$$
By Lemma~\ref{cor-rearr}, there exists $\lambda'_2\in\db{\cL_2:\Gamma}\sigma_2$
such that $\history(\lambda'_2) = \history(\llfloor \lambda_1 \rrfloor_{\Gamma})$.  Let
$\lambda''_2$ be the trace $\lambda'_2$ with its interface actions replaced so that they
form the history $\history(\lambda_1)$. Then $\llfloor \lambda''_2 \rrfloor_{\Gamma} =
\lambda'_2$ and $\llceil \history(\lambda''_2) \rrceil_{\Gamma} = \llceil
\history(\lambda_1)\rrceil_{\Gamma}$. Since $\delta(\sigma_2) \preceq
\delta(\sigma_1)$, we have $(\sigma_2*\sigma)\fdef$.  Hence, by
Lemma~\ref{frame-in}, $\lambda''_2 \in \db{\cL_2:\Gamma'}(\sigma_2*\sigma)$, from
which the required follows.\qed


\section{Related Work\label{sec:relwork}}

The original definition of linearizability~\cite{linearizability} was proposed
in an abstract setting that did not consider a particular programming language
and implicitly assumed a complete isolation between the states of the client and
the library. Furthermore, at the time it was not clear what the linearizability
of a library entails for its clients. Filipovi\'c et al.~\cite{tcs10} were the
first to observe that linearizability implies a form of contextual refinement;
technically, their result is similar to our Lemma~\ref{thm}, but formulated over
a highly idealistic semantics. In a previous work~\cite{icalp}, we generalised
their result to a compositional proof method, formalised by an Abstraction
Theorem, that allows one to replace a concrete library by an abstract one in
reasoning about a complete program.

This paper is part of our recent push to propose notions of concurrent library
correctness for realistic programming languages. So far we have developed such
notions together with the corresponding Abstraction Theorems for supporting
reasoning about liveness properties~\cite{icalp} and weak memory
models~\cite{lintso,tso2sc-disc12,cpp-popl13}. All these results assumed that
the library and its client operate in disjoint address spaces and, hence, are
guaranteed not to interfere with each other and cannot communicate via the
heap. Lifting this restriction is the goal of the present paper. Although the
basic proof structure of Theorems~\ref{thm2} and~\ref{thm-spec} is the same as
in~\cite{icalp,lintso}, the formulations and proofs of the Abstraction Theorem
and the required lemmas here have to deal with technical challenges posed by
ownership transfer that did not arise in previous work. First, their
formulations rely on the novel forms of client-local and library-local semantics
(Section~\ref{sec:semantics}) that allow a component to communicate with its
environment via ownership transfers. Proving Lemma~\ref{prop:sem:local-global}
then involves a delicate tracking of a splitting between the parts of the state
owned by the library and the client, and how ownership transfers affect
it. Second, the key result needed to establish the Abstraction Theorem is the
Rearrangement Lemma (Lemmas~\ref{cor-rearr} and~\ref{thm}). What makes the proof
of this lemma difficult in our case is the need to deal with subtle interactions
between concurrency and ownership transfer that have not been considered in
previous work. Namely, changing the history in the lemma requires commuting
ownership transfer actions; justifying the correctness of these transformations
is non-trivial and relies on the notion of history balancedness that we propose.
These differences notwithstanding, we hope that techniques for handling
ownership transfer proposed in this paper can be combined with the ones for
handling other types of client-library interactions considered so
far~\cite{icalp,lintso,tso2sc-disc12,cpp-popl13}.

Recently, there has been a lot of work on verifying linearizability of common
algorithms; representative papers
include~\cite{daphna-linearizability,turkish10,rgsep-thesis}.  All of them
proved classical linearizability, where libraries and their clients exchange
values of a given data type and do not perform ownership transfers. This
includes even libraries, such as concurrent containers, that are actually used
by client threads to transfer the ownership of data structures. The frame rule
for linearizability we propose (Theorem~\ref{thm:frame}) justifies that
classical linearizability established for concurrent containers entails
linearizability with ownership transfer. This makes our Abstraction Theorem
applicable, enabling compositional reasoning about their clients.

Turon and Wand~\cite{aaron} have proposed a logic for establishing refinements
between concurrent modules, likely equivalent to
linearizability~\cite{tcs10}. Their logic considers libraries and clients
residing in a shared address space, but not ownership transfer.  As a result,
they do not support separate reasoning about a library and its client in
realistic situations of the kind we consider.

Elmas et al.~\cite{turkish09,turkish10} have developed a system for verifying
concurrent programs based on repeated applications of atomicity
abstraction. They do not use linearizability to perform the
abstraction. Instead, they check the commutativity of an action to be
incorporated into an atomic block with {\em all} actions of other threads.  In
particular, to abstract a library implementation in a program by its atomic
specification, their method would have to check the commutativity of every
internal action of the library with all actions executed by the client code of
other threads. Thus, the method of Elmas et al.  does not allow decomposing the
verification of a program into verifying libraries and their clients separately.
In contrast, our Abstraction Theorem ensures the atomicity of a library under
{\em any} safe client.

The most common approach of decomposing the verification of concurrent programs
is using {\em thread-modular} reasoning methods, which consider every thread in
the program in isolation under some assumptions on its
environment~\cite{Jones83,Pnueli}. However, a single thread would usually make
use of multiple program components. This work goes further by allowing a
finer-grain {\em intrathread-modular} reasoning: separating the verification of
a library and its client, the code from both of which may be executed by a
single thread. Note that this approach is complementary to thread-modular
reasoning, which can still be used to carry out the verification subtasks, such
as establishing the linearizability of libraries and proving the safety of
clients. Thread-modular techniques do enable a restricted form of
intrathread-modular reasoning, since they allow reasoning about the control of a
thread in a program while ignoring the possibility of its interruption by the
other threads. Hence, they allow considering a library method called by the
thread in isolation, e.g., by using the standard proof rules for procedures.
However, such a decomposition is done under fixed assumptions on the environment
of the thread and thus does not allow, e.g., increasing the atomicity of the
environment's actions. As the example of MCAS shows (Section~\ref{sec:example}),
this is necessary to deal with complex algorithms.


Ways of establishing relationships between different sequential implementations
of the same library have been studied in {\em data
  refinement}~\cite{data,Reynolds83}, including cases of interactions via
ownership transfer~\cite{blame,data-aplas,data-anindya}. Our results can be
viewed as generalising data refinement to the concurrent setting. Moreover, when
specialised to the sequential case, they provide a more flexible method of
performing it in the presence of the heap and ownership transfer than previously
proposed ones. In more detail, the way we define client safety
(Section~\ref{sec:semantics}) is more general some of the ways used in data
refinement~\cite{blame}. There, it is typical to fix a (precise) invariant of a
library and check that the client does not access the area of memory fenced off
by the invariant.  Here we do not require an explicit library invariant, using
the client-local semantics instead: since primitive commands fault when
accessing non-existent memory cells, the safety of the client in this semantics
ensures that it does not access the internals of the library. We note that the
approach requiring an invariant for library-local data structures does not
generalise to the concurrent setting: while a precise invariant for the data
structures {\em shared} among threads executing library code is not usually
difficult to find, the state of data structures {\em local} to the threads
depends on their program counters. Thus, an invariant insensitive to program
positions inside the library code often does not exist. Such difficulties are
one of reasons for using client- and library-local semantics in this paper.


Finally, we note that the applicability of our results is not limited to proving
existing programs correct: they can also be used in the context of formal
program development. In this case, instead of {\em abstracting} an existing
library to an atomic specification while proving a complete program, the
Abstraction Theorem allows {\em refining} an atomic library specification to a
concrete concurrent implementation while developing a program
top-down~\cite{back81,jones-tcs07}. Our work thus advances the method of
atomicity refinement to a setting with concurrent components sharing an address
space and communicating via ownership transfers.



\section*{Acknowledgements}
We would like to thank 
 Anindya Banerjee,
 Josh Berdine,
 Xinyu Feng,
 Hongjin Liang, 
 Victor Luchangco,
 David Naumann,
 Peter O'Hearn,
 Matthew Parkinson,
 Noam Rinetzky and
 Julles Villard
for helpful comments.
Gotsman was supported by the EU FET project ADVENT.
Yang was supported by EPSRC.

\appendix

\section{Additional proofs\label{sec:proofs}}

\subsection{Proof of Proposition~\ref{prop-foot}\label{sec:proof-foot}}

We prove the required by induction on the length of $S$. If $S$ is empty, then
so is $S'$ and $l_c = \delta(e)$. Assume the statement of the
proposition is valid for all histories $S$ of length less than $n>0$. Consider a
history $S = S_0 \psi$ of length $n$ and a corresponding history $S'$
satisfying the conditions of the proposition. We now make a case split on the
type of the action $\psi$.
\begin{iteMize}{$\bullet$}
\item
$\psi$ is a call transferring $\sigma_0$ that is not in $S'$. Then $S_0$ and
$S'$ are identical except $S_0$ may have extra calls. Hence, by the induction
hypothesis for $S_0$ and $S'$, $S'$ is balanced from $l_1$,
${\db{S'}^\sharp l_2 \preceq \db{S'}^\sharp l_1}$
and
$$
\db{S}^\sharp l_1 = \db{S_0\psi}^\sharp l_1 = 
(\db{S_0}^\sharp l_1) \circ \delta(\sigma_0) = (\db{S'}^\sharp l_1) \circ  (l_c \circ \delta(\sigma_0)).
$$
\item
$\psi$ is a call transferring $\sigma_0$ also present in $S'$. Then $S' = S'_0
\psi$, where $S'_0$ and $S_0$ are identical except $S_0$ may have extra calls. Hence,
by the induction hypothesis for $S_0$ and $S'_0$, we have
\begin{multline*}
\db{S}^\sharp l_1=
\db{S_0\psi}^\sharp l_1 = (\db{S_0}^\sharp l_1) \circ  \delta(\sigma_0) = {}\\
(\db{S'_0}^\sharp l_1) \circ l_c \circ  \delta(\sigma_0) = 
(\db{S'_0\psi}^\sharp l_1) \circ l_c = 
(\db{S'}^\sharp l_1) \circ l_c.
\end{multline*}
In particular, $S'$ is balanced from $l_1$. By the induction hypothesis for
$S_0$ and $S'_0$, we also have
$\db{S'_0}^\sharp l_2 \preceq \db{S'_0}^\sharp l_1$. From this we get
$$
\db{S'}^\sharp l_2=
\db{S'_0\psi}^\sharp l_2 = (\db{S'_0}^\sharp l_2) \circ \delta(\sigma_0) 
 \preceq (\db{S'_0}^\sharp l_1) \circ \delta(\sigma_0) = \db{S'_0\psi}^\sharp
 l_1 = \db{S'}^\sharp l_1.
$$
\item
$\psi$ is a return transferring $\sigma_0$. Then it is also present in $S'$,
so that $S' = S'_0\psi$, where $S_0$ and $S'_0$ are identical except $S_0$ may have
extra calls. Then by the induction hypothesis for $S_0$ and $S'_0$, we have:
$$
\db{S}^\sharp l_1=
\db{S_0\psi}^\sharp l_1 = 
(\db{S_0}^\sharp l_1) \fdiff \delta(\sigma_0) = 
((\db{S'_0}^\sharp l_1) \circ l_c) \fdiff \delta(\sigma_0).
$$
Since $S'=S'_0\psi$ is balanced from $l_2$, 
$(\db{S'_0}^\sharp l_2) \fdiff \delta(\sigma_0)$ is defined. Furthermore, 
by the induction hypothesis for $S_0$ and $S'_0$, we also have
$\db{S'_0}^\sharp l_2 \preceq \db{S'_0}^\sharp l_1$. 
Hence, $(\db{S'_0}^\sharp l_1) \fdiff \delta(\sigma_0)$ is defined as well.
By Proposition~\ref{prop-delta}, we then have:
\begin{multline*}
\db{S}^\sharp l_1
=((\db{S'_0}^\sharp l_1) \circ  l_c) \fdiff\delta(\sigma_0) = {}\\
((\db{S'_0}^\sharp l_1) \fdiff \delta(\sigma_0)) \circ l_c =
(\db{S'_0\psi}^\sharp l_1) \circ  l_c = 
(\db{S'}^\sharp l_1) \circ  l_c.
\end{multline*}
In particular, $S'$ is balanced from $l_1$. 
From $\db{S'_0}^\sharp l_2 \preceq \db{S'_0}^\sharp l_1$, it also follows that
$$
\db{S'}^\sharp l_2=
\db{S'_0\psi}^\sharp l_2 = (\db{S'_0}^\sharp l_2)\fdiff\delta(\sigma_0) 
 \preceq (\db{S'_0}^\sharp l_1)\fdiff\delta(\sigma_0) =
\db{S'_0\psi}^\sharp l_1 = \db{S'}^\sharp l_1.\eqno{\qEd}
$$
\end{iteMize}

\subsection{Proof of Lemma~\ref{prop:sem:local-global}} 

Before delving into the proof of Lemma~\ref{prop:sem:local-global}, we prove
three important lemmas about our semantics that justify its key steps. The first
concerns the evaluation of a call or a return action: intuitively, it says that
the evaluation of such an action by the client matches that by the library.
\begin{lem}[Preservation]
\label{lem:preservation}
Let $\Gamma$ be a method specification, $\sigma_0,\sigma_1$ states, and
$\varphi$ an action describing a call to or return from a method specified in
$\Gamma$ such that
$$
\db{\Gamma \vdash \varphi}\sigma_0 \neq \top
\;\wedge\;
\db{\varphi : \Gamma}\sigma_1 \neq \top.
$$
Then for all $\sigma'_0,\sigma'_1,\varphi'$,
$$
((\sigma'_0,\varphi') \in \db{\Gamma \vdash \varphi}\sigma_0 
\;\wedge\;
(\sigma'_1,\varphi') \in \db{\varphi : \Gamma}\sigma_1) 
\implies
((\sigma'_0 * \sigma'_1)\fdef \iff (\sigma_0 * \sigma_1)\fdef).
$$
If furthermore $\sigma_0 * \sigma_1$ is defined, then we have
$$
\{ \sigma_0 * \sigma_1 \} =
\{\sigma'_0 * \sigma'_1 \mid
\exists \varphi'.\;
(\sigma'_0,\varphi') \in \db{\Gamma \vdash \varphi}\sigma_0
\;\wedge\;
(\sigma'_1,\varphi') \in \db{\varphi : \Gamma}\sigma_1
\;\wedge\;
(\sigma'_0 * \sigma'_1)\fdef
\}.
$$
\end{lem}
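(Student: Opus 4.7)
The plan is to prove Lemma~\ref{lem:preservation} by a case split on whether $\varphi$ is a call or a return action, unpacking the definitions of $\db{\Gamma\vdash\varphi}$ and $\db{\varphi:\Gamma}$ from Figure~\ref{fig:eval} and exploiting the preciseness of the pre/postconditions in $\Gamma$, together with the associativity, commutativity and cancellativity of $*$.

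Consider first the call case, $\varphi=(t,\call\ m)$ with $\{p\}\ m\ \{q\}\in\Gamma$. The hypothesis $\db{\Gamma\vdash\varphi}\sigma_0\ne\top$ means $(\sigma_0\diff p_t)\fdef$, so by preciseness there is a unique $\sigma_p\in p_t$ with $\sigma_0=(\sigma_0\diff p_t)*\sigma_p$, and the client step yields the single pair $(\sigma_0\diff p_t,(t,\call\ m(\sigma_p)))$. The library step at $\sigma_1$ produces pairs of the form $(\sigma_1*\sigma'_p,(t,\call\ m(\sigma'_p)))$ for $\sigma'_p\in p_t$ with $(\sigma_1*\sigma'_p)\fdef$. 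Matching the annotations $\varphi'$ forces $\sigma'_p=\sigma_p$, so $\sigma'_0=\sigma_0\diff p_t$ and $\sigma'_1=\sigma_1*\sigma_p$. Then $\sigma'_0*\sigma'_1=(\sigma_0\diff p_t)*\sigma_1*\sigma_p$, which by associativity and commutativity is defined iff $\sigma_0*\sigma_1$ is, and in that case equals it. For the second claim, when $\sigma_0*\sigma_1$ is defined, the unique $\sigma_p$ determined above also satisfies $(\sigma_1*\sigma_p)\fdef$ (since $\sigma_0*\sigma_1=(\sigma_0\diff p_t)*\sigma_p*\sigma_1$ is defined), so it witnesses a legitimate $\varphi'$ that realises the composition.

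The return case is symmetric: with $\varphi=(t,\ret\ m)$, the hypothesis $\db{\varphi:\Gamma}\sigma_1\ne\top$ gives a unique $\sigma_q\in q_t$ with $\sigma_1=(\sigma_1\diff q_t)*\sigma_q$, and the library step emits the single pair $(\sigma_1\diff q_t,(t,\ret\ m(\sigma_q)))$. Matching $\varphi'$ in the client step $(\sigma_0*\sigma'_q,(t,\ret\ m(\sigma'_q)))$ again forces $\sigma'_q=\sigma_q$, giving $\sigma'_0=\sigma_0*\sigma_q$ and $\sigma'_1=\sigma_1\diff q_t$, and the same reassociation shows $\sigma'_0*\sigma'_1$ is defined iff $\sigma_0*\sigma_1$ is, with equal values. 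Existence of the compatible witness when $\sigma_0*\sigma_1$ is defined follows because $\sigma_0*\sigma_q$ must be defined whenever $\sigma_0*(\sigma_1\diff q_t)*\sigma_q$ is.

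The only mildly delicate points are (i) using preciseness of $p_t$ and $q_t$ to conclude that the $\sigma_p$ chosen in the client's evaluation of a call (respectively the $\sigma_q$ chosen in the library's evaluation of a return) coincides with the one appearing in the annotation $\varphi'$ produced by the other side, and (ii) justifying that a join such as $(\sigma_0\diff p_t)*(\sigma_1*\sigma_p)$ is defined exactly when $\sigma_0*\sigma_1$ is, which is a direct application of associativity and commutativity of $*$ together with cancellativity to rule out ambiguity. No non-trivial reasoning about footprints or balancedness is required here, so the whole argument is essentially a careful bookkeeping exercise across the two cases.
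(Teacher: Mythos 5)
Your proposal is correct and follows essentially the same route as the paper's proof: decompose the states via the evaluation rules, use preciseness (with cancellativity) to pin down the unique transferred substate and match it against the annotation $\varphi'$, and reassociate with $*$ to get the definedness equivalence and the equality of compositions. The only cosmetic difference is that you spell out the return case where the paper simply appeals to symmetry.
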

\proof
  Consider $\Gamma,\sigma_0,\sigma_1,\varphi$ satisfying the conditions in the
  lemma. We show the lemma only for the case when $\varphi$ is a call action:
  for some $t, m, p$, we have $\varphi = (t,\call\ m)$ and $\{p\}\ m\ \{\_\} \in
  \Gamma$. The proof for the other case is symmetric.

  To show the first claim of the lemma, consider $\sigma'_0,\sigma'_1,\varphi'$
  such that
$$
(\sigma'_0,\varphi') \in \db{\Gamma \vdash \varphi}\sigma_0 
\;\wedge\;
(\sigma'_1,\varphi') \in \db{\varphi : \Gamma}\sigma_1.
$$
By the definition of the action evaluation, there exist $\sigma_2,\sigma_3$ such
that
$$
\varphi' = (t,\call\ m(\sigma_3))
\;\wedge\;
(\sigma_2 * \sigma_3)\fdef
\;\wedge\;
(\sigma_3 * \sigma_1)\fdef
\;\wedge\;
\sigma_0 = \sigma_2 * \sigma_3
\;\wedge\;
\sigma'_0 = \sigma_2
\;\wedge\;
\sigma'_1 = \sigma_3 * \sigma_1.
$$
Hence,
$$
(\sigma'_0 * \sigma'_1)\fdef
\iff
(\sigma_2 * (\sigma_3 * \sigma_1))\fdef
\iff
((\sigma_2 * \sigma_3) * \sigma_1)\fdef
\iff
(\sigma_0 * \sigma_1)\fdef.
$$

Let us move on to the second claim of the lemma. Since $\db{\Gamma \vdash
  (t,\call\ m)}\sigma_0 \neq \top$, $p_t$ is precise and the $*$ operator is
cancellative, there exists a unique splitting $\sigma_2 * \sigma_3 = \sigma_0$ of
$\sigma_0$ such that $\sigma_3 \in p_t$. Let $\varphi_0 = (t,\call\
m(\sigma_3))$.  Then
$$
(\{(\sigma_2,\varphi_0)\} = \db{\Gamma \vdash (t,\call\ m)}\sigma_0)
\;\wedge\;
(\forall \sigma'_1.\;
(\sigma'_1,\varphi_0) \in \db{(t,\call\ m) : \Gamma}\sigma_1 
\iff \sigma'_1 = \sigma_3 * \sigma_1).
$$
Hence,
\begin{align*}
&
\{ \sigma'_0 * \sigma'_1 \,\mid\,
\exists \varphi'.\;
(\sigma'_0,\varphi') \in \db{\Gamma \vdash (t,\call\ m)}\sigma_0
\wedge
(\sigma'_1,\varphi') \in \db{(t,\call\ m) : \Gamma}\sigma_1
\wedge
(\sigma'_0 * \sigma'_1)\fdef
\}
\\
&
{} =
\{ \sigma_2 * (\sigma_3 * \sigma_1) \} 
=
\{ \sigma_0 * \sigma_1 \}.\rlap{\hbox to 274 pt{\hfill\qEd}}
\end{align*}\vspace{-2 pt}

\noindent The second lemma describes the decomposition and composition properties of trace evaluation.
\begin{lem}[Trace Decomposition and Composition]
\label{lem:comp:eval}
Consider traces $\tau,\kappa,\lambda$ without interface actions such that
$\cover(\tau,\kappa,\lambda)$.  For all states $\sigma_0,\sigma_1$, if
\begin{equation}
\label{eqn:comp:eval1}
{(\sigma_0 * \sigma_1)\fdef}
\;\wedge\;
{\db{\Gamma \vdash \kappa}\sigma_0 \neq \top}
\;\wedge\;
{\db{\lambda : \Gamma}\sigma_1 \neq \top},
\end{equation}
then 
\begin{equation}
\label{eqn:comp:eval2}
\db{\tau}(\sigma_0 * \sigma_1) = 
\{(\sigma',\tau) \in \db{\Gamma \vdash \kappa}\sigma_0 \otimes \db{\lambda : \Gamma}\sigma_1\}
\end{equation}
and
\begin{multline}
\label{eqn:comp:eval2-1}
\forall \sigma_0',\sigma_1',\kappa',\lambda'.\,
(\cover(\tau,\kappa',\lambda')
\,\wedge\,
(\sigma_0',\kappa') \in \db{\Gamma \vdash \kappa}\sigma_0
\,\wedge\, 
(\sigma_1',\lambda') \in \db{\lambda : \Gamma}\sigma_1)
\\
\implies
(\sigma'_0 * \sigma_1')\fdef.
\end{multline}
\end{lem}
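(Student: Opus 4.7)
The plan is to proceed by induction on the length of $\tau$. Since $\tau$, $\kappa$, and $\lambda$ contain no interface actions, $\history(\kappa)=\history(\lambda)=\varepsilon$ and $\cover(\tau,\kappa,\lambda)$ reduces to saying that $\tau$ is an interleaving whose client projection is $\kappa$ and whose library projection is $\lambda$; in particular every action of $\tau$ is a primitive command. In the base case $\tau=\varepsilon$ one has $\kappa=\lambda=\varepsilon$, and both~(\ref{eqn:comp:eval2}) and~(\ref{eqn:comp:eval2-1}) follow immediately from the definitions of $\db{\cdot}$ and $\otimes$.

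For the inductive step I would peel off the first action $\varphi$ of $\tau$ and split on whether it appears as the head of $\kappa$ or of $\lambda$; the two cases are symmetric, so assume $\varphi=(t,c)$ is a client action with $\kappa=\varphi\kappa_0$, and observe that $\cover(\tau_0,\kappa_0,\lambda)$ still holds for the tail $\tau_0$. The hypothesis $\db{\Gamma\vdash\kappa}\sigma_0\neq\top$ gives $f_c^t(\sigma_0)\neq\top$, and Strong Locality together with $(\sigma_0*\sigma_1)\fdef$ yields the crucial equality
$$f_c^t(\sigma_0*\sigma_1)\;=\;f_c^t(\sigma_0)*\{\sigma_1\}.$$
Footprint Preservation then ensures that every $\sigma_0'\in f_c^t(\sigma_0)$ satisfies $\delta(\sigma_0')=\delta(\sigma_0)$, so $(\sigma_0'*\sigma_1)\fdef$ and $\db{\Gamma\vdash\kappa_0}\sigma_0'\neq\top$. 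Applying the inductive hypothesis to $\tau_0,\kappa_0,\lambda$ at each such $\sigma_0'$ identifies $\db{\tau_0}(\sigma_0'*\sigma_1)$ with $\db{\Gamma\vdash\kappa_0}\sigma_0'\otimes\db{\lambda:\Gamma}\sigma_1$ and supplies the matching instance of~(\ref{eqn:comp:eval2-1}); reassembling across the head action produces~(\ref{eqn:comp:eval2}) and~(\ref{eqn:comp:eval2-1}) for $\tau$. The library case is entirely symmetric, using $f_c^t(\sigma_0*\sigma_1)=\{\sigma_0\}*f_c^t(\sigma_1)$.

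The main obstacle I anticipate is the delicate dependence on Strong Locality rather than on ordinary locality. Plain locality would only give $f_c^t(\sigma_0*\sigma_1)\subseteq f_c^t(\sigma_0)*\{\sigma_1\}$, which would suffice for the $\subseteq$ direction of~(\ref{eqn:comp:eval2}) but not for $\supseteq$: without Strong Locality the joint execution could in principle exploit information in $\sigma_1$ to prune nondeterministic branches that the isolated evaluation $\db{\Gamma\vdash\kappa}\sigma_0$ still explores, and the composition direction of Lemma~\ref{prop:sem:local-global} (needed for Corollary~\ref{lemma-clientlocal2}) would fail. Strong Locality is precisely what promotes this inclusion to an equality. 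The remainder of the argument is bookkeeping: the non-fault hypotheses in~(\ref{eqn:comp:eval1}) propagate to every prefix by Footprint Preservation, which in turn makes Strong Locality applicable uniformly along the induction, and~(\ref{eqn:comp:eval2-1}) never requires more than the invariant that the footprints on the two sides do not change between steps.
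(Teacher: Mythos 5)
There is a genuine gap, and it comes from a misreading of the hypothesis. ``Without interface actions'' refers to the annotated actions $\psi = (t,\call\ m(\sigma))$ and $(t,\ret\ m(\sigma))$ of Definition~\ref{def:intact}; the traces $\tau,\kappa,\lambda$ are syntactic traces drawn from $\dbp{\cdot}$ and do contain the \emph{unannotated} call and return actions $(t,\call\ m)$ and $(t,\ret\ m)$ --- indeed $\cover$ requires $\history(\kappa)=\history(\lambda)$, and in the intended application (the proof of Lemma~\ref{prop:sem:local-global}) these histories are nonempty. By assuming every action of $\tau$ is a primitive command you have eliminated the case that is the entire point of the lemma: the inductive step where the peeled-off action $\varphi$ is a call or return, appearing simultaneously at the end of both $\kappa$ and $\lambda$, and where ownership is transferred between the two local states. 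The paper handles this case via a separate Preservation lemma, which shows --- using precision of the pre/postconditions and cancellativity of $*$ --- that the substate the client gives up at a call is exactly the substate the library acquires (and dually for returns), so that $\sigma_0''*\sigma_1''=\sigma_0'''*\sigma_1'''$ is preserved across the action and definedness of the combination is maintained. Without this case the lemma does not support Lemma~\ref{prop:sem:local-global} at all.

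Relatedly, your closing remark that~(\ref{eqn:comp:eval2-1}) ``never requires more than the invariant that the footprints on the two sides do not change between steps'' is false precisely at call and return actions: there the client's and library's individual footprints \emph{do} change (state migrates from one component to the other), and only the footprint of the $*$-combination is invariant; establishing that the recombined state is still defined is exactly what the Preservation lemma is for. Your treatment of the two primitive-command cases is essentially the paper's (induction, Strong Locality to get the equality $f_c^t(\sigma_0*\sigma_1)=f_c^t(\sigma_0)*\{\sigma_1\}$ rather than a mere inclusion, Footprint Preservation to propagate definedness), and your explanation of why plain locality would not suffice for the $\supseteq$ direction is correct; but the argument as proposed proves only a degenerate special case of the statement.
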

\begin{proof}
Consider $\tau,\kappa, \lambda, \sigma_0,\sigma_1, \Gamma$
satisfying the assumptions.
We prove the lemma by induction on the length of $\tau$. 
The base case of $\tau$ being the empty sequence is trivial.


Now suppose that $\tau = \tau'\varphi$ for some 
$\tau',\varphi$. Then there exist $\kappa'$
and $\lambda'$ such that
$$
\cover(\tau',\kappa',\lambda') 
{} \;\wedge\;
((\kappa = \kappa'\varphi \;\wedge\; \lambda = \lambda')
\;\vee\;
(\kappa = \kappa' \;\wedge\; \lambda = \lambda'\varphi)
\;\vee\;
(\kappa = \kappa'\varphi \;\wedge\; \lambda = \lambda'\varphi)).
$$
By the assumption of the lemma, we have that
\begin{equation}
\label{eqn:comp:eval3}
{\db{\Gamma \vdash \kappa'}\sigma_0 \neq \top}
\;\wedge\;
{\db{\lambda' : \Gamma}\sigma_1 \neq \top}.
\end{equation}
Hence, by the induction hypothesis, we have that
\begin{equation}
\label{eqn:comp:eval4}
\db{\tau'}(\sigma_0 * \sigma_1)
=
\{
(\sigma',\tau') \in
\db{\Gamma \vdash \kappa'}\sigma_0
\otimes
\db{\lambda' : \Gamma}\sigma_1\}
\end{equation}
and
\begin{multline}
\label{eqn:comp:eval4-1}
\forall \sigma_0',\sigma_1',\kappa'',\lambda''.\;
(\cover(\tau',\kappa'',\lambda'')
\,\wedge\,
(\sigma_0',\kappa'') \in \db{\Gamma \vdash \kappa'}\sigma_0
\,\wedge\, 
(\sigma_1',\lambda'') \in \db{\lambda' : \Gamma}\sigma_1)
\\
\implies
(\sigma'_0 * \sigma_1')\fdef.
\end{multline}

\noindent From \eqref{eqn:comp:eval4} it follows that:
\begin{equation}
\label{eqn:comp:eval5}
\db{\tau'}(\sigma_0 * \sigma_1) \neq \top.
\end{equation}
Next, we prove that
\begin{equation}
\label{eqn:comp:eval6}
\db{\tau'\varphi}(\sigma_0 * \sigma_1) \neq \top.
\end{equation}
For the sake of contradiction, suppose this disequality does not hold. 
Because of \eqref{eqn:comp:eval5}, 
there exists $\sigma''$ such that
\begin{equation}
\label{eqn:comp:eval7}
(\sigma'',\_) \in \db{\tau'}(\sigma_0 * \sigma_1) 
\;\wedge\;
\db{\varphi}\sigma'' = \top.
\end{equation}
By \eqref{eqn:comp:eval4}, this implies the existence of 
$\sigma''_0,\sigma''_1$ such that
$$
(\sigma_0'',\_) \in \db{\Gamma \vdash \kappa'}\sigma_0
\;\wedge\;
(\sigma_1'',\_) \in \db{\lambda' : \Gamma}\sigma_1
\;\wedge\;
\sigma'' = \sigma_0'' * \sigma_1''.
$$
We split cases based on the relationships among $\kappa$, $\kappa'$, $\lambda$ and $\lambda'$.
\begin{enumerate}[(1)]
\item If $\kappa = \kappa'\varphi$ and $\lambda = \lambda'$, 
then $\varphi=(t,c)$ for some $t,c$. By \eqref{eqn:comp:eval1},
$\db{\Gamma \vdash \varphi}\sigma''_0 \neq \top$, so that $f_c^t(\sigma''_0)
\neq \top$. Hence, by the Strong Locality of $f^t_c$,
$f_c^t(\sigma_0'' * \sigma_1'') \neq \top$, so that $\db{\varphi}(\sigma'')
\neq \top$. But this contradicts \eqref{eqn:comp:eval7}.

\item If $\kappa = \kappa'$ and $\lambda = \lambda'\varphi$, then $\varphi=(t,c)$ for some $t,c$.
This case is symmetric to the previous one.

\item If $\kappa = \kappa'\varphi$ and $\lambda = \lambda'\varphi$, then
  $\varphi$ is a call or a return action. Then, by the
  definition of evaluation, $\db{\varphi}\sigma'' = \{(\sigma'',\varphi)\} \neq
  \top$.  This gives the desired contradiction.
\end{enumerate}

\noindent The remainder of the proof is again done by a case analysis 
on the relationships among $\kappa$, $\kappa'$, $\lambda$ and $\lambda'$. We
consider three cases.

\medskip

1. $\kappa = \kappa'\varphi$ and $\lambda = \lambda'$. 
In this case, $\varphi = (t,c)$ for some $t,c$.
As shown in $\eqref{eqn:comp:eval6}$, $\db{\tau'\varphi}(\sigma_0 * \sigma_1) \neq \top$. 
Pick $\sigma''$ such that
\be\label{alexey1}
(\sigma'',\tau'\varphi) \in \db{\tau'\varphi}(\sigma_0  * \sigma_1).
\ee
By the definition of the trace evaluation and the induction
hypothesis in \eqref{eqn:comp:eval4}, there exist
$\sigma''_0$, $\sigma''_1$, $\kappa''$ and $\lambda''$ such that
\be \label{alexey2}
(\sigma'',\varphi) \in \db{\varphi}(\sigma''_0 * \sigma''_1)
\;\wedge\;
(\sigma''_0,\kappa'') \in \db{\Gamma \vdash \kappa'}\sigma_0
\;\wedge\;
(\sigma''_1,\lambda'') \in \db{\lambda' : \Gamma}\sigma_1
\;\wedge\;
\cover(\tau',\kappa'',\lambda'').
\ee
Then
$$
\cover(\tau'\,(t,c),\kappa''\,(t,c),\lambda'').
$$
We have $\db{\Gamma \vdash (t,c)}\sigma''_0 \neq \top$, because
$\kappa = \kappa'\,(t,c)$ and $\db{\Gamma \vdash \kappa}\sigma_0 \neq \top$.
Hence $f_c^t(\sigma''_0) \neq \top$ and, furthermore,
$\sigma'' \in f_c^t(\sigma''_0 * \sigma''_1)$.
Hence, by the Strong Locality of $f^t_c$,
there exists $\sigma'''_0$ such that
$$
\sigma'''_0 \in f_c^t(\sigma''_0)
\;\wedge\;
\sigma'' = \sigma'''_0 * \sigma''_1.
$$
This implies 
$$
(\sigma'''_0,\kappa''\,(t,c)) \in \db{\Gamma \vdash \kappa'\,(t, c)}\sigma_0 
= \db{\Gamma \vdash \kappa}\sigma_0.
$$
From what we have shown so far, it follows that
$$
(\sigma'',\tau'\,(t,c)) 
 =
(\sigma'''_0 * \sigma''_1, \tau'\,(t,c))
 \in
\db{\Gamma \vdash \kappa}\sigma_0
\otimes
\db{\lambda : \Gamma}\sigma_1.
$$
Thus,
\be\label{alexey3}
\db{\tau}(\sigma_0 * \sigma_1)
\subseteq
\{
(\sigma'',\tau) \in
\db{\Gamma \vdash \kappa}\sigma_0
\otimes
\db{\lambda : \Gamma}\sigma_1\}.
\ee

To show the other inclusion and the \eqref{eqn:comp:eval2-1} part of
the lemma, consider $\sigma''_0,\sigma''_1,\kappa'',\lambda''$ such that
$$
(\sigma''_0,\kappa'') \in \db{\Gamma \vdash \kappa'\,(t,c)}\sigma_0
\;\wedge\;
(\sigma''_1,\lambda'') \in \db{\lambda' : \Gamma}\sigma_1
\;\wedge\;
\cover(\tau'\,(t,c),\kappa'',\lambda'').
$$
Then for some $\kappa'''$ we have
$$
\kappa'' = \kappa'''\,(t,c)
\;\wedge\;
\cover(\tau',\kappa''',\lambda'').
$$
By the definition of the evaluation function, there exists $\sigma'''_0$ such that
$$
(\sigma'''_0,\kappa''') \in \db{\Gamma \vdash \kappa'}\sigma_0
\;\wedge\;
(\sigma''_0,(t,c)) \in \db{\Gamma \vdash (t,c)}\sigma'''_0.
$$
By the induction hypothesis in \eqref{eqn:comp:eval4} and \eqref{eqn:comp:eval4-1},
$$
(\sigma'''_0 * \sigma''_1)\fdef
\;\wedge\;
(\sigma'''_0 * \sigma''_1, \tau') \in \db{\tau'}(\sigma_0 * \sigma_1).
$$
Now by the Footprint Preservation property of $f^t_c$, the first conjunct above implies that 
$(\sigma''_0 * \sigma''_1)\fdef$, which proves 
$\eqref{eqn:comp:eval2-1}$. By the Strong Locality of $f^t_c$,
$$
(\sigma''_0 * \sigma''_1,(t,c)) 
\in 
\db{(t,c)}(\sigma'''_0 * \sigma''_1).
$$
From what we have shown above it follows that 
$$
(\sigma''_0 * \sigma''_1, \tau'\,(t,c)) 
\in \db{\tau}(\sigma_0 * \sigma_1).
$$
Hence, 
\be\label{alexey4}
\db{\tau}(\sigma_0 * \sigma_1)
\supseteq
\{
(\sigma'',\tau) \in
\db{\Gamma \vdash \kappa}\sigma_0
\otimes
\db{\lambda : \Gamma}\sigma_1\}.
\ee

\medskip

2. $\kappa = \kappa'$ and $\lambda = \lambda'\varphi$.  This case is symmetric
to the previous one.

\medskip

3. $\kappa = \kappa' \varphi$ and $\lambda = \lambda'\varphi$.
In this case $\varphi$ is a call to or a return from a method
in $\Gamma$. As shown in \eqref{eqn:comp:eval6}, 
$\db{\tau'\varphi}(\sigma_0 * \sigma_1) \neq \top$.
Pick $\sigma''$ such that~(\ref{alexey1}) holds.
By the definition of evaluation and the induction
hypothesis in \eqref{eqn:comp:eval4}, there exist
$\sigma''_0$, $\sigma''_1$, $\kappa''$ and $\lambda''$ such that~(\ref{alexey2})
holds. But 
$$
\db{\Gamma \vdash \varphi}\sigma''_0 \neq \top
\;\wedge\;
\db{\varphi : \Gamma}\sigma''_1 \neq \top.
$$
Furthermore, $\sigma'' = \sigma''_0 * \sigma''_1$. 
Hence, by Lemma~\ref{lem:preservation} and the definition of the evaluation, 
there exist $\sigma'''_0,\sigma'''_1,\varphi'$ such that
$$
\sigma'' = \sigma'''_0 * \sigma'''_1 \;\wedge\;
(\sigma'''_0,\varphi') \in \db{\Gamma \vdash \varphi}\sigma''_0 \;\wedge\;
(\sigma'''_1,\varphi') \in \db{\varphi : \Gamma}\sigma''_1 
$$
This in turn implies that
$$
(\sigma'''_0,\kappa''\varphi') \in \db{\Gamma \vdash \kappa'\varphi}\sigma_0
\;\wedge\;
(\sigma'''_1,\lambda''\varphi') \in \db{\lambda'\varphi : \Gamma}\sigma_1.
$$
From what we have shown so far, it follows that
$$
(\sigma'',\tau'\varphi) 
=
(\sigma'''_0 * \sigma'''_1, \tau'\varphi)
\in
\db{\Gamma \vdash \kappa}\sigma_0
\otimes
\db{\lambda : \Gamma}\sigma_1.
$$
Thus,~(\ref{alexey3}) holds.

To show the other inclusion and the \eqref{eqn:comp:eval2-1} part of the lemma,
consider
$$
\sigma''_0,\sigma''_1,\sigma'''_0,\sigma'''_1,\kappa'',\lambda'',\varphi'
$$
such that
\begin{align*}
&
(\sigma''_0,\kappa'') \in \db{\Gamma \vdash \kappa'}\sigma_0
\;\wedge\;
(\sigma''_1,\lambda'') \in \db{\lambda': \Gamma}\sigma_1
\;\wedge\;
(\sigma'''_0,\varphi') \in \db{\Gamma \vdash \varphi }\sigma''_0
\\
&
{} \;\wedge\;
(\sigma'''_1,\varphi') \in \db{\varphi : \Gamma}\sigma''_1
\;\wedge\;
\cover(\tau'\varphi, \kappa''\varphi', \lambda''\varphi').
\end{align*}
We need to show that 
$$
(\sigma'''_0 * \sigma'''_1)\fdef \;\wedge\;(\sigma'''_0*\sigma'''_1,\tau'\varphi) \in \db{\tau'\varphi}(\sigma_0*\sigma_1).
$$
In particular, this establishes~(\ref{alexey4}).

Since $\cover(\tau'\varphi, \kappa''\varphi', \lambda''\varphi')$ and
$\varphi'$ is a call to or a return from a method in $\Gamma$, 
$$
\cover(\tau', \kappa'', \lambda'') \;\wedge\;\varphi = \erase(\varphi').
$$
We now use the induction hypothesis in \eqref{eqn:comp:eval4} and \eqref{eqn:comp:eval4-1}
and derive that
$$
(\sigma''_0 * \sigma''_1)\fdef \;\wedge\;(\sigma''_0*\sigma''_1,\tau') \in \db{\tau'}(\sigma_0*\sigma_1).
$$
But $\db{\Gamma \vdash \varphi }\sigma''_0 \neq \top$ and $\db{\varphi : \Gamma}\sigma''_1 \neq \top$.
Hence, by Lemma~\ref{lem:preservation},
$$
(\sigma'''_0 * \sigma'''_1)\fdef \;\wedge\;(\sigma''_0 * \sigma''_1 = \sigma'''_0 * \sigma'''_1).
$$
By the definition of evaluation,
$$
\db{\varphi}(\sigma''_0*\sigma''_1) = \{(\sigma''_0*\sigma''_1,\varphi)\} = \{(\sigma'''_0*\sigma'''_1,\varphi)\}. 
$$
From what we have shown, it follows that 
$$
(\sigma'''_0*\sigma'''_1,\tau'\varphi) \in \db{\tau'\varphi}(\sigma_0*\sigma_1),
$$
as required.
\end{proof}

The following lemma shows that the trace-set generation of our semantics also
satisfies the decomposition and composition properties.
\begin{lem}
\label{lem:comp:tracesemantics}
$
\forall \tau.\, \tau \in \dbp{\Cc(\cL)}
\iff
(\exists \kappa,\lambda.\;
\kappa \in \dbp{\Gamma \vdash \Cc}
\;\wedge\;
\lambda \in \dbp{\cL : \Gamma}
\;\wedge\;
\cover(\tau,\kappa,\lambda)).
$
\end{lem}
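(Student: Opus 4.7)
The plan is to prove both directions by structural induction on commands, guided by the definitions of the trace sets in Figure~\ref{fig:trace}. The essential observation is that the three trace sets differ only in how method calls unfold: in $\dbp{\Cc(\cL)}$, a call to $m$ by thread $t$ expands as $(t,\call\ m)\,\tau_m\,(t,\ret\ m)$ with $\tau_m \in \dbp{C_m}_t(\_)$; in $\dbp{\Gamma \vdash \Cc}$, it expands as the empty-body invocation $(t,\call\ m)\,(t,\ret\ m)$; and $\dbp{\cL:\Gamma}$ records arbitrary interleavings of fully-unfolded method invocations across MGC threads. Because no $\dbp{\cdot}$-trace carries state annotations on interface actions, $\erase$ acts as the identity on such traces, so $\cover(\tau,\kappa,\lambda)$ reduces to the conjunction of $\client(\tau) = \kappa$, $\lib(\tau) = \lambda$, and $\history(\kappa) = \history(\lambda)$.

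For the forward direction, I would set $\kappa = \client(\tau)$ and $\lambda = \lib(\tau)$, so that $\cover$ holds by construction; what remains is to show $\kappa \in \dbp{\Gamma \vdash \Cc}$ and $\lambda \in \dbp{\cL:\Gamma}$. The core step is a per-thread claim proved by induction on $C$: for $\tau_t \in \dbp{C}_t \eta$ with $\eta(m,t') = \dbp{C_m}_{t'}(\_)$, $\client(\tau_t)$ lies in $\dbp{C}_t \eta'$ with $\eta'(m,t') = \{\varepsilon\}$, and $\lib(\tau_t)$ lies in $\dbp{C_\cmgc}_t \eta$. The only non-trivial case is $C = m$: a trace $(t,\call\ m)\,\tau_m\,(t,\ret\ m)$ projects to $(t,\call\ m)\,(t,\ret\ m)$ on the client side and retains its full form on the library side, matching one iteration of $C_\cmgc = (m_1 + \ldots + m_j)^*$; the cases of sequencing, nondeterministic choice, and iteration close under the inductive hypothesis. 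To lift the per-thread claim to the whole program, I would use that per-thread projections commute with parallel composition, so $\client(\tau)$ and $\lib(\tau)$ are interleavings of the per-thread projections and thus lie in the corresponding parallel trace sets. Prefix closure at the top level is accommodated because a prefix of a parallel composition is a parallel composition of prefixes of the per-thread traces, and the target trace sets are themselves prefix-closed.

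The backward direction is symmetric. Given $\kappa$, $\lambda$, and $\tau$ with $\cover(\tau,\kappa,\lambda)$, I would project onto each thread and apply the converse of the per-thread claim (again by induction on $C$). In the case $C = m$, $\kappa|_t$ is forced by $\cover$ to contain $(t,\call\ m)\,(t,\ret\ m)$; by $\history(\kappa|_t) = \history(\lambda|_t)$ and the structure of MGC traces, $\lambda|_t$ contains a wrapped body $(t,\call\ m)\,\tau_m\,(t,\ret\ m)$ with $\tau_m \in \dbp{C_m}_t(\_)$, and this body is exactly the segment of $\tau|_t$ required for membership in $\dbp{m}_t$ under the library-body mapping. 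Reinterleaving across threads and taking the appropriate prefix then yields $\tau \in \dbp{\Cc(\cL)}$.

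The main obstacle I anticipate is bookkeeping of the interleavings and prefix closure. One must check that projecting a single $\tau$ onto its client and library parts yields objects that are simultaneously valid interleavings of per-thread traces in the sense of $\parallel$, and conversely that a reassembled $\tau$ respects the $\parallel$-structure required by both trace sets. Incomplete method invocations arising from prefix closure (a pending call without a matching return) are handled uniformly, since the prefix-closure wrappers in all three trace sets allow the pending activity in $\kappa$ and $\lambda$ to correspond to the pending activity in $\tau$.
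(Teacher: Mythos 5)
Your proposal is correct and follows essentially the same route as the paper: both directions reduce to the observation that per-thread projections of a complete-program trace land in the client's trace set (with method bodies replaced by $\varepsilon$) and in the most-general-client trace set (with bodies filled in), lifted through $\parallel$ and prefix closure. The only difference is that you make explicit the induction on $C$ behind the per-thread claim, which the paper's proof asserts without elaboration.
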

\begin{proof}
Let
$$
\Cc  = {\sf let}\ [-]\ {\sf in}\ C_1\parallel{\ldots}\parallel C_n;
\ \,
\cL  = \{m = C_m \mid m \in \{m_1,{\ldots},m_j\}\};
\ \, 
C_\cmgc  = (m_1 + {\ldots} + m_j)^*.
$$

First, consider $\tau \in \dbp{\Cc(\cL)}$. By the definition of the semantics,
for some trace $\tau'$, $\tau$ is a prefix of $\tau'$,
$$
\forall t \in \{1,\ldots,n\}.\,
\tau'|_t \in \dbp{C_t}_t(\mylambda (m,t).\,\dbp{C_m}_t(\_))
$$
and all actions in $\tau'$ are done by some thread $t \in \{1,\ldots,n\}$.
Then
$$
\forall t \in \{1,\ldots,n\}.\,
\client(\tau'|_t) \in \dbp{C_t}_t(\mylambda (m,t).\,\{\varepsilon\}) \;\wedge\;
\lib(\tau'|_t) \in \dbp{C_\cmgc}_t(\mylambda (m,t).\,\dbp{C_m}_t(\_)).
$$
Since all actions in $\tau'$ are done by some thread $t \in \{1,\ldots,n\}$, we have
$$
\client(\tau') \in 
(\client(\tau'|_1) \parallel \ldots \parallel \client(\tau'|_n))
\;\wedge\;
\lib(\tau') \in 
(\lib(\tau'|_1) \parallel \ldots \parallel \lib(\tau'|_n)).
$$
Hence,
$$
\client(\tau') \in \dbp{\Gamma \vdash \Cc} \;\wedge\; \lib(\tau') \in \dbp{\cL : \Gamma}.
$$
Since $\client(\tau)$ is a prefix of $\client(\tau')$ and $\lib(\tau)$ is a
prefix of $\lib(\tau')$, this implies
$$
\client(\tau) \in \dbp{\Gamma \vdash \Cc} \;\wedge\; \lib(\tau) \in \dbp{\cL : \Gamma}.
$$
Furthermore, $\cover(\tau,\client(\tau),\lib(\tau))$, as desired.

Assume now that 
$$
\kappa \in \dbp{\Gamma \vdash \Cc} 
\;\wedge\;
\lambda \in \dbp{\cL : \Gamma}
\;\wedge\;
\cover(\tau,\kappa,\lambda).
$$
Then for some traces $\kappa'$ and $\lambda'$, $\kappa$ is a prefix of $\kappa'$, 
$\lambda$ is a prefix of $\lambda'$, and 
$$
\forall t \in \{1,\ldots,n\}.\;
\kappa'|_t \in \dbp{C_t}_t(\mylambda (m,t).\,\{\varepsilon\})
\;\wedge\;
\lambda'|_t \in \dbp{C_\cmgc}_t(\mylambda (m,t).\,\dbp{C_m}_t(\_)).
$$
The definition of our semantics in Figure~\ref{fig:trace} allows us to choose
$\lambda'$ in such a way that for some trace $\tau'$, $\tau$ is a prefix of
$\tau'$ and $\cover(\tau', \kappa', \lambda')$. Then 
$$
\forall t \in \{1,\ldots,n\}.\, \tau'|_t \in \dbp{C_t}_t(\mylambda (m,t).\,\dbp{C_m}_t(\_))
$$
and $\tau' \in (\tau'|_1 \parallel \ldots \parallel \tau'|_n)$.  Thus,
$\tau' \in \dbp{\Cc(\cL)}$, which implies $\tau \in \dbp{\Cc(\cL)}$, as desired.
%
\end{proof}

\paragraph{\em Proof of Lemma~\ref{prop:sem:local-global}.}
We first show that $\Cc(\cL)$ is safe for $I_0*I_1$. Pick states $\sigma_0,\sigma_1,\tau$
such that
$$
\sigma_0 \in I_0 
\;\wedge\;
\sigma_1 \in I_1
\;\wedge\;
(\sigma_0 * \sigma_1)\fdef
\;\wedge\;
\tau \in \dbp{\Cc(\cL)}.
$$
By Lemma~\ref{lem:comp:tracesemantics}, there exist traces $\kappa,\lambda$ such that
\be\label{alexey10}
\kappa \in \dbp{\Gamma \vdash \Cc}
\;\wedge\;
\lambda \in \dbp{\cL : \Gamma}
\;\wedge\;
\cover(\tau,\kappa,\lambda).
\ee
By our assumptions,
$\Gamma \vdash \Cc$ and $\cL : \Gamma$ are safe for $I_0$ and $I_1$, respectively.
Hence, 
$\db{\Gamma \vdash \kappa}\sigma_0 \neq \top$ and 
$\db{\lambda : \Gamma}\sigma_1 \neq \top$.
By Lemma~\ref{lem:comp:eval}, these disequalities imply that
$\db{\tau}(\sigma_0*\sigma_1) \neq \top$.
We have just shown the safety of $\Cc(\cL)$ for $I_0 * I_1$.

Next, we show that 
$$
\db{\Cc(\cL),I_0*I_1}
\subseteq
\db{\Gamma \vdash \Cc, I_0}
\otimes
\db{\cL : \Gamma, I_1}.
$$
Pick $(\sigma,\tau) \in \db{\Cc(\cL),I_0*I_1}$. Then
for some $\sigma_0,\sigma_1$ we have
$$
\sigma_0 \in I_0
\;\wedge\;
\sigma_1 \in I_1
\;\wedge\;
\sigma = \sigma_0*\sigma_1
\;\wedge\;
\tau \in \dbp{\Cc(\cL)}
\;\wedge\;
(\_,\tau') \in \db{\tau}(\sigma_0*\sigma_1).
$$
By Lemma~\ref{lem:comp:tracesemantics}, there are $\kappa,\lambda$ such
that~(\ref{alexey10}) holds.
We use Lemma~\ref{lem:comp:eval} and deduce that for
some $\kappa',\lambda'$ we have
$$
(\_,\kappa') \in \db{\Gamma \vdash \kappa}\sigma_0
\;\wedge\;
(\_,\lambda') \in \db{\lambda : \Gamma}\sigma_1
\;\wedge\;
\cover(\tau,\kappa', \lambda').
$$
Furthermore, $\sigma_0 \in I_0$, $\sigma_1 \in I_1$ and $(\sigma_0 * \sigma_1)\fdef$.
Hence,
$$
(\sigma,\tau) = (\sigma_0*\sigma_1,\tau) \in \db{\Gamma \vdash \Cc, I_0} \otimes \db{\cL : \Gamma, I_1},
$$
as desired.

Finally, we prove that
$$
\db{\Cc(\cL), I_0*I_1}
\supseteq
\db{\Gamma \vdash \Cc, I_0}
\otimes
\db{\cL : \Gamma, I_1}.
$$
Pick $(\sigma,\tau) \in \db{\Gamma \vdash \Cc,I_0} \otimes \db{\cL : \Gamma,I_1}$.
By the definition of the $\otimes$ operator and our semantics, there exist 
$\sigma_0,\sigma_1,\sigma_0',\sigma_1',\kappa,\lambda,\kappa',\lambda'$
such that 
\begin{align*}
&
\sigma = \sigma_0 * \sigma_1
\;\wedge\;
\sigma_0 \in I_0
\;\wedge\;
\sigma_1 \in I_1
\;\wedge\;
\kappa \in \dbp{\Gamma \vdash \Cc}
\;\wedge\;
\lambda \in \dbp{\cL : \Gamma}
\\
&
{} \;\wedge\;
(\sigma_0',\kappa') \in \db{\Gamma \vdash \kappa}\sigma_0
\;\wedge\;
(\sigma_1',\lambda') \in \db{\lambda : \Gamma}\sigma_1
\;\wedge\;
\cover(\tau,\kappa',\lambda').
\end{align*}
By the definition of our semantics, $\kappa = \erase(\kappa')$ and $\lambda =
\erase(\lambda')$. Because of this and $\cover(\tau,\kappa',\lambda')$
we have $\cover(\tau,\kappa,\lambda)$. By
Lemma~\ref{lem:comp:tracesemantics}, this implies $\tau \in
\dbp{\Cc(\cL)}$. Also, by Lemma~\ref{lem:comp:eval}, we have that
$$
(\sigma'_0*\sigma_1')\fdef \;\wedge\; (\sigma'_0*\sigma'_1,\tau) \in  \db{\tau}(\sigma_0*\sigma_1).
$$
Hence,
$$
(\sigma,\tau) = (\sigma_0*\sigma_1, \tau) \in \db{\Cc(\cL),I_0*I_1},
$$
as desired.
\qed

\subsection{Proof of Lemma~\ref{frame-out}\label{proof:frame-out}}

Consider $\lambda \in \db{\cL : \Gamma'}(\sigma_0*\sigma'_0)$. Then there exist
$\sigma_1$ and $\zeta \in \dbp{\cL}$ such that $(\sigma_1, \lambda) \in \db{\zeta :
  \Gamma'}(\sigma_0*\sigma'_0)$. We show that for some $\sigma_2$ we have
$$
((\sigma_2, \llfloor \lambda \rrfloor_{\Gamma}) \in \db{\zeta :
  \Gamma}\sigma_0)\ \wedge\ 
(\sigma_1 = \sigma_2 *(\dba{\llceil \history(\lambda) \rrceil_{\Gamma}}\sigma'_0)).
$$

We proceed by induction on the length of $\zeta$. The base case of $\zeta =
\varepsilon$ is trivial. Assume that the above holds for some $\lambda, \zeta,
\sigma_1, \sigma_2$ and consider $\varphi, \varphi', \sigma'_1$ such that
$$
\zeta\varphi \in \dbp{\cL}\ \wedge\ 
(\sigma'_1, \varphi') \in \db{\varphi : \Gamma'}\sigma_1\ \wedge\ 
\dba{\llceil \history(\lambda\varphi') \rrceil_{\Gamma}}\sigma'_0 \not= \top.
$$
We show that for some $\sigma'_2$ we have
$$
((\sigma'_2, \llfloor \lambda \varphi' \rrfloor_{\Gamma}) \in \db{\zeta\varphi :
  \Gamma}\sigma_0)\ \wedge\ 
(\sigma'_1 = \sigma'_2 * (\dba{\llceil \history(\lambda\varphi') \rrceil_{\Gamma}}\sigma'_0)).
$$
We consider three cases, depending on the type of the actions $\varphi$ and $\varphi'$.
\begin{iteMize}{$\bullet$}
\item
$\varphi = \varphi' = (t,c)$.
Then
$\history(\lambda\varphi')= \history(\lambda)$.
Since $\cL:\Gamma$ is safe at $\sigma_0$, $f_c^t(\sigma_2)\not=\top$.
Hence, by the Strong Locality property, we have
\begin{multline*}
\sigma'_1 \in f_c^t(\sigma_1) = f_c^t(\sigma_2*(\dba{\llceil \history(\lambda)
  \rrceil_{\Gamma}}\sigma'_0)) = {}\\ 
f_c^t(\sigma_2)*\{\dba{\llceil \history(\lambda) \rrceil_{\Gamma}}\sigma'_0\} = 
f_c^t(\sigma_2)*\{\dba{\llceil \history(\lambda\varphi')\rrceil_{\Gamma}}\sigma'_0\}.
\end{multline*}
Then $\sigma'_1 = \sigma'_2*(\dba{\llceil \history(\lambda\varphi')
  \rrceil_{\Gamma}}\sigma'_0)$ for some $\sigma'_2\in f_c^t(\sigma_2)$.
\item
$\varphi = (t, \call\ m)$ and 
$\varphi' = (t, \call\ m(\sigma_p*\sigma'_p))$, where 
$\sigma_p\in p^m_t$ and $\sigma_p*\sigma'_p\in r^m_t$.
Then 
$$
\sigma'_1 = \sigma_1*\sigma_p*\sigma'_p =
(\sigma_2*\sigma_p)*((\dba{\llceil \history(\lambda) \rrceil_{\Gamma}}\sigma'_0) *\sigma'_p)
= 
(\sigma_2*\sigma_p)*
(\dba{\llceil \history(\lambda\varphi') \rrceil_{\Gamma}}\sigma'_0).
$$
Hence, the required holds for $\sigma'_2 = \sigma_2*\sigma_p$.
\item 
$\varphi = (t, \ret\ m)$ and
$\varphi' = (t, \ret\ m(\sigma_q*\sigma'_q))$, where $\sigma_q\in q^m_t$
  and $\sigma_q*\sigma'_q\in s^m_t$.  Then 
$$
\sigma_2*(\dba{\llceil \history(\lambda) \rrceil_{\Gamma}}\sigma'_0)=\sigma_1 =
  \sigma'_1*\sigma_q*\sigma'_q.
$$
Since $\cL:\Gamma$ is safe at $\sigma_0$, $\sigma_2 = \sigma'_2*\sigma_q$ for
some $\sigma'_2$, so that
$$
\sigma'_2*\sigma_q*(\dba{\llceil \history(\lambda) \rrceil_{\Gamma}}\sigma'_0)
=\sigma'_1*\sigma_q*\sigma'_q.
$$
By the cancellativity of $*$, this entails 
$$
\sigma'_2*(\dba{\llceil \history(\lambda)
  \rrceil_{\Gamma}}\sigma'_0)=\sigma'_1*\sigma'_q.
$$
We also know that 
$\dba{\llceil \history(\lambda\varphi')\rrceil_{\Gamma}}\sigma'_0 \not=\top$, 
so that
$$
\dba{\llceil \history(\lambda\varphi')\rrceil_{\Gamma}}\sigma'_0= 
(\dba{\llceil \history(\lambda)\rrceil_{\Gamma}}\sigma'_0) \diff \sigma'_q
$$
is defined. Hence,
$\sigma'_1=\sigma'_2*
(\dba{\llceil \history(\lambda\varphi')\rrceil_{\Gamma}}\sigma'_0)$.\qed
\end{iteMize}

\subsection{Proof of Lemma~\ref{frame-in}\label{proof:frame-in}}  

In the following, we extend the $\llfloor \cdot \rrfloor_\Gamma$ operation to
non-interface actions by assuming that it does not change them.

Consider $\lambda, \sigma_0, \sigma'_0,\sigma''_0$ satisfying the conditions of
the lemma. Then $\history(\lambda)$ is balanced from
$\delta(\sigma''_0*\sigma'_0)$ for $\delta(\sigma_0) \preceq
\delta(\sigma''_0)$, and there exist $\sigma$ and $\zeta \in \dbp{\cL}$ such
that $(\sigma, \llfloor \lambda \rrfloor_\Gamma) \in \db{\zeta :
  \Gamma}\sigma_0$.  We show that
$$
(\sigma*(\dba{\llceil \history(\lambda) \rrceil_{\Gamma}}\sigma'_0), 
\lambda) \in \db{\zeta : \Gamma'}(\sigma_0*\sigma'_0)
$$
by induction on the length of $\zeta$. The base case of $\zeta = \varepsilon$ is
trivial. Assume that the above holds for some $\lambda, \sigma_0, \sigma'_0,
\sigma''_0, \sigma, \zeta$ and consider $\varphi, \varphi', \sigma'$ such that
\begin{gather*}
(\zeta\varphi \in \dbp{\cL})\ \wedge\ 
((\sigma', \llfloor \varphi' \rrfloor_\Gamma) \in \db{\varphi : \Gamma}\sigma) \ \wedge\ 
\\
(\history(\lambda\varphi') \mbox{ is balanced from } \delta(\sigma''_0*\sigma'_0)
\mbox{ for } \delta(\sigma_0) \preceq \delta(\sigma''_0)) \ \wedge\ 
(\dba{\llceil \history(\lambda\varphi') \rrceil_{\Gamma}}\sigma'_0\not=\top).
\end{gather*}
We show that
$$
(\sigma'*(\dba{\llceil \history(\lambda\varphi')\rrceil_{\Gamma}}\sigma'_0),
\lambda\varphi') \in \db{\zeta\varphi : \Gamma'}(\sigma_0 * \sigma'_0).
$$
We consider three cases, depending on the type of the actions $\varphi$ and $\varphi'$.
\begin{iteMize}{$\bullet$}
\item $\varphi = \varphi' = (t,c)$. Then 
$\history(\lambda) = \history(\lambda\varphi')$
and $\sigma' \in f_c^t(\sigma)$.  Since 
$$
(\sigma*(\dba{\llceil \history(\lambda) \rrceil_{\Gamma}}\sigma'_0))\fdef,
$$
by the Footprint
Preservation property, $(\sigma' *(\dba{\llceil \history(\lambda)
  \rrceil_{\Gamma}}\sigma'_0))\fdef$.  Then by the Strong Locality property,
\begin{multline*}
\sigma' * (\dba{\llceil \history(\lambda\varphi') \rrceil_{\Gamma}}\sigma'_0) =
\sigma' * (\dba{\llceil \history(\lambda) \rrceil_{\Gamma}}\sigma'_0) 
\in {}\\
f_c^t(\sigma)*\{\dba{\llceil \history(\lambda) \rrceil_{\Gamma}}\sigma'_0\} =
f_c^t(\sigma*(\dba{\llceil \history(\lambda)
  \rrceil_{\Gamma}}\sigma'_0)).
\end{multline*}
\item $\varphi = (t, \call\ m)$ and $\varphi' = (t, \call\
  m(\sigma_p*\sigma'_p))$, where $\sigma_p\in p^m_t$ and $\sigma_p*\sigma'_p\in
  r^m_t$. In this case we have $\sigma' = \sigma*\sigma_p$.  Since
  $\history(\lambda\varphi')$ is balanced from $\delta(\sigma''_0*\sigma'_0)$, we
  have $(\sigma*(\dba{\llceil \history(\lambda) \rrceil_{\Gamma}}\sigma'_0) *
  \sigma_p*\sigma'_p)\fdef$. 
Then
$$
\sigma*(\dba{\llceil \history(\lambda)
  \rrceil_{\Gamma}}\sigma'_0)*\sigma_p*\sigma'_p
 = \sigma'*((\dba{\llceil \history(\lambda) \rrceil_{\Gamma}}\sigma'_0)*\sigma'_p)
= \sigma'*\dba{\llceil \history(\lambda\varphi') \rrceil_{\Gamma}}\sigma'_0.
$$
\item $\varphi = (t, \ret\ m)$ and $\varphi' = (t, \ret\
  m(\sigma_q*\sigma'_q))$, where $\sigma_q\in q^m_t$ and $\sigma_q*\sigma'_q\in
  s^m_t$.  In this case we have $\sigma' = \sigma \diff \sigma_q$. We know 
  $\dba{\llceil \history(\lambda\varphi') \rrceil_{\Gamma}}\sigma'_0 \not= \top$. Thus,
  $((\dba{\llceil \history(\lambda) \rrceil_{\Gamma}}\sigma'_0) \diff
  \sigma'_q)\fdef$. Since $\sigma*(\dba{\llceil \history(\lambda)
    \rrceil_{\Gamma}}\sigma'_0)$ is defined, so is  
\begin{multline*}
\sigma' * (\dba{\llceil \history(\lambda\varphi') \rrceil_{\Gamma}}\sigma'_0)=
\sigma' * ((\dba{\llceil \history(\lambda) \rrceil_{\Gamma}}\sigma'_0)\diff\sigma'_q)={}\\
(\sigma\diff \sigma_q)*((\dba{\llceil \history(\lambda) \rrceil_{\Gamma}}\sigma'_0)\diff\sigma'_q)=
(\sigma*(\dba{\llceil \history(\lambda) \rrceil_{\Gamma}}\sigma'_0)) \diff (\sigma_q*\sigma'_q).\ \rlap{\hbox to 10 pt{\hss\qEd}}
\end{multline*}
\end{iteMize}
\section*{Glossary}

{\small

\begin{longtable}{@{}l@{\qquad}l@{}}
\bf Symbol & \bf Meaning and section
\\[4pt]
$\Sigma$ & separation algebra, \ref{sec:sa}
\\[1pt]
$*$ & operation accompanying a separation algebra, \ref{sec:sa}
\\[1pt]
$e$ & unit of a separation algebra, \ref{sec:sa}
\\[1pt]
$\sigma, \theta$ & state, \ref{sec:sa}
\\[1pt]
$p, q, r, s$ & predicate on states, \ref{sec:sa}; parameterised predicate, \ref{sec:mspecs}
\\[1pt]
$g(x)\fdef$ & function $g$ is defined on $x$, \ref{sec:sa}
\\[1pt]
$g(x)\fundef$ & function $g$ is undefined on $x$, \ref{sec:sa}
\\[1pt]
$g[x:y]$ & the function that has the same value as $g$ everywhere,
\\
& except for $x$, where it has the value $y$, \ref{sec:sa}
\\[1pt]
$\pi$ & permission, \ref{sec:sa}
\\[1pt]
$\diff$ & subtraction: on states, \ref{sec:sa}; on a state and a predicate,
\ref{sec:mspecs} 
\\[1pt]
$l$ & footprint, \ref{sec:foot}
\\[1pt]
$\delta(\sigma)$ & footprint of a state $\sigma$, \ref{sec:foot}
\\[1pt]
$\Foot(\Sigma)$ & the set of all footprints in a separation algebra $\Sigma$,
\ref{sec:foot} 
\\[1pt]
$\circ$ & addition operation on footprints, \ref{sec:foot}
\\[1pt]
$\fdiff$ & subtraction operation on footprints, \ref{sec:foot}
\\[1pt]
$\preceq$ & ``smaller-than'' relation on footprints, \ref{sec:foot}
\\[1pt]
$\psi$ & interface action, \ref{sec:observ}
\\[1pt]
$t$ & thread identifier, \ref{sec:observ}
\\[1pt]
$m$ & library method, \ref{sec:observ}
\\[1pt]
$H, S$ & history, \ref{sec:observ}
\\[1pt]
$\varepsilon$ & empty history or trace, \ref{sec:observ}
\\[1pt]
$\tau(i)$ & the $i$-th element of $\tau$, \ref{sec:observ}
\\[1pt]
$\tau\pref_k$ & the prefix of $\tau$ of length $k$, \ref{sec:observ}
\\[1pt]
$|\tau|$ & is the length of $\tau$, \ref{sec:observ}
\\[1pt]
$\db{H}^\sharp$ & footprint tracking function, \ref{sec:observ}
\\[1pt]
$\cH$ & interface set, \ref{sec:observ}
\\[1pt]
$\sqsubseteq$ & linearizability: on histories, \ref{sec:observ}; interface sets,
\ref{sec:observ}; libraries, \ref{sec:refinement}
\\[1pt]
$\rho$ & bijection on history indices, \ref{sec:observ}
\\[1pt]
$c$ & primitive command, \ref{sec:prog}
\\[1pt]
$C$ & command, \ref{sec:prog}
\\[1pt]
$\cL$ & library, \ref{sec:prog}
\\[1pt]
$\cS$ & complete program, \ref{sec:prog}
\\[1pt]
$\Cc$ & open program with a client, \ref{sec:prog}
\\[1pt]
$\cO$ & open or complete program, \ref{sec:prog}
\\[1pt]
$\Gamma$ & method specification, \ref{sec:mspecs}
\\[1pt]
$\top$ & error state, \ref{sec:prim} 
\\[1pt]
$f_c^t$ & transformer for a primitive command $c$ and thread identifier $t$,
\ref{sec:prim} 
\\[1pt]
$\varphi$ & action, \ref{sec:traces}
\\[1pt]
$\tau$ & trace, \ref{sec:traces}
\\[1pt]
$\kappa$ & client trace, \ref{sec:traces}
\\[1pt]
$\lambda, \zeta, \alpha, \beta$ & library trace, \ref{sec:traces}
\\[1pt]
$\lib(\tau)$ & projection of $\tau$ to library actions, calls and returns, \ref{sec:traces}
\\[1pt]
$\client(\tau)$ & projection of $\tau$ to client actions, calls and returns, \ref{sec:traces}
\\[1pt]
$\history(\tau)$ & projection of $\tau$ to calls and returns, \ref{sec:traces}
\\[1pt]
$\eta$ & mapping from methods to trace sets, \ref{sec:tsets}
\\[1pt]
$\dbp{\Gamma \vdash \cO : \Gamma'}$ & trace set, \ref{sec:tsets}
\\[1pt]
$\db{\Gamma \vdash \cO : \Gamma'}$ & denotation of a program $\cO$, \ref{sec:eval}
\\[1pt]
$\db{\Gamma \vdash \tau : \Gamma'}$ & evaluation of a trace $\tau$, \ref{sec:eval}
\\[1pt]
$\db{\Gamma \vdash \varphi : \Gamma'}$ & evalutation of an action $\varphi$,
\ref{sec:eval}  
\\[1pt]
$I$ & set of initial states, \ref{sec:decomp}
\\[1pt]
$\db{\cO, I}$ & set of traces of $\cO$ run from states in $I$, \ref{sec:decomp}
\\[1pt]
$\erase(\tau)$ & erasure of state annotations from actions in $\tau$, \ref{sec:decomp}
\\[1pt]
$\interf(\cL,\cI)$ & interface set of $\cL$ run from initial states in $\cI$,
\ref{sec:refinement}
\\[1pt]
$\sim$ & equivalence of client traces, \ref{sec:refinement}
\\[1pt]
$\llfloor \lambda \rrfloor$ & selects state annotations corresponding to unextended
specifications, \ref{sec:frame} 
\\[1pt]
$\llceil \lambda \rrceil$ & selects state annotations recording extra state, \ref{sec:frame}
\\[1pt]
$\dba{H}$ & evaluation of a history $H$ recording extra state, \ref{sec:frame}
\end{longtable}
}

\end{document}